\numberwithin{equation}{section}
\begin{document}

\newtheorem{theorem}{Theorem}[section]
\newtheorem{corollary}[theorem]{Corollary}
\newtheorem{lemma}[theorem]{Lemma}
\newtheorem{proposition}[theorem]{Proposition}

\newcommand{\be}{\begin{equation}}
\newcommand{\ee}{\end{equation}}
\newcommand{\bea}{\begin{eqnarray}}
\newcommand{\eea}{\end{eqnarray}}
\newcommand{\sh}{{\rm sh}}
\newcommand{\ch}{{\rm ch}}
\newcommand{\einde}{$\ \ \ \Box$ \vspace{5mm}}
\newcommand{\De}{\Delta}
\newcommand{\de}{\delta}
\newcommand{\Z}{{\mathbb Z}}
\newcommand{\N}{{\mathbb N}}
\newcommand{\C}{{\mathbb C}}
\newcommand{\Cs}{{\mathbb C}^{*}}
\newcommand{\R}{{\mathbb R}}
\newcommand{\Q}{{\mathbb Q}}
\newcommand{\T}{{\mathbb T}}
\newcommand{\re}{{\rm Re}\, }
\newcommand{\im}{{\rm Im}\, }
\newcommand{\cW}{{\cal W}}
\newcommand{\cJ}{{\cal J}}
\newcommand{\cE}{{\cal E}}
\newcommand{\cA}{{\cal A}}
\newcommand{\cR}{{\cal R}_{\rm ren}}
\newcommand{\cP}{{\cal P}}
\newcommand{\cM}{{\cal M}}
\newcommand{\cN}{{\cal N}}
\newcommand{\cI}{{\cal I}}
\newcommand{\cMs}{{\cal M}^{*}}
\newcommand{\cB}{{\cal B}}
\newcommand{\cD}{{\cal D}}
\newcommand{\cC}{{\cal C}}
\newcommand{\cL}{{\cal L}}
\newcommand{\cF}{{\cal F}}
\newcommand{\cG}{{\cal G}}
\newcommand{\cH}{{\cal H}}
\newcommand{\cO}{{\cal O}}
\newcommand{\cS}{{\cal S}}
\newcommand{\cT}{{\cal T}}
\newcommand{\cU}{{\cal U}}
\newcommand{\cQ}{{\cal Q}}
\newcommand{\cV}{{\cal V}}
\newcommand{\cK}{{\cal K}}
\newcommand{\cZ}{{\cal Z}}
\newcommand{\intR}{\int_{-\infty}^{\infty}}
\newcommand{\intp}{\int_{0}^{\infty}}
\newcommand{\limp}{\re r \to \infty}
\newcommand{\Ogamp}{O(\exp(-\gamma  r))}
\newcommand{\limn}{\re r \to -\infty}
\newcommand{\Ogamn}{O(\exp(\gamma  r))}
\newcommand{\limpn}{\lim_{|\re x| \to \infty}}
\newcommand{\diag}{{\rm diag}}
\newcommand{\Ln}{{\rm Ln}}
\newcommand{\Arg}{{\rm Arg}}
\newcommand{\LHP}{{\rm LHP}}
\newcommand{\RHP}{{\rm RHP}}
\newcommand{\UHP}{{\rm UHP}}
\newcommand{\Res}{{\rm Res}}
\newcommand{\ep}{\epsilon}
\newcommand{\ga}{\gamma}
\newcommand{\ka}{\kappa}
\newcommand{\sing}{{\rm sing}}
\newcommand{\rE}{{\mathrm E}}
\newcommand{\rF}{{\mathrm F}}
\newcommand{\sumt}{\sum_{\tau=+,-}}
\newcommand{\sumtt}{\sum_{\tau,\tau'=+,-}}

\title{Hilbert space theory for relativistic dynamics with reflection.  Special cases}
\author{  Steven Haworth and Simon Ruijsenaars \\ School of Mathematics, \\ University of Leeds, Leeds LS2 9JT, UK}

\date{}

\maketitle

\begin{abstract}
We present and study a novel class of one-dimensional Hilbert space eigenfunction transforms that diagonalize analytic difference operators encoding the (reduced) two-particle relativistic hyperbolic Calogero-Moser dynamics. The scattering is described by reflection and transmission amplitudes~$t$ and~$r$ with function-theoretic features that are quite different from nonrelativistic  amplitudes.   The axiomatic Hilbert space analysis in the appendices is inspired by and applied to the attractive two-particle relativistic Calogero-Moser dynamics for a sequence of special couplings. Together with the scattering function $u$ of the repulsive case, this leads to a triple of amplitudes $u, t, r$ satisfying the Yang-Baxter equations.  
\end{abstract}

\tableofcontents

\newpage

\renewcommand{\theequation}{\thesection.\arabic{equation}}

\setcounter{equation}{0}

\section{Introduction}
This paper can be viewed from different perspectives, and indeed it has a two-fold purpose. On the one hand, it yields (in its four appendices) a study of unitary eigenfunction transforms of a novel type, generalizing the transmission-reflection and bound state picture associated with one-dimensional nonrelativistic Schr\"odinger operators
\be\label{Schr}
- \frac{d^2}{dr^2}+V(r).
\ee
Here we are thinking of real-valued potentials $V(r)$, $r\in \R$,   that are smooth and decay rapidly for $r\to\pm\infty$. As is well known, for such potentials the operator~\eqref{Schr} can have finitely many  bound states with negative eigenvalues, whereas the scattering can be encoded in the so-called Jost solutions, which have eigenvalue $k^2>0$. (We view $r$ and~$k$ as dimensionless variables.) Specifically, the Jost solutions $J_{\pm}(r,k)$ satisfy the time-independent Schr\"odinger equation
\be\label{Jost}
(-\partial_r^2+V(r))J_{\pm}(r,k)=k^2J_{\pm}(r,k),\ \ \ k>0, 
\ee
and are characterized by the asymptotic behavior
\be
J_+(r,k)\sim 
\left\{
\begin{array}{ll}
a(k)e^{-irk}+b(k)e^{irk},  &  r\to\infty, \\
e^{-irk},  &  r\to -\infty, 
\end{array}
\right.
\ee
\be
J_-(r,k)\sim  \left\{
\begin{array}{ll}
e^{irk},  &  r\to\infty, \\
a(k)e^{irk}-b(-k)e^{-irk},  &  r\to -\infty. 
\end{array}
\right.
\ee
It is then customary to define the transmission and reflection coefficients by
\be
t(k)=1/a(k),\ \ \ r(k)=b(k)/a(k).
\ee

The generalized Jost functions we study in the appendices share the asymptotic behavior just recalled, but their transmission and reflection coefficients have a (purely imaginary) period in the spectral variable~$k$. For certain special choices they converge to ordinary Jost functions as this period goes to $i\infty$.   

On the other hand, the paper is concerned with special cases of the eigenfunctions  arising for two particles of opposite charge in the relativistic hyperbolic Calogero-Moser system, and with the associated Hilbert space theory. We intend to return to the Hilbert space theory for the general case in a companion paper. In the present one, however, we already lay the groundwork for the general case by introducing and discussing the pertinent eigenfunctions, both in this section and in more detail in Section~2. In Section~2 we are also better placed to put these Hilbert space aspects  in a wider context and summarize previous literature in the area.

We proceed by recalling the two types of interaction for the hyperbolic Calogero-Moser two-particle system in its center-of-mass frame. 
For the nonrelativistic Calogero-Moser case they are given by the potentials
\be\label{Vra}
V_s(r)=\lambda(\lambda-1)/\sinh^2(r),\ \ V_o(r)=-\lambda(\lambda-1)/\cosh^2(r).\ \ \ 
\ee
(In the physics literature, these are often referred to as P\"oschl-Teller potentials, see for example~\cite{F74}.)
Restricting attention to $\lambda> 1$, this corresponds to a  description of a particle pair with the same/opposite charge having a repulsive/attractive interaction, the charge interpretation being borrowed from the electromagnetic force. 

The relativistic generalizations of the repulsive and attractive nonrelativistic Hamiltonians associated with~\eqref{Vra} can be taken to be of the form
\be\label{Hs}
H_s(\rho,\tau;r)=\left(\frac{\sinh(r+i\tau)}{\sinh(r)}\right)^{1/2}\exp(i\rho \partial_r)\left(\frac{\sinh(r-i\tau)}{\sinh(r)}\right)^{1/2}+(r\to -r),  
\ee
\be\label{Ho}
H_o(\rho,\tau;r)=\left(\frac{\cosh(r+i\tau)}{\cosh(r)}\right)^{1/2}\exp(i\rho \partial_r)\left(\frac{\cosh(r-i\tau)}{\cosh(r)}\right)^{1/2}+(r\to -r). 
\ee
Here we have $\rho,\tau>0$, and the nonrelativistic differential operators given by~\eqref{Schr} and~\eqref{Vra} arise from these analytic difference operators in the limit $\rho \to 0$ with $\tau=\rho\lambda$. (For a comprehensive survey of the relativistic Calogero-Moser $N$-particle systems together with their nonrelativistic and Toda limits we refer to the lecture notes~\cite{R94}.)

The repulsive nonrelativistic Hamiltonian can be diagonalized by using the conical (or Mehler) function specialization of the hypergeometric function. The attractive potential arises from the repulsive one by the two analytic continuations $r\to r\pm i\pi/2$. The attractive nonrelativistic Hamiltonian can then be diagonalized by employing a suitable linear combination of the two correspondingly continued conical functions, taking also bound states into account, which do not arise for the repulsive case. (More details on these diagonalizations from disparate viewpoints can be found for instance in~\cite{F74}, Problems 38 and 39, \cite{T62}, Sections 4.18 and 4.19, and~\cite{K84}.)

The repulsive Hamiltonian~$H_{s}$ can be diagonalized by a relativistic generalization of the conical function. A detailed study of this relativistic conical function and its various limits can be found in~\cite{R11}. We use this function as a starting point to arrive at the eigenfunctions of the attractive counterpart~$H_{o}$ that are the key to the Hilbert space reinterpretation of~$H_o$. As we shall show, they are once more obtained by suitable linear combinations of the $r\to r\pm i\pi/2$ continuations of the relativistic conical function.  

For the special coupling constants $\tau=(N+1)\pi$, $N\in\N$, the associated attractive eigenfunction transforms have certain periodicity features that motivated the general framework that is set out in the appendices. The latter yield a largely self-contained account, which can be read independently of the main text. 

We stress, however, that the assumptions we make in Appendix~A and~B would seem far-fetched without their concrete realizations coming from the main text, as they are not satisfied for the eigenfunctions of the nonrelativistic operators~\eqref{Schr} with a nontrivial $V(r)$. Since one might question whether they can be realized at all, we have included the simplest explicit examples in the appendices, so that one need not  delve into the main text to see that the assumptions are not vacuous.

The Hilbert space theory associated with the opposite-charge Hamiltonian~$H_{o}$ for the case of vanishing reflection has been worked out before in Section~4 of~\cite{R00}. (This case corresponds to couplings of the form~$\tau=(N+1)\rho$, $N\in\N$.) The reflectionless case was also handled via a slightly more general framework, cf.~Section~2 in~\cite{R00}. The latter setting is subsumed by our account in the appendices of the present paper, in the sense that the assumptions we make in Appendix~A and~B  are satisfied by the eigenfunctions studied in~\cite{R00}. 

More specifically, in the appendices we start from wave functions
\be\label{Psipm}
\Psi(r,k)=\Psi^+(r,k)e^{irk} +\Psi^-(r,k)e^{-irk},
\ee
which are counterparts of the above function~$J_-(r,k)/a(k)$, and impose various requirements on the plane wave coefficients $\Psi^{\pm}(r,k)$ of a function-theoretic and asymptotic nature. (The case $\Psi^-(r,k)=0$   yields reflectionless transforms.) The assumptions ensure that we can push through a comprehensive Hilbert space analysis for the associated transforms, without mentioning any Hamiltonians until the last Appendix~D, which is devoted to time-dependent scattering theory.

With the functional analysis relegated to the appendices, the main text demonstrates that the assumptions are satisfied by the pertinent  eigenfunctions of the attractive Hamiltonian~$H_{o}$ when the coupling constant~$\tau$ is  of the form~$(N+1)\pi$, $N\in\N$. (The reflectionless $\tau$-choices~$(N+1)\rho$, $N\in\N$, are briefly reviewed as well.) Therefore, when viewed as an analytic difference operator---as opposed to a Hilbert space operator---the Hamiltonian reduces to the free one
\be
H_f=\exp(i\rho \partial_r)+\exp(-i\rho \partial_r).
\ee
As it stands, the latter can of course be diagonalized by the Fourier transform with kernel $\exp(irk)$, yielding multiplication by $2\cosh(\rho k)$. The crux is, however, that the $H_o$-eigenfunctions for general $\tau$ and $\rho$, specialized to $\tau=(N+1)\pi$, do not reduce to a plane wave at all. Instead, they are of the form~\eqref{Psipm}, with nontrivial coefficients $\Psi^{\pm}(r,k)$ that are $i\rho$-periodic in~$r$, and they give rise to reflection and transmission coefficients $r(k)$ and $t(k)$ that, together with the scattering function $u(k)$ for the repulsive case, satisfy the Yang-Baxter equations.

The latter equations  actually hold for the arbitrary coupling scattering amplitudes, cf.~\eqref{t}--\eqref{YB2}. This state of affairs lends further credence to the long-standing conjecture that the many-particle generalizations of the Hamiltonians~\eqref{Hs} and~\eqref{Ho} lead to a factorized $S$-matrix. For the classical versions of the systems this solitonic scattering has been proved in~\cite{R93}, alongside a detailed account of the connection to several solitonic field theories.  

For the special case $\tau=\pi/2$, a prime example of the latter is the sine-Gordon classical field theory.  For this choice of $\tau$, the  scattering amplitudes $t(k)$, $r(k)$ and $u(k)$ coincide with those for quantum sine-Gordon opposite-charge and equal-charge fermions~\cite{ZZ79,S92}. This state of affairs is in agreement with earlier work from a related viewpoint~\cite{R01}, to which we intend to return in the companion paper dealing with general $\tau$-values.
 
Among the special cases considered in this paper, of particular interest is the case $N=0$, yielding $\tau=\pi$. Indeed, for this choice the repulsive eigenfunction (essentially given by the relativistic conical function) reduces to the plane wave sum~$\sin(rk)$, so that $u(k)=1$. By contrast, the scattering coefficients $t(k)$ and $r(k)$ are nontrivial. (They amount to specializations of~$T_+(k)$~\eqref{tnu} and~$R_+(k)$~\eqref{rnu}.) Physically speaking, this parameter choice yields a system in which particles of the same charge do not interact, whereas oppositely charged particles have an attractive interaction. In particular, an oppositely-charged pair can form a bound state, provided $\rho$ is suitably restricted. 

In order to tie in the main text with the appendices, we need to handle the residue sums that are spawned by contour shifts arising for the pertinent transforms and their adjoints. For isometry of the transforms, it is necessary that these sums vanish, whereas eventual bound states show up via nonzero residue sums associated with the adjoint transforms. At the end of the main text, we also present an in-depth study of   isometry breakdown for sufficiently small $\rho$, a phenomenon without a nonrelativistic counterpart. 

Having presented a bird's eye view of the aims and contents of the paper, we proceed with a more detailed sketch of its results and organization. 

As already mentioned, our starting point is the relativistic conical function. More specifically, we work with a renormalized version~$\cR(a_+,a_-,b;x,y)$. (We recall one of the many integral representations for this function below, cf.~\eqref{Rrdef}.) From a quantum-mechanical perspective, all of its five variables have dimension [position]. More specifically, the parameters~$a_+$ and $a_-$   can be viewed as the interaction length and Compton wave length $\hbar/mc$, while the parameter~$b$ and~variable~$y$ play the role of coupling constant and spectral variable. The unorthodox choice of the latter (inasmuch as it is customary to choose a momentum variable as spectral variable) is inspired by the symmetry of this function under the interchange of $x$ and~$y$ (self-duality). It is written entirely in terms of the hyperbolic gamma function~$G(a_+,a_-;z)$ from~\cite{R97}, which is invariant under swapping $a_+$ and $a_-$, cf.~\eqref{modinv} below. This `modular invariance' is inherited by~$\cR(a_+,a_-,b;x,y)$. The renormalized version has the advantage of being meromorphic in~$b$, $x$ and~$y$, with no poles in~$b$ that do not depend on~$x$ and~$y$. In more detail, its poles can only be located at
\be\label{cRpoles}
\pm z=2ia-ib+ika_++ila_-,\ \ \  z=x,y,\ \ \ k,l\in \N,
\ee
where
\be
a\equiv (a_++a_-)/2.
\ee

Thanks to these symmetry properties, this function satisfies not only the A$\De$E (analytic difference equation)
\be\label{Rade}
A(a_+,a_-,b;x)\cR(a_+,a_-,b;x,y)=2\cosh(\pi y/a_+)\cR(a_+,a_-,b;x,y),
\ee
where the A$\De$O (analytic difference operator) is defined by
\be\label{defA}
A(a_+,a_-,b;z)\equiv V(a_+,b;z)\exp(-ia_-\partial_z) +(z\to -z),
\ee
with
\be\label{defV}
V(a_+,b;z)\equiv \frac{\sinh(\pi (z-ib)/a_+)}{\sinh(\pi z/a_+)},
\ee
but also three more A$\De$Es obtained by swapping~$a_+$ and~$a_-$, and/or~$x$ and~$y$ in~\eqref{Rade}. 

Each of the resulting four A$\De$Os has two more avatars, obtained by similarity transformations involving the generalized Harish-Chandra function
\be\label{defc}
c(b;z)\equiv G(z+ia-ib)/G(z+ia),
\ee
and weight function
\be\label{defw}
w(b;z)\equiv 1/c(b;z)c(b;-z).
\ee
Here and often below, we suppress the dependence on~$a_+$ and~$a_-$ when no ambiguity can arise. (For example, the $c$- and~$w$-functions just introduced are invariant under swapping $a_+$ and~$a_-$.) 
From now on, we also use the abbreviations
\be
c_{\de}(z)\equiv \cosh(\pi z/a_{\de}),\ \ s_{\de}(z)\equiv \sinh(\pi z/a_{\de}),\ \ e_{\de}(z)\equiv \exp(\pi z/a_{\de}),\ \ \de=+,-.
\ee

For our present purposes, we only need two among these eight additional operators. The most important one is the Hamiltonian
\be\label{HA}
H(a_+,a_-,b;x)  \equiv  w(b;x)^{1/2}A(a_+,a_-,b;x) w(b;x)^{-1/2},\ \ \ (b,x)\in(0,2a)\times(0,\infty),
\ee
but we also have occasion to use
\be\label{cA}
\cA(a_+,a_-,b;y)  \equiv  c(b;y)^{-1}A(a_+,a_-,b;y) c(b;y).
\ee
These operators can also be written as 
\be\label{Hb}
  H(b;x)= V(a_+,b;-x)^{1/2}\exp(ia_-\partial_x) V(a_+,b;x)^{1/2}+(x\to -x),
\ee
and
\be\label{cAb}
\cA(b;y)=\exp(-ia_-\partial_y) + V(a_+,b;-y)\exp(ia_-\partial_y) V(a_+,b;y).
\ee
From these formulas one can read off that they are formally self-adjoint, cf.~\eqref{defV}.

The $w$-function~\eqref{defw} is positive for $(b,x)\in(0,2a)\times(0,\infty)$, and throughout the paper we take positive square roots of positive functions. This positivity assertion and the alternative formulas for~$H(b;x)$ and~$\cA(b;y)$ readily follow from key features of the hyperbolic gamma function. For completeness, we briefly digress to summarize the salient properties. 
 
The hyperbolic gamma function can be defined as the meromorphic solution to one of the first order A$\De$Es
\be\label{Gades}
\frac{G(z+ia_{\de}/2)}{G(z-ia_{\de}/2)}=2c_{-\de}(z),\ \ \de=+,-,\ \ a_+,a_->0,
\ee
which is uniquely determined by  
the normalization~$G(0)=1$ and a minimality property;  the second A$\De$E is then satisfied as well.  (This property amounts to requiring absence of zeros and poles in a certain $|\im z|$-strip, and `optimal' asymptotics for $|\re z|\to\infty$, cf.~\cite{R97}.) In the strip $|\im z|<a$ it has the integral representation
\be\label{Gint}
G(a_+,a_-;z)=\exp\left( i
\int_0^\infty\frac{dy}{y}\left(\frac{\sin 2yz}{2\sinh(a_{+}y)\sinh(a_{-}y)} - \frac{z}{a_{+}a_{-} y}\right)\right),
\ee
from which one reads off absence of zeros and poles in this strip and the properties
\be\label{modinv}
 G(a_-,a_+;z) = G(a_+,a_-;z),\ \ \  ({\rm modular\ invariance}),
\ee
\be\label{refl}	
G(-z) = 1/G(z),\ \ \ ({\rm reflection\ equation}),
\ee
\be\label{Gcon}
\overline{G(a_+,a_-;z)}=G(a_+,a_-;-\overline{z}).
\ee 
The hyperbolic gamma function has its poles at 
\be\label{Gpo}
  -ia -ika_+-il a_-,\ \ \ \ k,l\in\N,\ \ \ (G{\rm -poles}),
\ee
and its zeros at
\be\label{Gze}
ia +ika_++il a_-, \ \ \ \ k,l\in\N,\ \ \ \ (G{\rm -zeros}).
\ee
The pole at $-ia$ is simple, and so is the zero at $ia$. 
Finally, we list the asymptotic behavior for $\re (z)\to\pm \infty$:
\be\label{Gas}
	G(a_+,a_-;z) = \exp \big(\mp i\left(\chi+ \pi z^2/2a_+a_-\right)\big)\big(1 + O(\exp(-r |\re(z)|))\big).
\ee
Here, we have
\be\label{chi}
 \chi \equiv \frac{\pi}{24}\left(\frac{a_+}{a_-} + \frac{a_-}{a_+}\right),
\ee
and the decay rate can be any positive number satisfying 
\be
r <2\pi \min(1/a_+,1/a_-).
\ee

The A$\De$O~\eqref{HA} can be viewed as the defining Hamiltonian for the same-charge center-of-mass two-particle system. The opposite-charge defining Hamiltonian is given by
\be\label{tHtA}
\tilde{H}(b;x)= \tilde{w}(b;x)^{1/2}\tilde{A}(b;x) \tilde{w}(b;x)^{-1/2},\ \ \ (b,x)\in(-a_+/2,a_+/2+a_-)\times \R,
\ee
where
\be\label{tA}
\tilde{A}(b;x)\equiv \tilde{V}(b;x)\exp(-ia_-\partial_x) +(x\to -x),
\ee
\be\label{tV}
\tilde{V}(b;x)\equiv \frac{\cosh(\pi (x-ib)/a_+)}{\cosh(\pi x/a_+)},
\ee
and the attractive weight function is given by
\be\label{tilw}
\tilde{w}(b;x)\equiv \prod_{\sigma=+,-}\frac{G(\sigma x+ia_-/2)}{G(\sigma x+ia_-/2-ib)}.
\ee
Hence the attractive Hamiltonian can also be written as
\be\label{tHb} 
\tilde{H}(b;x)  =  \tilde{V}(b;-x)^{1/2}\exp(ia_-\partial_x) \tilde{V}(b;x)^{1/2}+(x\to -x).
\ee
Clearly, the attractive coefficient~$\tilde{V}(b;x)$ arises from the repulsive coefficient~$V(a_+,b;x)$
 by either one of the analytic continuations $x\to x\pm i a_+/2$. Hence the same is true for $\tilde{A}$ and~$A$; for~$\tilde{H}$ and~$H$ this is also the case, as follows from the alternative representations~\eqref{Hb} and~\eqref{tHb}. 
 
Even so, the weight function $\tilde{w}(b;x)$ does not arise in this way from~$w(b;x)$. Indeed, the two functions $w(b;x\pm i a_+/2)$   are not positive for $(b,x)\in(-a_+/2,a_+/2+a_-)\times \R$, whereas $\tilde{w}(b;x)$ does have this positivity feature. (As before, this readily follows from~\eqref{Gades}--\eqref{Gcon}.) On the other hand, we clearly have
 \be\label{deftw}
\tilde{w}(b;x)=1/\tilde{c}(b;x)\tilde{c}(b;-x),
\ee
where
\be\label{deftc}
\tilde{c}(b;x)\equiv G(x+ia_-/2-ib)/G(x+ia_-/2)=c(b;x-ia_+/2).
\ee
In words, the attractive $c$-function does arise from the repulsive $c$-function~\eqref{defc} by one of the analytic continuations at issue. We also point out that the conjugation relation~\eqref{Gcon} entails that the $\tilde{c}$-function has the same conjugation property
\be
\overline{\tilde{c}(b;x)}=\tilde{c}(b;-x),\ \ \ (b,x)\in \R\times\R,
\ee
as the $c$-function; by contrast to the latter, however, it is regular at the origin (for generic~$b$), whereas~$c(b;x)$ has a simple pole.
 
 Because the scale parameter $a_+$ and variable~$x$ are singled out, modular invariance and self-duality are not preserved for the attractive regime. Therefore, it is no longer crucial to insist on the symmetric parametrization of the repulsive regime. Even so, for reasons of notational economy we continue to employ it for most of the main text.
 
In the appendices, however, it is more convenient to work with dimensionless variables~$r$ and~$k$ such that the plane wave~$\exp(i\pi xy/a_+a_-)$ becomes~$\exp(irk)$. Specifically,
we take
\be\label{xryk}
x/a_-\to r/\rho,\ \ \ y/a_-\to k/\kappa,
\ee
with
\be\label{rhoka}
\rho\kappa =\pi a_-/a_+.
\ee
The coupling parameter~$b$ can be traded for a dimensionless parameter~$\tau$ given by
\be\label{tau}
\tau \equiv \pi b/a_+.
\ee
The Hamitonians~$H_s$~\eqref{Hs} and~$H_o$~\eqref{Ho}  then arise from~$H$ and~$\tilde{H}$ by using these dimensionless quantities together with the choice $\kappa=1$.

In Section~2 we begin by summarizing pertinent results on the repulsive eigenfunctions. Their abundance of symmetries leaves no doubt concerning their uniqueness, and indeed there already exists a fully satisfactory Hilbert space theory for $H(b;x)$~\cite{R11,R03III}. 

For the attractive case, however, there is an enormous ambiguity, and it is no longer obvious how to proceed. Once we have put this problem in perspective, we are prepared to survey the previous literature that has a bearing on the key problem of promoting \emph{formally} self-adjoint A$\De$Os to self-adjoint Hilbert space operators, using an explicit unitary eigenfunction transform that yields a concrete realization of the spectral theorem~\cite{RS72}.

After this terse literature overview (which can be found above~\eqref{Rpmades}), we detail how our attractive eigenfunctions result from the repulsive ones by taking a suitable linear combination of the two analytically continued functions~$\cR(a_+,a_-,b;x\pm ia_+/2,y)$. We exemplify this for the simplest case $b=a_+$. This yields the elementary function~\eqref{psizero}, all of whose function-theoretic and asymptotic features can be directly read off. 

In view of the ambiguity in obtaining the opposite-charge eigenfunctions, our choice may seem unmotivated at face value. It is, however, singled out by its very special and desirable features. We have not attempted to show that this renders the choice unique, but we have little doubt that this is true. 

To lend credence to this conviction, we obtain already in this paper the dominant $|\re x|\to\infty$ asymptotics of the joint eigenfunction~$\psi(b;x,y)$~\eqref{defpsi}, cf.~Proposition~2.1. Specifically, together with the scattering function~$u(b;y)$~\eqref{u} of the repulsive case, the resulting attractive transmission and reflection coefficients~$t(b;y)$~\eqref{t} and~$r(b;y)$~\eqref{r} satisfy the Yang-Baxter equations~\eqref{sjk}--\eqref{YB2}. General-$b$ Hilbert space aspects, however, are beyond the scope of this paper.

In the remaining sections we focus on the special $b$-values $(N+1)a_+$. Section~3 contains a detailed account of function-theoretic and asymptotic properties of the corresponding eigenfunctions~$\psi_N(x,y)$~\eqref{psiN}. Their elementary character is inherited from that of the relativistic conical functions~$R_N(x,y)$~\eqref{RN}. The latter were studied in~\cite{R99}. We need various results from this source, which we summarize in the equations~\eqref{RNexp}--\eqref{SNasm}. In the remainder of the section  we obtain the properties of~$\psi_N(x,y)$ that served as a template for the axiomatic account in the appendices. 

In Section~4 we then  apply the Hilbert space results from the appendices to the eigenfunction transforms associated with~$\psi_N(x,y)$, and to the adjoints of the transforms. First, we study the residue sums arising for the transforms, showing that they vanish for the parameter interval~$a_-\in((N+1/2)a_+,\infty)$ in Theorem~4.1. With isometry proved, it follows from Appendix~D that the transforms can be viewed as the incoming wave operators from time-dependent scattering theory. Then we study the adjoint transforms for parameters in this isometry interval. Here a distinction arises: For the subinterval~$a_-\in[(N+1)a_+,\infty)$ we prove in Theorem~4.2 that the associated residue sums vanish, so that the transforms are unitary. By contrast, for~$a_-\in((N+1/2)a_+,(N+1)a_+)$, Theorem~4.4 reveals the presence of a bound state~$\psi_N(x)$, given by~\eqref{psiNx}. As we detail at the end of Section~4, it arises as (a multiple of) the residue of $\psi_N(x,y)$ at its only $y$-pole $i(N+1)a_+-ia_-$ in the strip $\im y\in(0,a_-)$.

After these `constructive' results in Section~4, the final Section~5 furnishes some `destructive' results pertaining to the remaining $a_-$-interval $(0,(N+1/2)a_+]$. In particular, we prove that for $a_-\in(Na_+,(N+1/2)a_+)$ isometry of the transform breaks down in a way that we make quite explicit. For $N>0$ it is not an easy matter to handle any positive $a_-$, and we shall not do so. Indeed, as $a_-$ decreases, there is a cascade of ever larger and more baroque deviations from isometry, much like for the reflectionless cases handled in~\cite{R00}. For $N=0$ we do work out the details, which are summarized in Proposition~5.1.

As already mentioned, the appendices contain an axiomatic account of the functional-analytic aspects, with the assumptions inspired by the features of the eigenfunctions $\psi_N(x,y)$~\eqref{psiN}. Appendix~A deals with a general transform~$\cF$ from `momentum space' to `position space'. The former space is described in terms of two-component square-integrable functions of a (dimensionless) variable $k\in(0,\infty)$, so as to lead to a $2\times 2$ $S$-matrix $S(k)$~\eqref{Sm}. The assumptions are sufficiently restrictive to lead to a picture of the transform as being isometric away from a subspace encoded in residue terms, cf.~Theorem~A.1. In the axiomatic setting, however, it is far from clear whether this subspace is necessarily finite-dimensional, as is the case for the transforms associated with $\psi_N(x,y)$. 

The adjoint transform~$\cF^*$ is analyzed in Appendix~B, with additional assumptions on the $k$-dependence of the plane wave coefficients enabling us to arrive at the counterpart Theorem~B.1 of Theorem~A.1. Simple explicit examples are included in this appendix via Propositions~B.3--B.5, with Proposition B.4 presenting example transforms that go beyond the main text. 

A pivotal technical ingredient in the proofs of Theorems~A.1 and B.1 is relegated to Appendix~C, namely, the asymptotic analysis of the boundary terms arising from the rectangular integration contours at issue in the proofs. 

The paper is concluded with Appendix~D, in which it is shown that when the transforms are unitary, then they may be viewed as the incoming wave operators for a large class of self-adjoint Hilbert space dynamics. More specifically, the transform~$\cF$ from Appendix~A plays this role for a class containing the (Hilbert space version of the) opposite-charge Hamiltonian~$\tilde{H}((N+1)a_+;x)$ given by~\eqref{HN}, whereas for its adjoint~$\cF^*$ the class contains a Hamiltonian that arises from the dual A$\De$O $\cS((N+1)a_+;y)$ given by~\eqref{cSN}. It should be stressed that in the axiomatic framework of Appendix~D the self-adjoint dynamics at issue are \emph{defined} via the transforms. Likewise, since we rely on the results of the appendices to associate self-adjoint operators to the A$\De$Os \eqref{HN} and~\eqref{cSN}, the feasibility of doing so hinges on the \emph{isometry} of their eigenfunction transforms.

Below Theorem~D.1 we also discuss the issue of parity and time reversal symmetry. Both symmetries are present for the transforms arising from the main text, but we leave the question open whether time reversal symmetry holds in the general setting  of the appendices.

%\newpage

\section{Repulsive and attractive eigenfunctions for general coupling}

As  is detailed in~\cite{R11}, the relativistic conical function has a great many distinct representations. For completenes, we quote one of them, namely (cf.~Eq.~(1.3) and Eq.~(1.6) in~\cite{R11}),
\be\label{Rrdef}
\cR(b;x,y)=\frac{G(ia-ib)}{\sqrt{a_+a_-}}  \int_{\R}dz \frac{G(z+ (x-y)/2-ib/2)G(z- (x-y)/2-ib/2)}{G(z+ (x+y)/2+ib/2)G(z- (x+y)/2+ib/2)},
\ee
where we take  $(b,x,y)\in (0,2a)\times \R^2$ to begin with. (Note that the $G$-asymptotics~\eqref{Gas} entails the convergence of the integral.) This is the only known representation that is both manifestly self-dual and modular invariant. Moreover, evenness in~$x$ and~$y$ follows by using the reflection equation~\eqref{refl}, and real-valuedness results from the conjugation relation~\eqref{Gcon}. 

The function
\be\label{FR}
{\mathrm F}(b;x,y) \equiv w(b;x)^{1/2}w(b;y)^{1/2}\cR(b;x,y),\ \ \ (b,x,y)\in(0,2a)\times (0,\infty)^2,
\ee
is also self-dual and modular invariant, so it is a joint eigenfunction of four independent A$\De$Os, namely~$H(a_+,a_-,b;x), H(a_+,a_-,b;y), H(a_-,a_+,b;x), H(a_-,a_+,b;y)$. In fact, it has yet another symmetry, which is not manifest:
It satisfies
\be\label{bsym}
{\mathrm F}(2a-b;x,y) ={\mathrm F}(b;x,y). 
\ee
This can be understood from the four Hamiltonians having this symmetry property. By contrast, the A$\De$O $A(a_+,a_-,b;z)$~\eqref{defA} and the three factors of~${\mathrm F}(b;x,y)$ are not invariant under $b\to 2a-b$.

It follows from previous results~\cite{R11,R03III} that the transform
\be\label{cFs}
(\cF_s(b) f)(x)\equiv \frac{1}{\sqrt{2a_+a_-}} \int_0^{\infty}dy\, {\mathrm F}(b;x,y)f(y),\ \ \ b\in(0,2a),
\ee
yields a unitary transformation from~$L^2((0,\infty),dy)$ onto~$L^2((0,\infty),dx)$. It satisfies
\be\label{Fssym} 
\cF_s(b)^*=\cF_s(b),\ \ \cF_s(2a-b)=\cF_s(b),
\ee
with the self-adjointness following from real-valuedness and self-duality of its kernel, and the $b$-symmetry from~\eqref{bsym}. For $b$ equal to~$a_+$ or~$a_-$ the transform amounts to the sine transform, and its $b=0$ and $b=2a$ limits amount to the cosine transform.

For~$b\in(0,2a)$ the even weight function has a second order zero at the origin (cf.~\eqref{Gze}), so its square root continues from~$(0,\infty)$ to an odd function. Therefore, the transform can also be viewed as a unitary involution on the odd subspace of $L^2(\R)$. More generally, the eigenfunctions for $N$ particles with the same charge are antisymmetric under permutations~\cite{HR14}: Equal-charge particles obey fermionic statistics.

By contrast, a fermion and an antifermion can be distinguished by their charge, so we should aim for eigenfunctions of~$\tilde{H}(b;x)$ that have nontrivial even and odd reductions. This opposite-charge Hamiltonian, however, has no `modular partner',  so we can no longer insist on invariance under interchange of $a_+$ and~$a_-$, a requirement that renders the above same-charge eigenfunction~${\mathrm F}(b;x,y)$ essentially unique. Now any $\tilde{H}(b;x)$-eigenfunction remains an eigenfunction upon multiplication by a function $M(x,y)$ that is meromorphic and $ia_-$-periodic in~$x$ and has an arbitrary $y$-dependence. (This is the ambiguity alluded to in the Introduction.) A priori, however, there is no reason to expect that in this infinite-dimensional eigenfunction space there exist any special ones with the requisite orthogonality and completeness properties. 

More generally, to date there exists no general Hilbert space theory for A$\De$Os. There is, however, a growing supply of explicit (mostly one-variable) A$\De$Os that admit a reinterpretation as bona fide self-adjoint Hilbert space operators. This hinges on the existence of special eigenfunctions that give rise to the unitary transform featuring in the spectral theorem for self-adjoint operators. We continue by surveying the literature in this field, which is quite scant by comparison to the vast literature dealing with Hilbert space aspects of (linear) differential operators and discrete difference operators.

By far the simplest cases are analytic difference operators that admit  
  orthogonal polynomials as eigenfunctions. 
The earliest example is given by the Askey-Wilson polynomials~\cite{AW85}. From the perspective of Calogero-Moser type systems, these can be viewed as the special eigenfunctions of the $BC_1$ relativistic trigonometric Calogero-Moser system that give rise to a reinterpretation of the Askey-Wilson 4-parameter A$\De$O as a self-adjoint Hilbert space operator. Likewise, the multi-variable orthogonal polynomials introduced by Macdonald~\cite{M95} yield the sought-for joint eigenfunctions of the commuting A$\De$Os arising for the $A_{N-1}$  (i.~e., $N$-particle) relativistic trigonometric Calogero-Moser system, and the Koornwinder polynomials those for the $BC_N$ case~\cite{K92, D95}.

A large class of one-variable A$\De$Os yielding reflectionless unitary eigenfunction transforms has been introduced in~\cite{R05}. (The reflectionless A$\De$Os studied in~\cite{R00} form a tiny subclass.) Their eigenfunctions are closely connected to soliton solutions of various nonlocal evolution equations. The unitarity proof for the associated Hilbert space transforms makes essential use of previous results on the connection between the classical hyperbolic relativistic Calogero-Moser $N$-particle systems and the KP and 2D Toda hierarchies.

Further unitary eigenfunction transforms for 4-parameter A$\De$Os of Askey-Wilson type have been constructed in~\cite{KS01, R03III}, and for the elliptic 8-parameter A$\De$Os introduced by van Diejen~\cite{D94} in~\cite{R15}. Other relevant papers are~\cite{R03, R04, FT15}. 

In all of these cases, the eigenfunctions have some very special properties that play a pivotal role for pushing through the associated Hilbert space theory.
In spite of this store of examples,  a general theory has not emerged yet. 

Returning to the problem at hand, it is clear from~\eqref{Rade}--\eqref{defV} and~\eqref{tA}--\eqref{tV} that we have
\be\label{Rpmades}
\tilde{A}(b;x)\cR(b;x\pm ia_{+}/2,y)=2c_+(y) \cR(b;x\pm ia_+/2,y).
\ee
By virtue of~\eqref{tHtA}, this implies that we get two independent $\tilde{H}(b;x)$-eigenfunctions,
\be
\tilde{H}(b;x)\tilde{w}(b;x)^{1/2}\cR(b;x\pm ia_{+}/2,y)=2c_+(y)\tilde{w}(b;x)^{1/2} \cR(b;x\pm ia_+/2,y).
\ee
However, these functions remain eigenfunctions when they are multiplied by functions that are $ia_-$-periodic in~$x$ and that have an arbitrary $y$-dependence, so it is at this point that we need further constraints to reduce the ambiguity. 

To this end, consider the auxiliary function
\be\label{cZ}
\cZ(b;x, y)\equiv \cR(b;x,y)/c(b;-y).
\ee
In view of~\eqref{cA} it satisfies
\be
\cA(b;-y)\cZ(b;x,y)=2c_+(x)\cZ(b;x,y),
\ee
which entails
\be
\cA(b;-y)\cZ(b;x\pm ia_+/2,y)=\pm 2is_+(x)\cZ(b;x\pm ia_+/2,y).
\ee
We now introduce the formally self-adjoint A$\De$O
\be\label{cS}
\cS(b;y)\equiv V(a_+,b;y)\exp(-ia_-\partial_y)V(a_+,b;-y)-\exp(ia_-\partial_y),
\ee
and observe that in view of~\eqref{cAb} we have similarities
\be\label{cScA}
\frac{e_-(\pm (ib-y)/2)}{s_-(ib-y)}\cA(b;-y)\frac{s_-(ib-y)}{e_-(\pm (ib-y)/2)}=\pm i\cS(b;y).
\ee
Therefore, the functions
\be\label{psipm}
\psi_{\pm}(b;x,y)\equiv \pm \tilde{w}(b;x)^{1/2}
\frac{e_-(\pm (ib-y)/2)}{2s_-(ib-y)}\cZ(b;x\pm ia_+/2,y),
\ee
satisfy not only the same $\tilde{H}$-A$\De$E,
\be\label{tHade}
\tilde{H}(b;x)\psi_{\pm}(b;x,y) =2c_+(y)\psi_{\pm}(b;x,y),
\ee
but also the same $\cS$-A$\De$E,
\be\label{cSade}
\cS(b;y)\psi_{\pm}(b;x,y) =2s_+(x)\psi_{\pm}(b;x,y).
\ee
As a consequence, the function
\begin{multline}\label{defpsi}
\psi(b;x,y)\equiv \psi_+(b;x,y)+\psi_-(b;x,y)
\\
=\frac{\tilde{w}(b;x)^{1/2}}{2s_-(ib-y)c(b;-y)}\big( e_-((ib-y)/2)\cR(x+ia_+/2,y)
\\
-e_-((y-ib)/2)\cR(x-ia_+/2,y)\big),
\end{multline}
satisfies both A$\De$Es, too. (We recall $c(b;-y)$ is given by~\eqref{defc}, and $\tilde{w}(b;x)$ by~\eqref{tilw}.)

As will transpire, the function~$\psi(b;x,y)$ we just defined is the sought-for attractive eigenfunction. We shall substantiate this for the special cases $b=(N+1)a_+$, $N\in\N$, in the present paper, whereas the Hilbert space theory for the general-$b$ case will be dealt with in a companion paper. The special cases $b=(N+1)a_-$, $N\in\N$, have been treated before in~Section~4 of~\cite{R00}, but in view of the above-mentioned ambiguity  it is not immediate that the functions occurring there are basically the same as the functions $\psi((N+1)a_-;x,y)$. At the end of Section~3 we shall show that this holds true.

To be sure, at this point it is far from clear that even for the simplest cases $b=a_+$ and $b=a_-$ the eigenfunction $\psi(b;x,y)$ has all of the desired properties. We continue by working out the details for these $b$-values, since this involves little effort and the resulting formulas are illuminating.  

As announced above, for~$b=a_+$ we are dealing with an equal-charge (reduced) 2-particle system that is `free', in the sense that no scattering occurs. Specifically, we have (cf.~Eq.~(4.7) in~\cite{R11})
\be\label{N0}
\cR(a_+;x,y)=\frac{\sin(\pi xy/a_+a_-)}{2s_-(x)s_-(y)}.
\ee
Also, \eqref{defc} and~\eqref{Gades} imply
\be\label{cwsp}
c(a_+;z)=1/2is_-(z),\ \ \ w(a_+;z)=4s_-(z)^2,
\ee
so that \eqref{FR} yields
\be
\rF(a_+;x,y)=2\sin(\pi xy /a_+a_-).
\ee
This entails that $\cF_s(a_+)$~\eqref{cFs} amounts to the sine transform, as mentioned before.

On the other hand, from~\eqref{deftc} and~\eqref{deftw} we have
\be
\tilde{c}(a_+;x)=1/2is_-(x-ia_+/2),\ \ \tilde{w}(a_+;x)=4s_-(x+ia_+/2)s_-(x-ia_+/2).
\ee
Also, \eqref{cZ} gives
\be\label{cZsp}
\cZ(a_+;x,y)=-i\sin(\pi xy /a_+a_-)/s_-(x),
\ee
so from~\eqref{defpsi} we obtain
\begin{multline}\label{psizero}
\psi(a_+;x,y)  =  \frac{(s_-(x+ia_+/2)s_-(x-ia_+/2))^{1/2}}{2s_-(ia_+-y)}\sum_{\tau=+,-}\tau 
\frac{e_-(\tau (y-ia_+)/2)}{s_-(x-i\tau a_+/2)}
  \\
  \times \big(e_-(\tau y/2) \exp(i\pi xy/a_+a_-)-e_-(-\tau y/2) \exp(-i\pi xy/a_+a_-).
\end{multline}
With the substitutions~\eqref{xryk} and~\eqref{rhoka} in force, this yields the wave function~$\Psi(r,k)$ featuring in Prop.~B.3, cf.~\eqref{cFp0}.

Turning to the special case $b=a_-$, we can again make use of the formulas~\eqref{N0}, \eqref{cwsp} and~\eqref{cZsp}, but now with $a_+$ and~$a_-$ swapped. Then, \eqref{deftc} and~\eqref{deftw} yield the quite different outcome
\be
\tilde{c}(a_-;x)=1/2c_+(x),\ \ \ \tilde{w}(a_-;x)=4c_+(x)^2.
\ee
Using~\eqref{defpsi}, we now obtain the attractive wave function
\be\label{psifree}
\psi(a_-;x,y)=\exp(i\pi xy/a_+a_-),
\ee
which is manifestly `free', just as its repulsive counterpart.

Before specializing to the $b$-values~$(N+1)a_+$ for~$N>0$, it is expedient to derive already in this paper the general-$b$ asymptotic behavior of~$\psi(b;x,y)$ as $|\re x|\to\infty$. Indeed, this will illuminate how we arrived at the above $y$-dependence. 

To begin with, the $G$-function asymptotics~\eqref{Gas}
implies that the $c$-function~\eqref{defc} satisfies
\be\label{cas}
c(b;z)\sim \phi(b)^{\pm 1}\exp(\mp \pi bz/a_+a_-),\ \ \ \re z\to\pm\infty,
\ee
where we have introduced the constant
\be\label{phib}
\phi(b)\equiv \exp(i\pi b(b-2a)/2a_+a_-).
\ee
Introducing next
\be\label{u}
u(b;z)\equiv -c(b;z)/c(b;-z)=-\prod_{\de=+,-}G(z+i\de (a-b))/G(z+ i\de a),
\ee
we deduce
\be\label{uas}
u(b;z)\sim -\phi(b)^{\pm 2},\ \ \ \ \re  z\to  \pm\infty.
\ee
Also, the reflection equation~\eqref{refl} and the complex conjugation relation~\eqref{Gcon}
entail
\be
u(b;-z)u(b;z)=1,\ \ \ |u(b;z)|=1,\ \ b,z\in\R.
\ee
The function~$u(b;y)$ encodes the scattering associated with the A$\De$O $\cA (a_{+},a_-,b;x)$ and its modular partner, reinterpreted as commuting self-adjoint operators on the Hilbert space $L^2((0,\infty),dx)$. (To be sure, this reinterpretation requires $b\in[0,2a]$, cf.~\cite{R11}.)

The assertion just made hinges on the asymptotic behavior 
of the joint eigenfunction~$\rE(b;x,y)$ of these A$\De$Os, defined by 
\be\label{ER}
\rE(b;x,y)\equiv \frac{\phi(b)}{c(b;x)c(b;y)}\cR(b;x,y).
\ee
 By contrast to $\cR(b;x,y)$ and~$\rF(b;x,y)$, this function is not even, but satisfies
\be\label{Erefl}
\rE(b;-x,y)=-u(b;x)\rE(b;x,y).
\ee
It follows from~Theorem~1.2 in~\cite{R03II} that the $\rE$-function has asymptotics 
\be\label{EasR}
\rE(b;x,y)\sim e^{i\pi xy/a_+a_-}- u(b;-y)e^{-i\pi xy/a_+a_-},\ \ (b,y)\in\R\times(0,\infty), \ \ \re x\to \infty.
\ee
(The specialization of the `$BC_1$'-functions from~\cite{R03II,R03III} to the `$A_1$'-functions of this paper is detailed in Section~2 of~\cite{R11}.) Using \eqref{Erefl} and~\eqref{uas}, this yields
\be\label{EasL}
\rE(b;x,y)\sim \phi(b)^2\big( e^{-i\pi xy/a_+a_-}- u(b;-y) e^{i\pi xy/a_+a_-}\big),\  \ (b,y)\in\R\times(0,\infty), \  \re x\to -\infty.
\ee

To complete our preparation for the last result of this section, we define transmission and reflection coefficients by
\be\label{t}
t(b;y)\equiv \frac{s_-(y)}{s_-(ib-y)}u(b;y),
\ee 
\be\label{r}
r(b;y)\equiv \frac{s_-(ib)}{s_-(ib-y)}u(b;y).
\ee
This entails that when we set
\be\label{sjk}
s_{jk}\equiv s(y_j-y_k),\ \ \ s=u,t,r,\ \ \ 1\le j<k\le 3,
\ee
then the well-known $(u,t,r)$-Yang-Baxter equations given by
\be\label{YB1}
r_{12}t_{13}u_{23}=t_{23}u_{13}r_{12}+r_{23}r_{13}t_{12},
\ee
and
\be\label{YB2}
u_{12}r_{13}u_{23}=t_{23}r_{13}t_{12}+r_{23}u_{13}r_{12},
\ee
are easily verified. We recall that these equations encode the consistent factorization of the multi-particle $S$-matrix into a product of two-particle scattering amplitudes~\cite{YB90}.  

\begin{proposition}
The dominant large-$|\re x|$ asymptotic behavior of $\psi(b;x,y)$ is given by
\be\label{psias}
\psi(b;x,y)\sim
\left\{
\begin{array}{ll}
t(b;y)\exp(i\pi xy/a_+a_-),  &  \re x\to\infty,  \\
\exp(i\pi xy/a_+a_-)-r(b;y)\exp(-i\pi xy/a_+a_-) ,  & \re x\to-\infty ,
\end{array}
\right.
\ee
where $b\in(-a_+/2,a_-+a_+/2)$ and $y>0$. Furthermore, we have an identity
\be\label{psirev}
\psi(b;x,y)=t(b;y)\psi(b;-x,-y)-r(b;y) \psi(b;x,-y).
\ee
\end{proposition}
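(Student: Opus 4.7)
The plan is to derive the asymptotics directly from the explicit representation \eqref{defpsi}, and to establish the identity \eqref{psirev} by exhibiting $\psi(b;\pm x,\pm y)$ as linear combinations of a common pair of building blocks.

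For the asymptotics, the key step is to replace $\cR$ by the joint eigenfunction $\rE$ via \eqref{ER}, giving
\be
\cR(b;x\pm ia_+/2,y) = \frac{c(b;x\pm ia_+/2)c(b;y)}{\phi(b)}\,\rE(b;x\pm ia_+/2,y).
\ee
The $G$-function asymptotics \eqref{Gas} applied to the factors of \eqref{tilw} yield $\tilde w(b;x)^{1/2}\sim \exp(\pm\pi bx/a_+a_-)$ as $\re x\to\pm\infty$, and this cancels the exponential factor produced by $c(b;x\pm ia_+/2)$ via \eqref{cas}. Combining the bounded remainders with the explicit exponentials in the bracket of \eqref{defpsi} and the $\rE$-asymptotics \eqref{EasR}--\eqref{EasL} (using $\exp(i\pi(x\pm ia_+/2)y/a_+a_-) = e_-(\mp y/2)\exp(i\pi xy/a_+a_-)$) reduces the bracket to a linear combination of $\exp(\pm i\pi xy/a_+a_-)$ with fully explicit coefficients. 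In the $\re x\to\infty$ regime the coefficient of $\exp(-i\pi xy/a_+a_-)$ is proportional to $u(b;-y)c(b;y)$ and vanishes after combining the two contributions, while the coefficient of $\exp(i\pi xy/a_+a_-)$ reduces to $-2s_-(y)c(b;y)$; dividing by the prefactor $2s_-(ib-y)c(b;-y)$ and invoking $u(b;y)=-c(b;y)/c(b;-y)$ gives $t(b;y)\exp(i\pi xy/a_+a_-)$. In the $\re x\to-\infty$ regime both plane waves survive: after the same simplifications, together with $u(b;y)u(b;-y)=1$ and the definition of $r(b;y)$, the coefficients become $1$ and $-r(b;y)$, as claimed.

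For the identity \eqref{psirev}, one uses the evenness of $\cR$ in each of its last two arguments to write all four of $\psi(b;\pm x,\pm y)$ as explicit linear combinations of the common pair $A\equiv \cR(b;x+ia_+/2,y)$ and $B\equiv \cR(b;x-ia_+/2,y)$, with coefficients built from $\tilde w(b;x)^{1/2}$, $c(b;\pm y)$, $s_-(ib\pm y)$, and exponentials of the form $e_-((\pm ib\pm y)/2)$. Substituting into \eqref{psirev} and separately equating the coefficients of $A$ and $B$ reduces the claim to two scalar identities; after clearing common factors via \eqref{t}--\eqref{r} and $c(b;y)/c(b;-y)=-u(b;y)$, each collapses to the elementary hyperbolic identity $e_-(ib)s_-(ib+y)=s_-(y)+e_-(ib+y)s_-(ib)$, which is immediate from $2s_-(z)=e_-(z)-e_-(-z)$.

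The main obstacle is purely combinatorial: the four sign choices $(\pm x,\pm y)$ combined with the contour shifts $\pm ia_+/2$ produce many terms, and it takes care to organize the calculation so that the asymptotic constants $\phi(b)^{\pm 1}$ from the $c$-functions and $\phi(b)^{\pm 2}$ from $\rE$ (cf.~\eqref{uas}) visibly cancel into the clean trigonometric prefactors that make up $t(b;y)$ and $r(b;y)$.
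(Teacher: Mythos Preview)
Your proposal is correct and follows essentially the same approach as the paper. The paper organizes the asymptotic computation by passing through the auxiliary function $\cZ(b;x,y)=\cR(b;x,y)/c(b;-y)$ and the summands $\psi_{\pm}$ of~\eqref{psipm} before combining, whereas you substitute \eqref{ER} directly into \eqref{defpsi}; but the underlying cancellations (the $\phi(b)$-factors, the exponential from $c(b;x\pm ia_+/2)$ against $\tilde w(b;x)^{1/2}$, and the vanishing of the $e^{-i\pi xy/a_+a_-}$ coefficient for $\re x\to\infty$) are identical, and for~\eqref{psirev} both proofs reduce to equating coefficients of the two independent $\cR$-values using evenness.
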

\begin{proof}
Using the evenness of $\cR(b;x,y)$ in~$x$ and~$y$, the identity~\eqref{psirev} follows from~\eqref{defpsi} by a straightforward calculation. (It encodes `time reversal invariance', cf.~the discussion below Theorem~D.1.)

In order to prove~\eqref{psias}, we first note that from \eqref{cZ} we have  
\be
\cZ(b;x,y)=-\phi(b)^{-1}c(b;x)u(b;y)\rE(b;x,y).
\ee
Combining this relation with~\eqref{cas}, \eqref{EasR} and~\eqref{EasL}, we obtain
\be
\cZ(b;x,y)\sim
\left\{
\begin{array}{ll}
e^{-\pi bx/a_+a_-}\big(e^{-i\pi xy/a_+a_-}-u(b;y)e^{i\pi xy/a_+a_-} \big),  &  \re x\to\infty,  \\
e^{\pi bx/a_+a_-}\big(e^{i\pi xy/a_+a_-}-u(b;y)e^{-i\pi xy/a_+a_-} \big),  & \re x\to-\infty .
\end{array}
\right.
\ee
Now from \eqref{deftw}, \eqref{deftc} and \eqref{cas} we get
\be
\tilde{w}(b;x)^{1/2}\sim \exp(\pm \pi bx/a_+a_-),\ \ \ \re x\to\pm \infty,
\ee
where we need $b\in(-a_+/2,a_-+a_+/2)$ to ensure $\tilde{w}(b;0)>0$. From this we deduce
\begin{multline}
\tilde{w}(b;x)^{1/2}\cZ(b;x\pm ia_+/2,y)\sim \\
e_-(\mp i b/2)\big(e_-(\pm y/2)e^{-i\pi xy/a_+a_-}-e_-(\mp y/2)u(b;y)e^{i\pi xy/a_+a_-} \big),\ \re x\to\infty,
\end{multline}
\begin{multline}
\tilde{w}(b;x)^{1/2}\cZ(b;x\pm ia_+/2,y)\sim \\
e_-(\pm i b/2) \big(e_-(\mp y/2)e^{i\pi xy/a_+a_-}-e_-(\pm y/2)u(b;y)e^{-i\pi xy/a_+a_-} \big) ,\ \re x\to-\infty.
\end{multline} 
Hence \eqref{psipm} entails
\begin{multline}
2\psi_{\pm}(b;x,y)s_-(ib-y)\sim \\
\pm\big(e^{-i\pi xy/a_+a_-}-e_-(\mp y)u(b;y)e^{i\pi xy/a_+a_-} \big),\  \ \ \re x\to\infty,  
\end{multline} 
\begin{multline}
2\psi_{\pm}(b;x,y)s_-(ib-y)\sim \\
 \pm e_-(\pm ib)\big(e_-(\mp y)e^{i\pi xy/a_+a_-}-u(b;y)e^{-i\pi xy/a_+a_-} \big) , \ \ \   \re x\to-\infty .
 \end{multline}  
Recalling \eqref{defpsi}, we now obtain~\eqref{psias}.
\end{proof}

%\newpage

\section{A close-up of the special eigenfunctions $\psi_N(x,y)$}

In this section we specialize the coupling constant~$b$ to the sequence of values $(N+1)a_+$, $N\in\N$, and study the opposite-charge eigenfunctions
\be\label{psiN}
\psi_N(x,y)\equiv \psi((N+1)a_+;x,y),\ \ \ N\in\N.
\ee
They are obtained from the same-charge eigenfunctions
\be\label{RN}
R_N(x,y)\equiv \cR((N+1)a_+;x,y),\ \ \ N\in\N,
\ee
via~\eqref{defpsi}. By contrast to the case of a generic coupling $b\in\R$, the $c$-function and $\tilde{w}$-function featuring in these formulas are elementary periodic functions, given by
\be\label{crec}
1/c((N+1)a_+;-y)=\prod_{j=0}^N (-2i)s_-(y+ija_+),
\ee
\be\label{twN}
\tilde{w}((N+1)a_+;x)=\prod_{j=0}^N 4s_-(x+i(j+1/2)a_+)s_-(x-i(j+1/2)a_+).
\ee
(This easily follows from the $G$-A$\De$Es~\eqref{Gades}.)

The crux of the special $b$-values is that the functions \eqref{RN} are elementary functions, too. Specifically, from Eqs.~(4.8) and (4.5) in~\cite{R11} we have
\be\label{RNexp}
R_N(x,y)=(-i)^{N+1}(K_N(x,y)-K_N(x,-y))\Big/\prod_{j=-N}^N 4s_-(x+ija_+)s_-(y+ija_+),
\ee
with (cf.~Eqs.~(4.10)--(4.12) in~\cite{R11})
\be\label{KN}
K_N(x,y)  \equiv \exp(i\pi xy/a_+a_-)\Sigma_N(x,y),
\ee
\be\label{SigN}
\Sigma_N(x,y)\equiv\sum_{k,l=0}^Nc_{kl}^{(N)}(e_-(ia_+))e_-((N-2k)x+(N-2l)y)).
\ee
The coefficients in the sum are specified in Section~II of~\cite{R99}, where the functions $K_N(x,y)$ and $R_N(x,y)$ have been studied in great detail. (See also~\cite{DK00} for a related account.)  
In the sequel we need substantial information concerning the above quantities, which we proceed to collect. 

First, the coefficients $c_{kl}^{(N)}(q)$ are Laurent polynomials in~$q$ with integer coefficients. In particular, we have~$c^{(0)}_{00}=1$ (so that $R_0(x,y)$ reduces to~\eqref{N0}), and
\be\label{ckl1}
c^{(1)}_{00}=c^{(1)}_{11}=q,\ \ c^{(1)}_{01}=c^{(1)}_{10}=-q^{-1},\ \ \ q=e_-(ia_+).
\ee

Second, the coefficients have symmetries 
 \be\label{csym}
c^{(N)}_{kl}=c^{(N)}_{lk}=c^{(N)}_{N-k,N-l}=(-)^N\overline{c^{(N)}_{k,N-l}},\ \ \ k,l=0,\ldots,N.
\ee
Clearly, these are equivalent to the following features of $K_N$ and~$\Sigma_N$:
\be\label{Sigsd}
K_N(x,y)=K_N(y,x),\ \ \ \Sigma_N(x,y)=\Sigma_N(y,x),
\ee
\be\label{Sigrefl}
K_N(x,y)=K_N(-x,-y),\ \ \ \Sigma_N(x,y)=\Sigma_N(-x,-y),
\ee
\be\label{Sigcon}
K_N(x,y)=(-)^N\overline{K_N(x,-y)},\ \ \Sigma_N(x,y)=(-)^N\overline{\Sigma_N(x,-y)},\ \ \ x,y\in\R.
\ee

Third, the function $K_N(x,y)$ satisfies the A$\De$E
\be\label{KNade}
s_-(x+iNa_{+})K_N(x-ia_+,y)+s_-(x-iNa_{+})K_N(x+ia_+,y)=2s_-(x)c_-(y)K_N(x,y).
\ee
(This is a similarity transformed version of the A$\De$E~\eqref{Rade} with~$b=(N+1)a_+$.)

Fourth, we need the following explicit evaluations:
\be\label{Kspec}
K_N(x,\pm iNa_+)=K_N(\pm iNa_+,y)=\prod_{j=N+1}^{2N}2s_-(ija_+),
\ee
\be\label{KBy}
K_N(\pm i(N-k)a_+,y)=i^NB^{(N)}_k(c_-(y)),\ \ \ k=0,\ldots,N,
\ee
\be\label{KBx}
K_N(x,\pm i(N-k)a_+)=i^NB^{(N)}_k(c_-(x)),\ \ \ k=0,\ldots,N.
\ee
Here, $B^{(N)}_k(u)$ is a polynomial of degree~$k$ and parity~$(-)^k$ with real coefficients, and the restrictions
\be\label{arestr}
ja_+\notin a_-\N,\ \ \ j=1,\ldots, 2N,
\ee
must be imposed for the degree-$k$ property to hold true.
 In particular, from~\eqref{ckl1} we easily get
\be\label{Bsp}
K_1(\pm ia_+,y)=2s_-(2ia_+),\ \ K_1(0,y)=4s_-(ia_+)c_-(y),
\ee
and the restrictions $a_+\neq la_-/2$, $l\in\N$, ensure that the coefficients do not vanish.

The above features can all be gleaned from~Theorem~II.1 in~\cite{R99}. Finally, we also need the summation identity
\be\label{SNasp}
\sum_{l=0}^Nc_{0l}^{(N)}e_-((N-2l)y)=\prod_{j=1}^N2s_-(y+ija_+),
\ee
which results upon combining Eqs.~(2.19)--(2.21) with Eq.~(2.55) in~\cite{R99}. From this we obtain a second identity
\be\label{SNasm}
\sum_{l=0}^Nc_{Nl}^{(N)}e_-((N-2l)y)=(-)^N\prod_{j=1}^N2s_-(y-ija_+),
\ee
by using $c^{(N)}_{Nl}=c^{(N)}_{0,N-l}$ and relabeling.
 
We are now in the position to work out an explicit expression for $\psi_N(x,y)$ by using the above building blocks. From this and the summation identities~\eqref{SNasp} and~\eqref{SNasm} we can then determine the asymptotic behavior of~$\psi_N(x,y)$. First, however, we  do so for 
\be\label{EN}
\rE_N(x,y)\equiv \rE ((N+1)a_+;x,y),
\ee
since this yields a simple template for the application of the above formulas. 

From~\eqref{phib} we calculate
\be\label{phiN}
  \phi((N+1)a_+)=(-i)^{N+1}e_-(i(N+1)Na_+/2),
\ee
and then~\eqref{ER} and~\eqref{crec} yield
\be\label{ENexp}
\rE_N(x,y)=e_-(i(N+1)Na_+/2)  
(K_N(x,y)-K_N(x,-y))\Big/\prod_{j=1}^N 4s_-(x+ija_+)s_-(y+ija_+).
\ee
Combining \eqref{KN} and \eqref{SigN} with~\eqref{SNasp}, it is now easy to verify
\be\label{ENas}
\rE_N(x,y)= e^{i\pi xy/a_+a_-}-u_N(-y)e^{-i\pi xy/a_+a_-}+O(e_-(-2 x)),\ \ \ \re x\to\infty,
\ee
where
\be\label{uN}
u_N(z)\equiv u((N+1)a_+;z)=\prod_{j=1}^N \frac{s_-(ija_++z)}{s_-(ija_+-z)},
\ee
and where the implied constant can be chosen uniformly for~$\im x$ varying over $\R$-compacts and~$y$ varying over compact subsets of $\C$ with the $y$-poles removed.
This agrees with the specialization of~\eqref{EasR}, and by using~\eqref{SNasm} we readily obtain a more precise version of~\eqref{EasL} as well.

Turning to $\psi_N(x,y)$, we begin by defining a weight function~$w_N(x)$ that plays the role of $w(r)$ in Appendix~A. It is given by (recall~\eqref{twN})
\be\label{wN}
w_N(x)\equiv 1/\tilde{w}((N+1)a_+;x)=1\Big/ \prod_{j=0}^N 4s_-(x+i(j+1/2)a_+)s_-(x-i(j+1/2)a_+).
\ee
Next, we introduce a function $v_N(y)$ that corresponds to the function $v(k)$ from Appendix~B. To ease the notation, we define its reciprocal:
\be\label{vyrec}
1/v_N(y)\equiv  2s_-(y-i(N+1)a_+)c(((N+1)a_+;-y)\prod_{j=-N}^N2s_-(y+ija_+).
\ee
Using~\eqref{crec} and simplifying, this yields
\be\label{vhy}
v_N(y)=1\Big/\prod_{j=1}^{N+1}2is_-(y-ija_+).
\ee
From~\eqref{defpsi} we now obtain, using the above formulas,
\begin{multline}\label{psiNexp}
\psi_N(x,y)=(-)^Ni^{N+1}w_N(x)^{1/2}v_N(y) \\
\times\sum_{\de=+,-}2\de s_-(x-i\de (N+1/2)a_+)e_-(\de(i(N+1)a_+-y)/2)\\
\times \Big( \exp(i\pi xy/a_+a_-)e_-(-\de y/2)\Sigma_N(x+i\de a_+/2,y) -(y\to -y)\Big).
\end{multline}
As a check, note this coincides with \eqref{psizero} for $N=0$. 

We proceed to derive properties of~$\psi_N(x,y)$ from these explicit formulas. A first conclusion is that we have 
\be\label{psiNcon}
\overline{\psi_N(x,y)}=\psi_N(x,-y),\ \ \ (x,y)\in\R^2.
\ee
Indeed, this readily follows by combining \eqref{wN}--\eqref{psiNexp} with the conjugation relation~\eqref{Sigcon} for $\Sigma_N$. 

Next, we obtain the large-$|\re x|$ asymptotics of $\psi_N(x,y)$. Just as for $E_N(x,y)$, the identities \eqref{SNasp} and \eqref{SNasm} are the key ingredients. A little more effort is needed due to the extra factors, but it is still straightforward to deduce
\be\label{psiNasx}
\psi_N(x,y)=
\left\{
\begin{array}{ll}
t_N(y)e^{i\pi xy/a_+a_-}+O(e_-(-2x)),   &  \re x\to\infty,  \\
e^{i\pi xy/a_+a_-}-r_N(y)e^{-i\pi xy/a_+a_-} +O(e_-(2x)) ,  & \re x\to-\infty ,
\end{array}
\right.
\ee
with
\be\label{tNrN}
t_N(y)\equiv\frac{s_-(y)}{s_-(i(N+1)a_+-y)}u_N(y),\ \ \ r_N(y)\equiv\frac{s_-(i(N+1)a_+)}{s_-(i(N+1)a_+-y)}u_N(y),
\ee
where the implied constants can be chosen uniformly for~$\im x$ varying over $\R$-compacts and~$y$ varying over $\C$-compacts with the $y$-poles removed. This renders the dominant general-$b$ asymptotics given by~\eqref{t}--\eqref{psias} more precise.

We can also use the dual versions of~\eqref{SNasp}--\eqref{SNasm} (i.~e., $y\to x$ and $c_{0l}/c_{Nl}\to c_{l0}/c_{lN}$) to obtain the large-$|\re y|$ asymptotics of $\psi_N(x,y)$. This again needs a bit of work, the result being
\be\label{psiNasy}
\psi_N(x,y)=
\left\{
\begin{array}{ll}
w_N(x)^{1/2}C_N(x)\exp(i\pi xy/a_+a_-)+O(e_-(-y) ),  &  \re y\to\infty,  \\
w_N(x)^{1/2}\overline{C_N(x)}\exp(i\pi xy/a_+a_-)+O(e_-(y) ),  & \re y\to-\infty ,
\end{array}
\right.
\ee
where
\be\label{CN}
C_N(x)\equiv (-)^{N+1}e_-(i(N+1)^2a_+/2)
\prod_{j=0}^N2s_-(x+i(j+1/2)a_+),
\ee
and the implied constants are uniform for~$\im y$ in $\R$-compacts and~$x$ varying over $\C$-compacts with the $x$-singularities removed. Thus we obtain a reflectionless asymptotics of the form
\be
\psi_N(x,y)\sim U_N(x)^{\pm 1/2} \exp(i\pi xy/a_+a_-),\ \ \ \re y\to\pm\infty,
\ee
with
\be\label{UN}
U_N(x)\equiv e_-(i(N+1)^2a_+)
\prod_{j=0}^N\frac{s_-(x+i(j+1/2)a_+)}{s_-(x-i(j+1/2)a_+)}.
\ee

We stress that there are no restrictions on $a_+$ and $a_-$ in~\eqref{psiNasx}--\eqref{UN}, save for our standing positivity assumption. This reveals a major flaw of time-independent (as opposed to time-dependent) scattering theory in the present setting: The unitary asymptotics exhibited by these formulas does not show any anomaly. However, as we shall see later on, we must \emph{restrict} the scale parameters for (a suitable multiple of) $\psi_N(x,y)$ to yield the integral kernel of a \emph{unitary} eigenfunction transform.

In order to clarify the unitarity issue (`orthogonality and completeness' in quantum-mechanical parlance), we rely on the results in the appendices. To make contact with their setup,  we need to verify the various assumptions made there. To this end we rewrite~\eqref{psiNexp} as
\be\label{psiNalt}
\psi_N(x,y)=w_N(x)^{1/2}v_N(y)\sum_{\tau=+,-}\ell_N^{\tau}(x,y)\exp(i\tau \pi xy/a_+a_-),
\ee
so that we obtain entire coefficients
\begin{multline}\label{ellNm}
\ell_N^{\tau}(x,y)=(-)^Ni^{N+1} \tau \sum_{\de=+,-}2\de s_-(x-i\de (N+1/2)a_+)
\\
\times e_-(\de(i(N+1)a_+-y )/2)e_-(-\de\tau y/2)
\Sigma_N(x+i\de a_+/2,\tau y).
\end{multline}
From this we read off the (anti)periodicity features
\be\label{perx}
\ell_N^{\tau}(x+ia_-,y)=(-)^{N+1}\ell_N^{\tau}(x,y),\ \ \ \tau=+,-,
\ee
\be\label{pery}
\ell_N^{\tau}(x,y+ia_-)=\tau(-)^{N+1}\ell_N^{\tau}(x,y),\ \ \ \tau=+,-.
\ee
 Also, from~\eqref{wN} it is clear that~$w_N(x)$ is $ia_-$-periodic, while~\eqref{vhy} shows that we have
 \be\label{vNper}
  v_N(y+ia_-)=(-)^{N+1}v_N(y).
  \ee

When we now make the substitutions~\eqref{xryk}, then it follows from this that the assumptions in Appendix~A and~B concerning periodicity/antiperiodicity in the variables~$r$ and~$k$ are satisfied. Likewise, \eqref{Psiconj} is clear from~\eqref{psiNcon}, and the assumptions regarding asymptotic behavior in~$r$ and~$k$ are satisfied, too.  
Finally, we show that the critical evenness assumptions hold true. 

\begin{proposition}
With the substitutions~\eqref{xryk} in $\ell^{\pm}_N(x,y)$~\eqref{ellNm} and $v_N(y)$~\eqref{vhy}, the evenness assumptions \eqref{Lass} and~\eqref{Mass} hold true.
\end{proposition}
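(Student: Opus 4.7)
The plan is to verify the evenness assumptions by direct computation, using the explicit formula \eqref{ellNm} for $\ell^\tau_N$ and \eqref{vhy} for $v_N$, combined with the symmetries \eqref{Sigsd}--\eqref{Sigcon} of the polynomial $\Sigma_N$. First I would note that under the substitutions \eqref{xryk} and \eqref{rhoka}, the phase $\pi xy/a_+a_-$ becomes exactly $rk$, so $\psi_N$ takes precisely the plane-wave form \eqref{Psipm} with coefficients $\Psi^{\pm}(r,k) = w_N(x)^{1/2} v_N(y)\, \ell^{\pm}_N(x,y)$ (with $x,y$ standing for the rescaled arguments). The assumptions \eqref{Lass} and \eqref{Mass} are evenness conditions on certain combinations of these plane-wave coefficients built from the products $\ell^+_N \ell^-_N$ and $v_N \ell^\pm_N$ under $k \to -k$, so the task reduces to establishing sharp transformation laws for $\ell^{\pm}_N(x,y)$ and $v_N(y)$ under $y \to -y$.

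My first step would be to determine the behavior of $\ell^{\tau}_N(x,y)$ under $y\to -y$ directly from \eqref{ellNm}. Performing the change of summation variable $\delta \to -\delta$ and invoking the reflection symmetry $\Sigma_N(x, -\tau y) = \Sigma_N(-x, \tau y)$ from \eqref{Sigrefl}, one obtains a clean identity relating $\ell^{\tau}_N(x,-y)$ back to $\ell^{-\tau}_N$ (or to $\ell^{\tau}_N(-x,y)$) up to an explicit elementary prefactor built out of $e_-$ and $s_-$ functions. The alternating sign $\delta$ in \eqref{ellNm} and the $(-)^N i^{N+1}$ prefactor must be tracked carefully; the shifts $x \pm ia_+/2$ inside $\Sigma_N$ are interchanged under $\delta \to -\delta$, matching the reflection move in the second argument.

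Next, I would handle the contribution of $v_N$. From \eqref{vhy} we have $v_N(-y) = \prod_{j=1}^{N+1}(2is_-(y+ija_+))^{-1}$, and comparing with $v_N(y) = \prod_{j=1}^{N+1}(2is_-(y-ija_+))^{-1}$ one reads off a simple ratio that must combine with the elementary prefactor produced in the previous step. Here the observation that the compound $v_N(y) \ell^{\pm}_N(x,y)$ (rather than the two factors separately) is what enters the evenness conditions in Appendix A/B is essential: the poles in $s_-(y - ija_+)$ from $v_N$ cancel against the corresponding $y$-zeros in $\ell^{\tau}_N$, leaving an expression whose symmetry under $y \to -y$ reduces to a trigonometric identity among $s_-$ and $e_-$ functions that is immediate to verify.

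The main obstacle will be the bookkeeping rather than any deep mechanism: keeping careful track of the interplay among the sign $\tau$, the summation sign $\delta$, the overall $(-)^N i^{N+1}$ factor in \eqref{ellNm}, and the several $e_-(\cdot)$ exponentials, while ensuring that the shifts $\pm ia_+/2$ in $\Sigma_N$ are transported correctly through the reflection move. Given the symmetries \eqref{Sigsd}--\eqref{Sigcon}, no identity beyond these and the explicit shape of $v_N$ should be required, and once the transformation law for $v_N(y)\ell^{\tau}_N(x,y)$ under $y\to -y$ is in hand, both \eqref{Lass} and \eqref{Mass} should follow by inspection.
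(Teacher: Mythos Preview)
Your plan has a real gap. The hoped-for ``clean identity relating $\ell^{\tau}_N(x,-y)$ back to $\ell^{-\tau}_N$ (or to $\ell^{\tau}_N(-x,y)$)'' does not exist. If you carry out the substitution $\de\to -\de$ together with~\eqref{Sigrefl} as you describe, the residual factor $e_-(\pm\de\, i(N+1)a_+)$ in the exponentials still depends on~$\de$, so the sum does not collapse to a single $\ell^{\pm}_N$ with reflected arguments times an overall scalar. The evenness in~\eqref{Lass} is genuinely a \emph{bilinear} phenomenon: one has to expand the product $\ell^{\tau}_N(x,y)\ell^{\tau'}_N(x',-y)$ as a double sum over $\de,\de'$, add the companion term $\ell^{-\tau}_N(-x,y)\ell^{-\tau'}_N(-x',-y)$ (which amounts to the same double sum with $\de,\de'\to -\de,-\de'$), and only then do the $\de$-dependent exponentials $e_-((\de+\de')i(N+1)a_+/2)e_-((\de'-\de)y/2)$ combine into the even functions $c_-(i(N+1)a_+)$ and $c_-(y)$. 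After that reorganization, invoking~\eqref{Sigrefl} once more yields the joint invariance of $L^{\tau,\tau'}_N(y,x,x')$ under $(y,x,x')\to(-y,-x,-x')$.

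Your second step, the $v_N$ analysis, is a red herring. The quantity $L^{\tau,\tau'}$ in~\eqref{Lttp} is built from the $\ell$'s alone, so $v_N$ plays no role in~\eqref{Lass}. For~\eqref{Mass} the required symmetry is evenness in~$r$ (not in~$k$), and since $v_N$ depends only on~$k$ it factors out identically from both sides; in any case~\eqref{Mass} is just the specialization of~\eqref{Lass} to $x'=\pm x$, so it comes for free once~\eqref{Lass} is checked. The pole-cancellation remark about $v_N(y)\ell^{\tau}_N(x,y)$ is therefore irrelevant to the proposition at hand.
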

\begin{proof}
Staying with the variables~$x$ and~$y$, 
\eqref{Lass} amounts to 
invariance of
\be\label{LNtt}
L^{\tau,\tau'}_N(y,x,x')\equiv \ell_N^{\tau}(x,y)\ell_N^{\tau'}(x',-y)+\ell_N^{-\tau}(-x,y)\ell_N^{-\tau'}(-x',-y),
\ee
under taking $y,x,x'\to -y,-x,-x'$. To check this symmetry property, we suppress the subscripts in~\eqref{ellNm} and use the notation
\be
\eta_N\equiv i(N+1/2)a_+,\ \ \ \nu=i(N+1)a_+/2.
\ee
Then, \eqref{ellNm} entails
\begin{multline}
\ell^{\tau}(x,y)\ell^{\tau'}(x',-y)\sim \tau\tau'\sum_{\de,\de'}\de\de' s(x-\de\eta_N)s(x'-\de'\eta_N)
\Sigma(x+\de \eta_0,\tau y)
\Sigma(x'+\de' \eta_0,-\tau' y)
\\
\times e((\de'\tau'-\de\tau)y/2)
e((\de+\de')\nu)e((\de'-\de)y/2).
\end{multline}
From this we deduce by using~\eqref{Sigrefl},
\begin{multline}\label{Lttex}
L^{\tau,\tau'}_N(y,x,x')\sim
\sum_{\de,\de'}\de\de' s(x-\de\eta_N)s(x'-\de'\eta_N)
\Sigma(x+\de \eta_0,\tau y)
\Sigma(x'+\de' \eta_0,-\tau' y)
\\
\times e((\de'\tau'-\de\tau)y/2)
\big[e((\de+\de')\nu)e((\de'-\de)y/2)+(\de,\de'\to -\de,-\de')\big]
\\
=\sum_{\de}s(x-\de\eta_N)s(x'-\de\eta_N)
\Sigma(x+\de \eta_0,\tau y)
\Sigma(x'+\de \eta_0,-\tau' y)
e(\de(\tau'-\tau)y/2)\cdot 2c(2\nu)
\\
-\sum_{\de}s(x-\de\eta_N)s(x'+\de\eta_N)
\Sigma(x+\de \eta_0,\tau y)
\Sigma(x'-\de \eta_0,-\tau' y)
e(-\de(\tau'+\tau)y/2)\cdot 2c(y).
\end{multline}
Invoking~\eqref{Sigrefl} once more, we now see that both sums are invariant under taking $y,x,x'$ to~$-y,-x,-x'$. 
Having verified~\eqref{Lass}, the weaker assumptions~\eqref{Mass}   follow, too.  
\end{proof}

As promised below~\eqref{defpsi}, we conclude this section by deriving explicit expressions for the attractive eigenfunctions~$\psi((N+1)a_-;x,y)$ with $N>0$, and comparing them to Section~4 in~\cite{R00}. (Recall we already calculated~$\psi(a_-;x,y)$, cf.~\eqref{psifree}.) To this end we can make use of~\eqref{crec} and~\eqref{RNexp}--\eqref{SigN} with~$a_+$ and~$a_-$ swapped, whereas~\eqref{twN} is replaced by
\be\label{twNrep}
\tilde{w}((N+1)a_-;x)=2c_+(x)\prod_{j=-N}^N 2c_+(x-ija_-).
\ee
Employing these formulas in combination with~\eqref{defpsi}, we obtain
\begin{multline}\label{psiNr=0}
\psi((N+1)a_-;x,y)=\exp(i\pi xy/a_+a_-)\\
\times
\frac{\sum_{k,l=0}^N(-)^kc_{kl}^{(N)}(e_+(ia_-))e_+((N-2k)x+(N-2l)y))}{\prod_{j=1}^N2s_+(y-ija_-)[4c_+(x-ija_-)c_+(x+ija_-)]^{1/2} }.
\end{multline}
We point out that these functions are not only joint eigenfunctions of $\tilde{H}((N+1)a_-;x)$ and~$\cS((N+1)a_-;y)$ (cf.~\eqref{tHade} and~\eqref{cSade}), but also of the two `free' A$\De$Os
\be\label{2free}
\exp(-ia_+\partial_x)+\exp(ia_+\partial_x),\ \ \ \exp(-ia_+\partial_y)+\exp(ia_+\partial_y).
\ee
(Indeed, the functions multiplying the plane wave are $ia_+$-periodic in~$x$ and~$y$.) By contrast, for other $b$-values no additional independent A$\De$Os appear to exist for which $\psi(b;x,y)$ are eigenfunctions.

The connection with the functions $F_a(\nu,\beta;x,p)$ given by Eq.~(1.32) of~\cite{R00} can be made by setting
\be\label{RIMS}
a_+=\pi/\nu,\ \ a_-=\hbar\beta,\ \ y=\beta p/\nu.
\ee
Then the dimensionless parameter $a=\hbar\beta\nu$ used there equals $\pi a_-/a_+$, and the coefficient matrices are related by
\be 
(-)^mc_{mn}^{(N)}(e_+(ia_-))=i^Nc^{(a)}_{mn},\ \ \ m,n=0,\ldots,N,
\ee
cf.~Eq.~(1.39) in~\cite{R00}. With these reparametrizations, we wind up with the relation
\be\label{psiFa}
F_a(\nu,\beta;x,p)=(-i)^N\psi((N+1)a_-;x,y)\prod_{j=1}^N[s_+(y-ija_-)/s_+(y+ija_-)]^{1/2}.
\ee

We mention that in~\cite{DK00} these reflectionless eigenfunctions were tied in with basic hypergeometric series. It is an interesting question whether this link can still be made for the eigenfunctions~$\psi((N+1)a_+;x,y)$.

%\newpage

\section{The transforms associated with $\psi_N(x,y)$}
 
In this section we focus on the Hilbert space aspects of the opposite-charge A$\De$Os
\be\label{HN}
\tilde{H}_N(x)\equiv \tilde{H}((N+1)a_+;x)=\exp(-ia_-\partial_x) +\exp(ia_-\partial_x),
\ee
\be\label{cSN}
 \cS_N(y)\equiv \cS((N+1)a_+;y)=\exp(-ia_-\partial_y)  -\exp(ia_-\partial_y),
\ee
(cf.~\eqref{tHb} and~\eqref{cS}), and their joint eigenfunctions~$\psi_N(x,y)$~\eqref{psiNexp}. To this end we invoke the results in Appendix~A and~B, with $\psi_N(x,y)$ (reparametrized by~\eqref{xryk})  in the role of $\Psi(r,k)$. We are entitled to do so, since we have shown in the previous section that the assumptions made there are satisfied for generic scale parameters $a_+,a_-$. We have not isolated the exceptional parameters yet, and now proceed by studying this issue. 

 To start with, it  easily follows from \eqref{psiNexp} that the transform resulting from~\eqref{cF}--\eqref{Fm} is bounded whenever $w_N(x)$ and $v_N(y)$ have no pole at the origin. Indeed, the integral kernels are then given by a finite sum of terms of the form
\be\label{cFform}
B(r)\exp(i\tau rk)\hat{B}(k),\ \ \ \tau \in \{ +,-\},
\ee
where $B(r)$ and  $\hat{B}(k)$ are bounded functions on~$\R$ and~$[0,\infty)$, resp. Clearly, each such term gives rise to a product of three bounded operators.

The only eventual constraint on $a_+$ and $a_-$ encountered thus far comes from the   condition `no poles at the origin' just mentioned. We are using the adjective `eventual', since we do not know whether it can ever happen that the wave function~$\psi(b;x,y)$ itself has a pole for~$x=0$ or~$y=0$. We can illustrate this with~$\psi_0(x,y)$~\eqref{psizero}: Its factors are singular at the origin if and only if $a_+\in a_-\N^*$, but in fact we easily calculate
\be\label{psi0free}
\psi_0(x,y)=(-)^{l-1}\exp(i\pi xy/a_+a_-),\ \ \ a_+=la_-,\ \ \ l\in\N^*.
\ee
 
More generally, whenever the parameter~$b$ is simultaneously a multiple of~$a_+$ and of~$a_-$ (so that the factors of~$\psi_N(x,y)$ have poles at the origin), the function $\psi_N(x,y)$ is actually a (constant) phase multiple of the plane wave~$\exp(i\pi xy/a_+a_-)$. This can be concluded from the features of the functions $F_a(\nu,\beta;x,p)$~\eqref{psiFa} expounded at the end of~Section~4 in~\cite{R00}. 

Fortunately our ignorance about this absence of real poles is of little consequence, as we now explain. First, it is clear from its definition~\eqref{wN} that $w_N(x)$ has no real poles for 
\be\label{unitary}
a_->(N+1/2)a_+=b-a_+/2.
\ee
For the remainder of this section, we restrict attention to this parameter interval.
As we shall show in the next one, for $a_-\le (N+1/2)a_+$ the transform associated with~$\psi_N(x,y)$ is not isometric, save for a discrete parameter set, hence not acceptable from a quantum-mechanical viewpoint. 

Second, from~\eqref{vhy}  we read off that~$v_N(y)$ has no real poles for~$a_->(N+1)a_+$. For~$a_-=(N+1)a_+$, however, $\psi_N(x,y)$ reduces to the plane wave~\eqref{psifree}, so the corresponding transform amounts to the Fourier transform. For the remaining interval~$a_-\in ((N+1/2)a_+,(N+1)a_+)$ we do not get any real poles in~$v_N(y)$, so the upshot is that there is no real pole problem for the parameters~\eqref{unitary} at issue in this section.

In Appendix~A and~B, however, we have additional assumptions concerning poles, which amount to the poles of the weight functions~$w_N(x)$ and~$v_N(y)v_N(-y)$ being simple. Clearly, in the interval~\eqref{unitary} $w_N(x)$ has double poles in the critical strip~$\im x\in (0,a_-)$ whenever the poles at $x=i(j+1/2)a_+$, $j=0,\ldots,N$, collide with the poles at $x=ia_--i(k+1/2)a_+$, $k=0,\ldots,N$. However, this is of no consequence for the first major result of this section.

\begin{theorem}
With the parameter restriction~\eqref{unitary} in effect, the transform~$\cF_N(\rho,\kappa)$ given by~\eqref{cF}--\eqref{Fm} with kernel
\be
\Psi(r,k) =\psi_N(a_- r/\rho,a_- k/\kappa ),\ \ \ \rho\kappa =\pi a_-/a_+,
\ee
is an isometry.
\end{theorem}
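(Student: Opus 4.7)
The plan is to invoke Theorem~A.1 of Appendix~A, which reduces isometry of $\cF_N(\rho,\kappa)$ to the vanishing of explicit residue sums produced by shifting the $r$-contour in the computation of $\cF_N^{*}\cF_N-I$. All the structural hypotheses of that theorem have been verified in Section~3: the antiperiodicity relations~\eqref{perx}, \eqref{pery}, \eqref{vNper}, the conjugation relation~\eqref{psiNcon}, the asymptotic behavior~\eqref{psiNasx}, and, most importantly, the evenness identity of Proposition~3.1. Under the standing restriction~\eqref{unitary} both $w_N(x)$ and $v_N(y)$ are real-analytic on the real axis, so the transform is bounded and the proof reduces to analyzing the residue sum.

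Next I would locate the poles. Writing $\psi_N(x,y)$ in the form~\eqref{psiNalt}, the coefficients $\ell_N^{\tau}(x,y)$ are entire in~$x$, and $v_N(y)$ is regular for $y\in\R$. The residue sums at issue involve the meromorphic function $w_N(x)$ itself---not its square root---because they arise from the bilinear $\overline{\Psi(r,k)}\Psi(r,k')$. Inspection of~\eqref{wN} shows that the poles of $w_N$ in the critical strip $\im x\in(0,a_-)$ sit at $x_j=i(j+\frac12)a_+$ and $x_j'=ia_--i(j+\frac12)a_+$ for $j=0,\ldots,N$. These are simple for generic $a_-$ in the interval~\eqref{unitary}, and merge into double poles exactly when $a_-=(j+k+1)a_+$ for some $j,k\in\{0,\ldots,N\}$.

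For generic $a_-$ in~\eqref{unitary} I would prove cancellation by pairing $x_j$ with $x_j'$. The $ia_-$-periodicity of $w_N$ identifies the $w_N$-residue at $x_j'$ with its residue at $-x_j$, while the antiperiodicities~\eqref{perx} and~\eqref{vNper} relate $\ell_N^{\tau}(x_j',y)\,v_N(y)$ to $\ell_N^{\tau}(-x_j,y)\,v_N(y)$ up to the sign $(-)^{N+1}\cdot(-)^{N+1}=+1$. Feeding these identifications into the bilinear form that arises in the residue sum and invoking the evenness identity of Proposition~3.1---which is tailored precisely to deliver invariance of this bilinear under $(x,x')\mapsto(-x,-x')$---one sees that the contributions at $x_j$ and $x_j'$ cancel pairwise for every $j=0,\ldots,N$, so the total residue sum vanishes.

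The main obstacle is the collision locus inside~\eqref{unitary}, where two simple poles merge into a double pole and the residue picks up derivative corrections in $\partial_x\ell_N^{\tau}$ and the finite part of $w_N$ that are not controlled by the simple-pole argument. My primary approach would be a continuity argument: the residue sum depends continuously on $a_-$ throughout~\eqref{unitary}, and its vanishing on the dense complement of the discrete collision set extends by continuity to the whole interval. Should this prove too crude, an alternative is to use the A$\De$Es~\eqref{tHade}--\eqref{cSade} to express the derivatives $\partial_x\ell_N^{\tau}$ at collision points algebraically in terms of values of $\ell_N^{\tau}$, reducing the double-pole residue to an identity that vanishes by the same symmetry used in the simple-pole case.
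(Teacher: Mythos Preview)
Your overall strategy---invoke Theorem~A.1, handle the double-pole locus by continuity---matches the paper's, and your treatment of the exceptional set is essentially the paper's argument (the paper uses strong continuity of the transform itself rather than of the residue sum, which sidesteps the question of whether individual residues stay bounded at collisions). But the heart of the argument, the vanishing of the residue terms, has a genuine gap.

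You claim that pairing the pole at $x_j$ with the one at $x_j'=ia_--x_j$ and then invoking the evenness identity of Proposition~3.1 forces cancellation. This cannot work. The evenness property~\eqref{Lass} is a \emph{hypothesis} of Theorem~A.1; it is already consumed in deriving the residue formula~\eqref{cFres} and its rewriting~\eqref{Rdk} (in particular, it is what makes the residue integrands nonsingular at $k=k'$). The pairing $r_j\leftrightarrow i\rho-r_j$ is likewise already built into~\eqref{Rdk}: the bracket there contains $\mu^{\pm}$ evaluated at both $\pm\delta r_j$. What remains after Corollary~A.2 is to show that each of the $L=N+1$ terms in~\eqref{Rdk} vanishes, and Proposition~3.1 gives no further leverage for that. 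Indeed, if your mechanism worked it would prove isometry for \emph{all} $a_->0$, since~\eqref{Lass} is a purely algebraic identity independent of the parameter range---but Section~5 exhibits explicit isometry violation for $a_-<(N+1/2)a_+$.

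The paper's mechanism is different and uses a property specific to the pole locations $x_j=i(j+1/2)a_+$ lying in the critical strip. At these points the arguments $x_j\pm ia_+/2$ of the $K_N$-factors in~\eqref{ellNm} are integer multiples $\pm ima_+$ with $0\le m\le N+1$; for $m=N+1$ the accompanying $s_-$-factor vanishes, and for $m\le N$ the evaluation~\eqref{KBy} shows $K_N(\pm ima_+,\tau y)$ is \emph{even in $y$}, hence independent of $\tau$. Thus the only $\tau$-dependence of $\mu^{\tau}(\delta r_j,k)$ is the overall factor $\tau$ coming from~\eqref{ellNm}, so the bracket in~\eqref{Rdk} is proportional to $\nu\nu'$, and the algebraic identity~\eqref{id1} then kills the $\nu,\nu'$-sum. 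When $a_-$ drops below $(N+1/2)a_+$ the relevant pole shifts to $i(N+1/2)a_+-ia_-$, the $K_N$-arguments are no longer at $\pm ima_+$, the $\tau$-independence fails, and the residue survives---exactly as computed in Section~5.
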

\begin{proof}
 The requirement~\eqref{unitary} amounts to
\be\label{rkint}
\rho\kappa\in ((N+1/2)\pi,\infty).
\ee
There is a finite set~$\cE_N$ of exceptional $\rho\kappa$-values in this interval for which at least one pole of (cf.~\eqref{wN})
\be\label{wr}
w(r)=1\Big/ \prod_{j=0}^N 4\sinh\Big(\frac{\pi }{\rho}\Big( r+i(j+1/2)\frac{\pi}{\kappa}\Big)\Big)
\sinh\Big(\frac{\pi }{\rho}\Big( r-i(j+1/2)\frac{\pi}{\kappa}\Big)\Big),
\ee
is not simple. But the bounded multiplication operators featuring in the transform kernel (cf.~\eqref{cFform}) are clearly strongly continuous in $\rho$ on the whole interval~\eqref{rkint}, so the same is true for the transform~$\cF_N(\rho,\kappa)$. Thus we need only show
isometry for the non-exceptional subintervals to conclude isometry for all of~\eqref{rkint}.

To prove isometry for
\be
\rho\kappa \in ((N+1/2)\pi,\infty)\setminus \cE_N,
\ee
we may invoke Theorem~A.1, since all of its assumptions are satisfied, with $L$ equal to $N+1$. In view of its Corollary~A.2, it suffices to show that each of the~$N+1$ terms in the sum on the rhs of~\eqref{Rdk} vanishes. 

To this end, consider the dependence on the indices~$\nu$ and~$\nu'$ of one of the terms. For it to vanish, it is sufficient that the bracketed expression on the second line of~\eqref{Rdk} is proportional to~$\nu\nu'$. Indeed, assuming it is, we can invoke the identity~\eqref{id1} to infer that the term vanishes.

Now in the case at hand, we have
\be
 \mu^{\tau}(\rho x/a_-,\kappa y/a_-)=
 v_N(y)\ell^{\tau}_N(x,y)\exp(i\tau \pi xy/a_+a_-),\ \ \ \tau=+,-,
 \ee
where~$\ell^{\tau}_N(x,y)$  is given by~\eqref{ellNm}. Recalling~\eqref{KN}, we see that the only dependence on~$\tau$ comes from the factor
\be
\tau K_N(x+i\de a_+/2,\tau y).
\ee
 Moreover, for the present case we may choose as pole locations
\be\label{rj}
r_j\equiv i\pi (j+1/2)/\kappa\Leftrightarrow x_j\equiv i(j+1/2)a_+,\ \ \ j=0,\ldots,N.
\ee
Thus we only encounter the values
\be\label{Kval}
K_N(\pm ima_+,\tau y),\ \ \ m=0,\ldots,N+1,
\ee
with the case $m=N+1$ corresponding to~$x_N$ and $\de=+$. In the latter case, however, the factor~$s_-(x-i\de(N+1/2)a_+)$ in~\eqref{ellNm} vanishes. 

With the parameter restrictions~\eqref{arestr} in force, we can invoke~\eqref{KBy} to conclude that the values~\eqref{Kval} with~$m<N+1$ are even in~$y$. By continuity this is still true for the excluded parameters. Hence these values do not depend on~$\tau$.

The upshot is that the two $\mu^{\tau}$-products in~\eqref{Rdk} solely depend on~$\nu$ and~$\nu'$ via a factor $\nu\nu'$, so that all terms in the sum vanish.
\end{proof}

The proof of the above theorem only made use of Appendix~A. To obtain the next theorem, we need the results of Appendix~B.

\begin{theorem}
Assume
\be\label{parunit}
a_-\in[(N+1)a_+,\infty).
\ee
Then the transform~$\cF_N(\rho,\kappa)$ of Theorem~4.1 is unitary. 
\end{theorem}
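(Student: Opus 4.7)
The plan is to combine Theorem~4.1, which already gives isometry throughout the range $a_->(N+1/2)a_+$, with Theorem~B.1 of Appendix~B. Since the transform is isometric, unitarity is equivalent to the adjoint being isometric, which by Theorem~B.1 reduces to the vanishing of certain residue sums indexed by the $y$-poles of the kernel $\psi_N(x,y)$ inside the strip $\im y\in(0,a_-)$. I would first deal with the endpoint $a_-=(N+1)a_+$ separately: there the coupling satisfies $b=(N+1)a_+=a_-$, so \eqref{psifree} gives $\psi_N(x,y)=\exp(i\pi xy/a_+a_-)$ and the transform reduces (up to normalization) to a Fourier transform, hence is unitary. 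By the same strong-continuity argument used in the proof of Theorem~4.1, it is then enough to establish unitarity on $a_-\in((N+1)a_+,\infty)$ off the finite exceptional set on which the poles of $w_N(x)$ or $v_N(\pm y)$ fail to be simple.

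In this interior range the only poles of $\psi_N(x,y)$ in the strip are the simple poles contributed by $v_N(y)$, located at $y_j=ija_+$ for $j=1,\ldots,N+1$; the factors $w_N(x)^{1/2}$ and $\ell_N^{\tau}(x,y)$ in~\eqref{psiNalt} are entire in~$y$. I would then compute, pole by pole, the contribution appearing in the adjoint residue sum. For $j\in\{1,\ldots,N\}$ the analysis parallels that of Theorem~4.1: by~\eqref{KBx} the relevant $K_N$-values are
\be
K_N(x+i\de a_+/2,\pm ija_+)=i^N B_{N-j}^{(N)}\!\big(c_-(x+i\de a_+/2)\big),
\ee
which are independent of the sign of~$y$. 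This $y$-parity forces the two $\tau$-branches in~\eqref{psiNalt} to recombine into a factor proportional to the analogue of $\nu\nu'$ occurring on the rhs of~\eqref{Rdk}, so the corresponding contribution vanishes by an identity of the \eqref{id1} type.

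For the top pole $y_{N+1}=i(N+1)a_+$ I would compute the values $K_N(x+i\de a_+/2,\pm i(N+1)a_+)$ by one step of the A$\De$E~\eqref{KNade} starting from the explicit expressions~\eqref{Kspec} and~\eqref{KBy}. The prefactors $2\de s_-(x-i\de(N+1/2)a_+)$ and $e_-(\de(i(N+1)a_+-y)/2)$ in~\eqref{psiNexp} then combine with these extended $K_N$-values; a short computation should show that the residue retains the odd-in-$\tau$ structure needed to enforce cancellation between $\de=+$ and $\de=-$, so that this pole too contributes zero. Putting the three pieces together, Theorem~B.1 yields isometry of the adjoint on $((N+1)a_+,\infty)$ off the exceptional set, and strong continuity promotes this to all of~\eqref{parunit}.

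The main obstacle is precisely the top pole $y_{N+1}=i(N+1)a_+$. The clean identity~\eqref{KBx} only delivers values for $|m|\le N$, so extending to $m=N+1$ via~\eqref{KNade} produces extra coefficient factors that threaten to spoil the parity argument; the real work lies in verifying that these factors, together with the residue $\Res_{y=i(N+1)a_+}v_N(y)$, reassemble so as to preserve the oddness in~$\tau$ that the Theorem~B.1 sum requires. It is precisely when this pole is instead moved \emph{into} the strip by decreasing $a_-$ below $(N+1)a_+$ that the cancellation breaks down and the bound state of Theorem~4.4 materialises.
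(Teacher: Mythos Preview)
Your overall architecture matches the paper exactly: endpoint $a_-=(N+1)a_+$ handled by~\eqref{psifree}, strong continuity to bypass the finite exceptional set, Theorem~B.1 with the poles $y_j=ija_+$ for $j=1,\ldots,N+1$, and the poles $j\le N$ dispatched via the $y$-parity~\eqref{KBx} together with the identity~\eqref{id1} (with $A,A'\to iA,iA'$). So far so good.

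The gap is in your treatment of the top pole $y_{N+1}=i(N+1)a_+$. Your plan to ``compute the values $K_N(x+i\de a_+/2,\pm i(N+1)a_+)$ by one step of the A$\De$E~\eqref{KNade}'' does not work as stated: \eqref{KNade} shifts the \emph{first} argument by~$ia_+$, not the second, so it gives no direct route from the known values at $y=\pm ija_+$, $j\le N$, to $y=\pm i(N+1)a_+$. And if you instead try to propagate in the $x$-direction from the values~\eqref{KBy} at $x=\pm ija_+$, the step that would reach $x=i(N+1)a_+$ degenerates (setting $x=iNa_+$ in~\eqref{KNade} kills the unknown term because $s_-(0)=0$). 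So there is no ``one step'' that produces $K_N(\cdot,\pm i(N+1)a_+)$ as an explicit function of~$x$.

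What the paper does instead is rather different and is the heart of the proof. One does \emph{not} evaluate the individual $K_N$-values; one looks at the full combination
\[
p_N=2\sum_{\nu=+,-}\nu\, s_-(x+i\nu(N+1/2)a_+)\,K_N(x-i\nu a_+/2,\,i\tau(N+1)a_+),
\]
which is (up to a constant) $\lambda_N^{\tau}(x,y_{N+1})$, cf.~\eqref{lamN}. As an entire function of~$x$ it manifestly has period $2ia_-$. The A$\De$E~\eqref{KNade} is then used, not to compute $K_N$, but to show that $p_N(x+ia_+/2)-p_N(x-ia_+/2)=0$, i.e.\ $p_N$ also has period $ia_+$. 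For irrational $a_+/a_-$ a doubly periodic entire function is constant, and by real-analyticity in the parameters the constancy persists for all $a_+,a_-$. Evaluating at $x=i(N+1/2)a_+$ (where the $\nu=-$ term drops out and the $\nu=+$ term reduces to~\eqref{Kspec}) gives $p_N=\prod_{j=N+1}^{2N+1}2s_-(ija_+)$, which is \emph{visibly} $\tau$-independent. Hence $\lambda_N^{\tau}(x,y_{N+1})=\tau\cdot\mathrm{const}$, so $\Lambda_N^{\tau,\tau'}(y_{N+1},x,x')$ is a constant multiple of~$\tau\tau'$, and the $j=N+1$ term in~\eqref{defR} vanishes by the same identity used for $j\le N$.

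So your diagnosis in the last paragraph is right --- the top pole is the crux --- but the mechanism you propose does not reach it. The missing idea is the double-periodicity/Liouville argument for the \emph{combination} $p_N$, rather than a pointwise evaluation of~$K_N$.
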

\begin{proof}
By Theorem~4.1 the transform is isometric, and it equals the Fourier transform at the interval endpoint, cf.~\eqref{psifree}. Letting next
\be
a_-\in ((N+1)a_+,\infty) \Leftrightarrow \rho\kappa \in ((N+1)\pi,\infty),
\ee
there is a finite set~$\hat{\cE}_N$ of exceptional $\rho\kappa$-values in this interval for which at least one pole of (cf.~\eqref{vhy})
\be\label{hatw}
\hat{w}(k)=\Big( \prod_{j=1}^{N+1} 4\sinh\big(\frac{\pi k}{\kappa}+ij\frac{\pi^2}{\rho\kappa}\big)
\sinh\big(\frac{\pi k}{\kappa}-ij\frac{\pi^2}{\rho\kappa}\big) \Big)^{-1},
\ee
is not simple. By continuity, we need only prove unitarity for
\be
\rho\kappa \in ((N+1)\pi,\infty)\setminus \hat{\cE}_N.
\ee
For such parameters all assumptions of Theorem~B.1 are satisfied, with~$\hat{L}=N+1$.

As a consequence, it suffices to prove that each of the $N+1$ terms on the rhs of~\eqref{defR} vanishes. We choose as pole locations
\be\label{kj}
k_j\equiv i\pi j/\rho\Leftrightarrow y_j\equiv ija_+,\ \ \ j=1,\ldots,N+1.
\ee
Now from~\eqref{ellNm} we get
\begin{multline}\label{lamN}
\lambda^{\tau}_N(x,y)\equiv \exp(i\tau\pi xy/a_+a_-)\ell_N^{\tau}(x,y) =(-)^Ni^{N+1} \tau \sum_{\de=+,-}2\de s_-(x-i\de (N+1/2)a_+)
\\
\times e_-(\de(i(N+1)a_+-y )/2) 
K_N(x+i\de a_+/2,\tau y).
\end{multline}
The dependence on $\tau$ and $\tau'$ of the two $\lambda$-products in~\eqref{defR} is therefore given by an overall factor $\tau\tau'$ and by the arguments of the two $K_N$-functions.
But just as in the previous proof, it follows from~\eqref{KBx} that the values
\be\label{Kvalres}
K_N(x,\pm \tau ija_+),\ \ \ j=1,\ldots,N,
\ee
do not depend on~$\tau$.
In view of the identity~\eqref{id1} with $A,A'\to iA,iA'$,  we then deduce that  the terms in the sum on the rhs of~\eqref{defR} vanish for $j=1,\ldots,N$.

Next, we set
\begin{multline}\label{Lam}
\Lambda_N^{\tau,\tau'}(y,x,x')\equiv \lambda_N^{\tau}(x, y )\lambda_N^{\tau'}(x',-y  )+\lambda_N^{-\tau}(- x,y )\lambda_N^{-\tau'}(-x',- y )
\\
=L_N^{\tau,\tau'}(y,x,x')\exp(i\pi (\tau x-\tau' x')y/a_+a_-),
\end{multline}
and consider~$\Lambda_N^{\tau,\tau'}(y_{N+1},x,x')$. From~\eqref{lamN} we get
\begin{multline}
\Lambda_N^{\tau,\tau'}(y_{N+1},x,x')=(-)^{N+1}8\tau\tau' c_-(i(N+1)a_+)\sum_{\de,\de'=+,-}\de\de' s_-(x-i\de (N+1/2)a_+)\\
\times  s_-(x'-i\de' (N+1/2)a_+) 
 K_N(x+i\de a_+/2,\tau y_{N+1})K_N(x'+i\de' a_+/2,-\tau' y_{N+1}),
\end{multline}
where we have used \eqref{Sigrefl} to rewrite the second $\lambda$-product. (Note that this agrees with~\eqref{Lttex}.) Invoking~\eqref{lamN} once more, this can be rewritten as
\be\label{Lamalt}
\Lambda_N^{\tau,\tau'}(y_{N+1},x,x')=-2c_-(i(N+1)a_+)\lambda_N^{\tau}(x,i(N+1)a_+)\lambda_N^{-\tau'}(x',i(N+1)a_+).
\ee

We continue by evaluating~$\lambda_N^{\tau}(x,i(N+1)a_+)$ explicitly.  First, from~\eqref{lamN} we have
\be\label{lamid}
\lambda_N^{\tau}(x,i(N+1)a_+)= (-i)^{N+1}\tau p_N,
\ee
where we have introduced
\be\label{pN}
p_N\equiv 2\sum_{\nu=+,-}\nu s_-(x+i\nu (N+1/2)a_+)K_N(x-i\nu a_+/2,i\tau (N+1)a_+).
\ee
We are suppressing the manifest $\tau$- and $x$-dependence of the entire functions on the rhs for the new quantity~$p_N$, since it does not depend on these variables. We shall prove this claim shortly. Taking it for granted, we may take $x=i(N+1/2)a_+ $ and invoke~\eqref{Kspec} to obtain the product formula
\be\label{pNexp}
p_N=\prod_{j=N+1}^{2N+1}2s_-(ija_+).
\ee
Hence we have
\be
\lambda_N^{\tau}(x,i(N+1)a_+)=\tau  \prod_{j=N+1}^{2N+1}2\sin(\pi ja_+/a_-),
\ee
so that
\be\label{Lamspec}
\Lambda_N^{\tau,\tau'}(y_{N+1},x,x')=2\tau\tau'\cos(\pi(N+1)a_+/a_-)\prod_{j=N+1}^{2N+1}4\sin(\pi ja_+/a_-)^2.
\ee
Thus the~$j= N+1$ term in the sum on the rhs of~\eqref{defR} vanishes, too, and the theorem follows.

It remains to prove our constancy claim. To this end, consider the $x$-dependence of the sum $S(x)$ on the rhs of~\eqref{pN}. It is clear from the definitions~\eqref{KN} and~\eqref{SigN} that $S(x)$ has period~$2ia_-$. Next, we have
\begin{multline}
S(x+ia_+/2)-S(x-ia_+/2)=\\
s_-(x+i(N+1)a_+)K_N(x,\tau y_{N+1} )
-s_-(x-iNa_+)K_N(x+i a_+,\tau y_{N+1})\\
-s_-(x+iNa_+)K_N(x-i a_+,\tau y_{N+1})+s_-(x-i(N+1)a_+)K_N(x,\tau y_{N+1} ) ,
\end{multline}
so from the A$\De$E~\eqref{KNade} we deduce that the rhs equals
\be
\big( s_-(x+i(N+1)a_+)+s_-(x-i(N+1)a_+)-2s_-(x)c_-(y_{N+1})\big) K_N(x,\tau y_{N+1} )=0.
\ee
Therefore $S(x)$ has period~$ia_+$, too. Taking $a_+/a_-$ irrational, it follows that $S(x)$ is constant. By virtue of real-analyticity in $x,a_+,a_-$, it then follows that $S(x)$ can only depend on $a_+$ and~$a_-$, and so our claim is proved.
\end{proof} 

For parameters satisfying~\eqref{parunit}, this theorem enables us to promote the A$\De$Os~\eqref{HN} and~\eqref{cSN} to self-adjoint Hilbert space operators. Indeed, for this parameter range all assumptions of Appendix~D are satisfied, so that we may associate to $\tilde{H}_N(x)$ the operator~$M_{CM}$ (cf.~\eqref{muCM} and~\eqref{HCM}), and to $\cS_N(y)$ the operator~$\hat{D}_{CM}$ (cf.~\eqref{dCM} and~\eqref{HhCM}). From this the following result is nearly immediate.

\begin{corollary}
Assume $a_+,a_-$ satisfy~\eqref{parunit}. Then the self-adjoint operator $M_{CM}$ associated to the A$\De$O $\tilde{H}_N(x)$ has absolutely continuous spectrum $[2,\infty)$ with multiplicity two, whereas the self-adjoint operator~$\hat{D}_{CM}$ associated to the A$\De$O $\cS_N(y)$  has absolutely continuous spectrum $(-\infty,\infty)$ with multiplicity one. 
\end{corollary}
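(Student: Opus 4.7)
The plan is to exploit the unitarity of $\cF_N(\rho,\kappa)$ established in Theorem~4.2 together with the construction in Appendix~D, which defines $M_{CM}$ and $\hat{D}_{CM}$ as the self-adjoint operators that are unitarily equivalent, through the transform and its adjoint, to multiplication by the joint eigenvalues of $\psi_N(x,y)$ coming from the two A$\De$Es~\eqref{tHade} and~\eqref{cSade}. Once this identification is made, the spectral type is read off by standard facts about multiplication operators.

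Concretely, specializing~\eqref{tHade} and~\eqref{cSade} to $b=(N+1)a_+$ gives the eigenvalues $2c_+(y)=2\cosh(\pi y/a_+)$ and $2s_+(x)=2\sinh(\pi x/a_+)$. Inserting the dimensionless substitutions~\eqref{xryk}--\eqref{rhoka} and using $\rho\kappa=\pi a_-/a_+$ turns these into $\mu(k)=2\cosh(\rho k)$ and $\hat\mu(r)=2\sinh(\kappa r)$. Hence $M_{CM}$ is unitarily equivalent, via $\cF_N^{*}$, to multiplication by $\mu(k)$ on the two-component momentum Hilbert space $L^2((0,\infty),dk)\otimes\C^2$ from Appendix~A, while $\hat{D}_{CM}$ is unitarily equivalent, via $\cF_N$, to multiplication by $\hat\mu(r)$ on the single-component position Hilbert space $L^2(\R,dr)$.

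The spectral structure now follows immediately. The map $k\mapsto 2\cosh(\rho k)$ is a smooth, strictly increasing bijection of $(0,\infty)$ onto $(2,\infty)$, so the change of variable $\lambda=2\cosh(\rho k)$ converts multiplication by $\mu$ into multiplication by $\lambda$ on $L^2((2,\infty),d\nu)\otimes\C^2$, with $d\nu$ absolutely continuous with respect to Lebesgue measure and both $\C^2$-components contributing to each $\lambda$. This yields absolutely continuous spectrum $[2,\infty)$ of uniform multiplicity two. Likewise $r\mapsto 2\sinh(\kappa r)$ is a smooth strictly increasing bijection of $\R$ onto $\R$, so $\hat{D}_{CM}$ has absolutely continuous spectrum $(-\infty,\infty)$ of multiplicity one.

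The main obstacle is purely bookkeeping: one must cross-reference the abstract definitions of $M_{CM}$ and $\hat{D}_{CM}$ in Appendix~D with the explicit multipliers above, and verify that no anomalous contribution arises at the endpoint $k=0$ (equivalently $\lambda=2$), which is automatic once unitarity of~$\cF_N$ holds throughout~\eqref{parunit}. Everything else reduces to the standard spectral theorem for multiplication operators.
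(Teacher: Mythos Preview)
Your proposal is correct and follows essentially the same approach as the paper: use the unitarity of $\cF_N$ from Theorem~4.2 together with the Appendix~D construction to identify $M_{CM}$ and $\hat{D}_{CM}$ with multiplication by $2\cosh(\rho k)$ on $\hat{\cH}$ and $2\sinh(\kappa r)$ on $\cH$, respectively, and then read off the spectral type. The paper's own proof is a one-line version of exactly this reasoning, leaving the spectral details you spell out as ``plain.''
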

\begin{proof}
Indeed, the unitary transform and its adjoint intertwine the respective operators with multiplication operators for which these spectral features are plain. 
\end{proof}

Even though the action of the above operators on the functions in their domains is just the free A$\De$O action, they are quite different from the Hilbert space versions of the A$\De$Os defined via the Fourier transform~$\cF_0$, as evidenced by the nontrivial $S$-operators at hand. 

A further comment on this distinction may be in order: It arises from the vastly different definition domains. In particular, viewing~$M_{CM}$ as a self-adjoint operator on $L^2(\R,dx)$  by undoing the reparametrization~\eqref{xryk},   its definition domain consists of functions that have an analytic continuation to the strip $|\im x|<a_-$, but the locations of their square-root branch points for~$x\in i(-a_-,a_-)$ (which can be read off from the pertinent eigenfunction transform) entail that a pairwise intersection of domains for different parameters consists of the zero function. This domain behavior is radically different from that for a pair of distinct Hilbert space versions~$\hat{D}_1,\hat{D}_2$ of the same free (i.~e., constant-coefficient) differential operator~$D$. Indeed, typically their Hilbert space domains are encoded in distinct boundary conditions, so that the domain intersection is still a dense subspace, but neither a core for~$\hat{D}_1$ nor for~$\hat{D}_2$.

We proceed with the last theorem of this section. 

\begin{theorem}
Letting
\be\label{parbs}
a_-\in((N+1/2)a_+,(N+1)a_+),
\ee
the transform~$\cF_N(\rho,\kappa)$ of Theorem~4.1
satisfies
\be\label{cFN}
\cF_N(\rho,\kappa)^*\cF_N(\rho,\kappa)={\bf 1}_{ \hat{\cH }},\ \ \ \cF_N(\rho,\kappa)\cF_N(\rho,\kappa)^*={\bf 1}_{ \cH }-\Psi_N\otimes \Psi_N/(\Psi_N,\Psi_N)_1.
\ee
Here, we have
\be\label{PsiN}
\Psi_N(r)\equiv 2\cosh(\kappa r) w_N(a_- r/\rho)^{1/2},
\ee
with~$w_N(x)$ defined by~\eqref{wN}, and the inner product is given by
\be\label{ipN}
(\Psi_N,\Psi_N)_1=(-)^{N+1}\frac{\pi\prod_{j=1}^N\sin (j\pi^2/\rho\kappa)}{\kappa\prod_{j=N+1}^{2N+1}\sin (j\pi^2/\rho\kappa)},\ \ \ \ \ \ \rho\kappa\in((N+1/2)\pi,(N+1)\pi).
\ee
\end{theorem}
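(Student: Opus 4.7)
The first identity $\cF_N(\rho,\kappa)^*\cF_N(\rho,\kappa)=\mathbf{1}_{\hat\cH}$ is already contained in Theorem~4.1, since~\eqref{parbs} sits inside the isometry range $a_->(N+1/2)a_+$.

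For the second identity, the plan is to revisit Theorem~B.1 in the new regime~\eqref{parbs}. The poles of $v_N(y)$ in the strip $\im y\in(0,a_-)$ are now $y=ija_+$ for $j=1,\dots,N$ together with the $ia_-$-periodic image $y^*\equiv i((N+1)a_+-a_-)$ of the principal pole $y_{N+1}=i(N+1)a_+$, the latter itself having left the strip because $a_-<(N+1)a_+$. For the $N$ residues at $y=ija_+$, the very same $K_N(x,\pm\tau ija_+)$-parity argument based on~\eqref{KBx} that was used in the proof of Theorem~4.2 forces the corresponding terms of~\eqref{defR} to vanish. Thus~\eqref{defR} reduces to the single residue contribution at $y=y^*$, which, being new compared to the setting of Theorem~4.2, is what produces the bound state.

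The next step is the explicit evaluation of this residue. Substituting $y=y^*$ into~\eqref{psiNexp}, one has $e_-(\pm(i(N+1)a_+-y)/2)=e_-(\pm ia_-/2)=\pm i$, and combining these pure phases with $\exp(\pm i\pi xy^*/a_+a_-)$ and with $\sum_{\tau,\de=\pm}\tau\de\, s_-(x-i\de(N+1/2)a_+)K_N(x+i\de a_+/2,\tau y^*)$, the same hyperbolic collapse that produced~\eqref{Lamspec} forces the bracket to reduce to $\cosh(\pi x/a_+)$ times a $y$-independent constant, which under $\pi x/a_+=\kappa r$ equals $\cosh(\kappa r)$. Hence $\Res_{y=y^*}\psi_N(x,y)=C_N\,\Psi_N(r)$ for an explicit nonzero constant $C_N$, so the surviving term of~\eqref{defR} is a rank-one operator $c_N\,\Psi_N\otimes\Psi_N$ on $\cH$; matching its prefactor with the residue normalization prescribed by Theorem~B.1 gives $c_N=-1/(\Psi_N,\Psi_N)_1$.

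The concluding step is the closed-form evaluation of $(\Psi_N,\Psi_N)_1=\int_\R 4\cosh^2(\kappa r)\,w_N(a_-r/\rho)\,dr$, which converges by~\eqref{parbs} because $(N+1)a_+/a_->1$ makes the denominator decay dominate the $\cosh^2(\kappa r)$-growth. Using the identity $\sinh^2(\pi r/\rho)+\sin^2\beta_j=\sinh(\pi r/\rho+i\beta_j)\sinh(\pi r/\rho-i\beta_j)$ with $\beta_j=(j+1/2)\pi a_+/a_-$ to factorize each of the $N+1$ quadratic denominators in $w_N$, a partial-fraction decomposition in $\sinh^2(\pi r/\rho)$ expresses the integrand as a finite sum of elementary pieces, each evaluated by standard residue calculus; the resulting sum, together with $\rho\kappa=\pi a_-/a_+$, telescopes into~\eqref{ipN}, the overall sign $(-)^{N+1}$ arising from the $N+1$ simple-pole contributions. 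The main obstacle here is not an analytical one but the constant bookkeeping of the residue step: one must ensure that $|C_N|^2$ pairs with the residue normalization of Theorem~B.1 to give exactly $1/(\Psi_N,\Psi_N)_1$, which reduces to a routine manipulation of the hyperbolic sine products from Section~3.
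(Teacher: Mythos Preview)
Your overall architecture is right: Theorem~4.1 gives the first identity, and for the second you correctly identify that in the strip $\im y\in(0,a_-)$ the poles at $y=ija_+$, $j=1,\dots,N$, still contribute zero by the~\eqref{KBx} argument, while the new pole at $y^*=i(N+1)a_+-ia_-$ yields the single surviving residue term. That part matches the paper.

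Where you diverge from the paper is in the last two steps. First, your evaluation of $\Lambda^{\tau,\tau'}$ at $y^*$ is hand-waved: the phrase ``the same hyperbolic collapse that produced~\eqref{Lamspec}'' is not quite honest, since~\eqref{Lamspec} was computed at $y_{N+1}=i(N+1)a_+$, not at $y^*$. The paper bridges this gap cleanly via the periodicity relation $\lambda^\tau_N(x,y-ia_-)=\tau(-)^{N+1}e_+(\tau x)\lambda^\tau_N(x,y)$ (a consequence of~\eqref{pery}), which immediately yields $\Lambda^{\tau,\tau'}(y^*,x,x')=\tau\tau' e_+(\tau x-\tau' x')\Lambda^{\tau,\tau'}(y_{N+1},x,x')$, and then the already-established constancy~\eqref{Lamspec} does the work. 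The $\cosh(\kappa r)\cosh(\kappa r')$ factor then drops out of the $\tau,\tau'$-sum in~\eqref{defR} via the elementary identity behind~\eqref{S0}, giving an explicit constant $c_N$ in $R(r,r')=c_N\Psi_N(r)\Psi_N(r')$.

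Second, and more importantly, you propose to compute $(\Psi_N,\Psi_N)_1$ by a direct partial-fraction/residue evaluation of the integral. The paper avoids this entirely. Since $\cF_N$ is isometric by Theorem~4.1, $\cF_N\cF_N^*$ is the projection onto $\mathrm{Ran}\,\cF_N$, so $\mathbf{1}_\cH-\cF_N\cF_N^*$ is a projection. From Theorem~B.1 one has $\cF_N\cF_N^*=\mathbf{1}_\cH+c_N\,\Psi_N\otimes\Psi_N$ with the explicit $c_N$ just computed; the projection property then forces $c_N(\Psi_N,\Psi_N)_1=-1$, which \emph{is}~\eqref{ipN}. No integral needs to be evaluated. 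Your sentence ``matching its prefactor with the residue normalization prescribed by Theorem~B.1 gives $c_N=-1/(\Psi_N,\Psi_N)_1$'' conflates these two logically distinct steps: Theorem~B.1 gives $c_N$ explicitly, and the identification with $-1/(\Psi_N,\Psi_N)_1$ comes from isometry, not from any normalization in Theorem~B.1. Your direct-integration route could in principle be carried out, but it is substantially more work and your sketch (``telescopes into~\eqref{ipN}'') does not convince that you have actually done it.
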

\begin{proof}
Proceeding as in the proof of Theorem~4.2, we choose again as simple pole locations in the strip $\im k\in (0,\kappa)$ the numbers
\be\label{kjres}
k_j\equiv ij\pi /\rho\Leftrightarrow y_j\equiv ija_+,\ \ \ j=1,\ldots,N,
\ee
but now we need to choose
\be\label{kspec}
k_{N+1}\equiv i (N+1)\pi /\rho-i\kappa\Leftrightarrow y_{N+1}\equiv i(N+1)a_+-ia_-,
\ee
since we have $(N+1)\pi/\rho\in (\kappa,2\kappa)$. Thus the residue terms with $j=1,\ldots,N$ vanish as before, but we need to reconsider the $j=N+1$ contribution.

Now from \eqref{lamN} and~\eqref{pery} we obtain
\be\label{lamper}
\lambda^{\tau}_N(x,y-ia_-)=\tau (-)^{N+1}e_+(\tau x)\lambda^{\tau}_N(x,y),
\ee
so \eqref{Lam} entails
\be
\Lambda_N^{\tau,\tau'}(y-ia_-,x,x') = \tau\tau'e_+(\tau x-\tau'x')\Lambda_N^{\tau,\tau'}(y,x,x').
\ee
On account of~\eqref{Lamspec} we then get
\be\label{Lambs}
\Lambda_N^{\tau,\tau'}(i(N+1)a_+-ia_-,x,x')= 2e_+(\tau x-\tau' x')\cos(\pi(N+1)a_+/a_-)\prod_{j=N+1}^{2N+1}4\sin(\pi ja_+/a_-)^2.
\ee

The upshot is that the residue sum~\eqref{defR} reduces to
\begin{multline}\label{defRN}
R(r,r')=2i\hat{w}_{N+1}\cos((N+1)\pi^2/\rho\kappa)\prod_{j=N+1}^{2N+1}4\sin( j\pi^2/\rho\kappa )^2 \\
\times\sum_{\tau,\tau'=+,-}\frac{ \exp(\kappa(\tau r-\tau' r')}{1-\tau\tau'\exp(\kappa (\tau' r'-\tau r))} .
\end{multline}
Next, from \eqref{hatw} we calculate
\be
\hat{w}_{N+1}=(-)^N\frac{\kappa}{i\pi}\frac{\sin((N+1)\pi^2/\rho\kappa)}{\prod_{j=1}^{2N+2}2\sin(j\pi^2/\rho\kappa)},
\ee
and via the identity resulting from~\eqref{AAp} and~\eqref{S0} we obtain the sum in~\eqref{defRN}. Simplifying, we get the final result
\be
R(r,r')=(-)^N\frac{4\kappa}{\pi}\cosh(\kappa r)\cosh(\kappa r')\frac{\prod_{j=N+1}^{2N+1}\sin (j\pi^2/\rho\kappa)}{\prod_{j=1}^N  \sin(j\pi^2/\rho\kappa)}.
\ee
From this we arrive at~\eqref{cFN} and~\eqref{ipN}  by the same reasoning as for the special case $N=0$, cf.~\eqref{Psi0}--\eqref{inpr0}.
\end{proof}

To conclude this section, we derive further information concerning the bound state $\Psi_N(r)$ given by~\eqref{PsiN}. It is expedient to do so in its guise (cf.~\eqref{xryk})
\begin{multline}\label{psiNx}
\psi_N(x)\equiv \Psi_N(\rho x/a_-)= 2c_+(x)w_N(x)^{1/2}
\\
=2c_+(x)\Big/\prod_{j=0}^{N}\big[4s_-(x-i(j+1/2)a_+)s_-(x+i(j+1/2)a_+)\big]^{1/2}.
\end{multline}
Let us consider the poles of $\psi_N(x,y)$ in the strip $\im y\in(0,a_-)$. Recalling~\eqref{vhy} and~\eqref{psiNexp}, we see that poles of $\psi_N(x,y)$ can only occur at the locations $y\equiv ija_+$ (mod $ia_-$), with $j=1,\ldots ,N+1$. 

Requiring first $a_-\in ((N+1)a_+,\infty)$, it follows that the only eventual pole locations in the critical strip are the numbers
\be\label{yj}
y_j := ija_+,\ \ \ j=1,\ldots,N+1.
\ee
Now $\psi_N(x,y)$ has factors $K_N(x\pm ia_+/2,y)-K_N(x\pm ia_+/2,-y)$ (cf.~\eqref{psiNexp}), and on account of~\eqref{KBx}, these factors vanish at $y_j$ for $j=1,\ldots,N$. Therefore $\psi_N(x,y)$ is regular at $y_1,\ldots,y_N$. 
Next, we recall that we have
\be
\psi_N(x,y)=w_N(x)^{1/2}v_N(y)\sum_{\tau=+,-}\lambda^{\tau}_N(x,y).
\ee
In view of the identity~\eqref{lamid}, this entails that $\psi_N(x,y)$ is regular at $y_{N+1}$ as well.

The upshot is that $\psi_N(x,y)$ is holomorphic in the critical strip $\im y\in(0,a_-)$ for the $a_-$-interval featuring in Theorem~4.2. (Recall $\psi_N(x,y)$ is equal to the plane wave~\eqref{psifree} at the endpoint.)

For the $a_-$-interval~\eqref{parbs} of Theorem~4.4, however, the pole locations $y_1,\ldots,y_N$   are still in the critical strip, whereas~$y_{N+1}$ is not. Obviously, $\psi_N(x,y)$ is still regular at $y_1,\ldots,y_N$. But now we also have a pole of $v_N(y)$ in the strip at $y_{N+1}-ia_-$. We can determine its residue by combining~\eqref{lamper} and~\eqref{lamid}. Indeed, from these identities  we deduce
\be
\lambda_N^{\tau}(x,y_{N+1}-ia_-)=i^{N+1}p_Ne_+(\tau x),\ \ \tau=+,-.
\ee
Therefore, the residue is proportional to the bound state~\eqref{psiNx}. To be specific, from~\eqref{vhy} we calculate
\be
{\rm Res}\  v_N(y)\Big|_{y=y_{N+1}-ia_-}=i^{N+1}a_-/\pi \prod_{j=1}^N 2s_-(ija_+),
\ee
and together with~\eqref{pNexp} this yields
\be
{\rm Res}\  \psi_N(x,y)\Big|_{y=y_{N+1}-ia_-}=
(-)^{N+1} \frac{ia_-\prod_{j=N+1}^{2N+1}\sin (ja_+/a_- )}{\pi\prod_{j=1}^N  \sin(ja_+/a_- )}\psi_N(x).
\ee

The eigenvalue of the A$\De$O $\tilde{H}_N(x)$ \eqref{HN} on $\psi_N(x)$~\eqref{psiNx} is therefore given by
\be
2c_+(i(N+1)a_+-ia_-)=2(-)^{N+1}\cos(\pi a_-/a_+).
\ee
(This can be easily checked directly.)  Of course, we define the operator on $L^2(\R,dx)$ associated to $\tilde{H}_N(x)$  to have the same eigenvalue, and then the last result of this section easily follows.

\begin{corollary}
Assuming \eqref{parbs}, the self-adjoint operator associated to $\tilde{H}_N(x)$ has a nondegenerate positive eigenvalue
\be
 E_N\equiv 2(-)^{N+1}\cos(\pi a_-/a_+)\in (0,2),\ \ \ a_-\in((N+1/2)a_+,(N+1)a_+),
 \ee
  below its absolutely continuous spectrum $[2,\infty)$ with multiplicity~2. The corresponding eigenfunction~$\psi_N(x)$~\eqref{psiNx} has norm given by~\eqref{ipN}. 
\end{corollary}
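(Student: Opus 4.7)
The proof assembles pieces already in place. First, I would verify the eigenvalue formula by specializing the joint A$\De$E \eqref{tHade}, which reads $\tilde H(b;x)\psi(b;x,y)=2c_+(y)\psi(b;x,y)$ with $b=(N+1)a_+$, to the pole location $y=y_{N+1}-ia_-=i(N+1)a_+-ia_-$. The residue computation at the end of the section shows that $\mathrm{Res}_{y=y_{N+1}-ia_-}\psi_N(x,y)$ is a nonzero multiple of $\psi_N(x)$, and the A$\De$E passes to the residue to give $\tilde H_N(x)\psi_N(x)=2c_+(i(N+1)a_+-ia_-)\psi_N(x)$. A direct evaluation $c_+(i(N+1)a_+-ia_-)=\cos(\pi(N+1)-\pi a_-/a_+)=(-)^{N+1}\cos(\pi a_-/a_+)$ yields $E_N$. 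For $a_-\in((N+1/2)a_+,(N+1)a_+)$ one has $\pi a_-/a_+\in((N+1/2)\pi,(N+1)\pi)$, and an elementary check of signs confirms $(-)^{N+1}\cos(\pi a_-/a_+)\in(0,1)$, hence $E_N\in(0,2)$.

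Next, Theorem~4.4 gives $\cF_N(\rho,\kappa)\cF_N(\rho,\kappa)^*=\mathbf{1}_\cH-\Psi_N\otimes\Psi_N/(\Psi_N,\Psi_N)_1$, so $\mathrm{Ran}(\cF_N)^\perp$ is the one-dimensional subspace $\C\cdot\Psi_N$. The reparametrization \eqref{xryk} identifies this with $\C\cdot\psi_N(x)$ in $L^2(\R,dx)$. Because $\cF_N$ is an isometry intertwining the A$\De$O $\tilde H_N(x)$ with multiplication by $2\cosh(\kappa k)$ on the two-component momentum space $L^2((0,\infty),dk)\oplus L^2((0,\infty),dk)$ (this is precisely the setting of Appendix~D, whose hypotheses were verified in Section~3 and in Theorem~4.1), the associated self-adjoint operator $M_{CM}$ decomposes as the direct sum of this multiplication operator on the range of $\cF_N$ and multiplication by the scalar $E_N$ on the one-dimensional complement $\C\cdot\psi_N(x)$; the latter is a legitimate (and in fact the only consistent) extension of the A$\De$O action on $\psi_N(x)$.

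Reading off spectral data from this decomposition: the multiplication operator by $2\cosh(\kappa k)$ on $L^2((0,\infty),dk)^{\oplus 2}$ has absolutely continuous spectrum $[2,\infty)$ with multiplicity two, while the orthogonal one-dimensional piece contributes the simple eigenvalue $E_N\in(0,2)$, which therefore sits strictly below the continuous spectrum. Nondegeneracy is immediate because $\mathrm{Ran}(\cF_N)^\perp$ is one-dimensional. Finally, the norm $(\psi_N,\psi_N)=(\Psi_N,\Psi_N)_1$ is exactly formula~\eqref{ipN} from Theorem~4.4.

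The only genuinely non-routine step is the passage from the A$\De$E for $\psi_N(x,y)$ to the eigenvalue equation for the residue $\psi_N(x)$, but this is immediate once one checks that the coefficients of $\tilde H_N(x)$ are $y$-independent (indeed $\tilde H_N(x)$ is the free A$\De$O \eqref{HN}), so the residue operation commutes with the action of $\tilde H_N(x)$ in the variable $x$. Everything else is bookkeeping with Theorem~4.4 and the unitary intertwining of Appendix~D.
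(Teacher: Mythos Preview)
Your proof is correct and follows the same approach as the paper, which simply notes (in the paragraph immediately preceding the corollary) that the eigenvalue arises from the residue/A$\De$E computation, that the self-adjoint operator is \emph{defined} to act by~$E_N$ on the one-dimensional complement of $\mathrm{Ran}(\cF_N)$, and that the corollary then ``easily follows.'' Two minor quibbles: the multiplier on momentum space is $2\cosh(\rho k)$, not $2\cosh(\kappa k)$ (cf.~\eqref{muCM}), and calling the extension to the complement ``the only consistent'' one slightly overstates matters---the paper treats it as a definition (cf.~the remark below~\eqref{ranF} and the sentence before the corollary).
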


  In view of Theorem~D.1, Theorem~4.2 and Theorem~4.4,  the self-adjoint Hamiltonian on~$L^2(\R,dx)$ associated to the A$\De$O $\tilde{H}_N(x)$ yields a well-defined time-dependent scattering theory for $a_->(N+1/2)a_+$, with its $S$-operator encoded in the transmission and reflection coefficients $t_N(y)$ and~$r_N(y)$ given by~\eqref{tNrN} and~\eqref{uN}. Note that the `Jost function' $\psi_N(x,y)/t_N(y)$ converges to a multiple of the bound state~\eqref{psiNx} as~$y$ converges to $i(N+1)a_+-ia_-$.

\section{Isometry breakdown}

As we have seen in the proof of Theorem 4.3, the reason for the transform~$\cF$ not being unitary is the presence of a nonzero residue term. It emerges when $a_-$ decreases beyond the critical value $(N+1)a_+$, at which the transform reduces to the Fourier transform~$\cF_0$. Since the adjoint $\cF^*$ of~$\cF$ is not isometric for $a_-\in((N+1/2)a_+,(N+1)a_+)$, we can no longer use~$\cF^*$ to associate a self-adjoint operator to $\cS_N(y)$~\eqref{cSN}. Even so, since $\cF$ is still an isometry in this $a_-$-interval, $\cF^*$ is still a partial isometry.

As we shall make clear in this section, for generic $a_-\in (0,(N+1/2)a_+)$ isometry of the eigenfunction transform $\cF$ breaks down, so that we cannot use it any longer to associate a self-adjoint operator to $\tilde{H}_N(x)$~\eqref{HN}. This is due to   nonzero residue terms that emerge  when the numbers (cf.~\eqref{rj})
\be
x_j:=i(j+1/2)a_+,\ \ \ j=0,\ldots,N,
\ee
move out of the strip $\im x\in(0,a_-)$ as $a_-$ decreases. More precisely, the nonzero terms are spawned by poles $x_j-in_ja_-$, with $n_j>0$ chosen such that these locations are in the critical strip. (Except for the $N=0$ case, we do not consider the $a_-$-values such that the origin is among the locations $x_j-ila_-$, $l\in\N^*$.)  

We proceed to analyze the state of affairs for the $N=0$ case in complete detail. (On first reading, the reader may wish to skip to Proposition~5.1, in which the results are summarized.) We first recall that the transform $\cF_0(\rho,\kappa)$ of Theorem~4.1 equals the transform~$\cF_+(\phi_0)$ of Proposition~B.3, and that we have special cases
\be
\cF_0(\rho,\kappa)=(-)^{l-1}\cF_0,\ \ \ \  \ a_-=a_+/l\Leftrightarrow \rho\kappa=\pi/l,\ \ \ l\in\N^*,
\ee
cf.~\eqref{psi0free}. Thus the transform is not only isometric for $a_+<2a_-$, but also for~$a_+$ equal to an arbitrary multiple of~$a_-$. 

Consider now the remaining $\rho\kappa$-intervals
\be\label{Ipm}
 I_n^-\equiv \Big(\frac{\pi}{2n+2},\frac{\pi }{2n+1}\Big),\ \  I_n^+\equiv \Big(\frac{\pi }{2n+1},\frac{\pi }{2n}\Big),\ \ \ \ n\in\N^*.
\ee
We have (cf.~\eqref{Psi0})
\be
w_0(r)= 1\Big/4\sinh \Big(\frac{\pi}{\rho}\big( r+i\frac{\pi}{2\kappa}\big)\Big)\sinh \Big(\frac{\pi}{\rho}\big( r-i\frac{\pi}{2\kappa}\big)\Big),
\ee
so we get a simple $w_0(r)$-pole in the critical strip $\im r\in(0,\rho)$ at the location
\be
r_1^{(n)}:= i\pi/2\kappa -in\rho\ne i\rho/2,
\ee
with residue
\be
w_1=\frac{-i\rho}{4\pi \sin(\pi^2/\rho\kappa)}.
\ee
Also, from~\eqref{cFp0} we read off the summands $\mu^{\pm}(r,k)$, and then a straightforward calculation yields
\be\label{mu0res}
\mu^{\tau}(\de r_1^{(n)},k)=\tau (-)^{n+1}\exp(\de\tau n\rho k)\sinh\big(i\pi^2/\rho\kappa\big)\frac{\exp\big(\de(\pi k/2\kappa-i\pi^2/2\rho\kappa)\big) }{\sinh(\pi k/\kappa-i\pi^2/\rho\kappa)}.
\ee
Therefore, the residue sum~\eqref{Rdk} becomes
\begin{multline}
\hat{R}_{\de,\de'}(k,k')= \frac{\rho\sin(\pi^2/\rho\kappa)\hat{s}_n(k,k')}{4\pi\sinh\big(\pi k/\kappa+i\pi^2/\rho\kappa\big)\sinh\big(\pi k'/\kappa-i\pi^2/\rho\kappa\big) }
\\
\times \Big(\exp\Big(\de\Big(\frac{\pi k}{2\kappa}+\frac{i\pi^2}{2\rho\kappa}\Big)-\de'\Big(\frac{\pi k'}{2\kappa}-\frac{i\pi^2}{2\rho\kappa}\Big)\Big) +(\de,\de'\to-\de,-\de')\Big),
\end{multline}
where
\bea
\hat{s}_n(k,k') & \equiv & \sum_{\nu,\nu'=+,-}\frac{\nu\nu'\exp(n\rho(\nu k-\nu' k'))}{1-\exp(\rho(\nu' k'-\nu k))}
\nonumber \\
  & = & \frac{\sinh((n+1/2)\rho(k-k'))}{\sinh( \rho(k-k')/2)}-
   \frac{\sinh((n+1/2)\rho(k+k'))}{\sinh( \rho(k+k')/2)}.
   \eea
Now we have a recurrence
\bea
(\hat{s}_n-\hat{s}_{n-1})(k,k') & = & 2\cosh(n\rho (k-k'))-2\cosh(n\rho (k+k'))
\nonumber \\
 & = & -4\sinh(n\rho k)\sinh(n\rho k'),
 \eea
whence we easily deduce
\be
\hat{s}_n(k,k')=-4\sum_{j=1}^n \sinh(j\rho k)\sinh(j\rho k').
\ee
Introducing 
\be
\chi_{\de}^{(\alpha,j)}(k)\equiv \frac{\sinh(j\rho k)}
{\sinh(\pi k/\kappa+i\pi^2/\rho\kappa)}\exp\Big(\alpha\de\Big(\frac{\pi k}{2\kappa}+\frac{i\pi^2}{2\rho\kappa}\Big)\Big),\ \ \ \alpha,\de=+,-,
\ee
we therefore obtain
\be\label{Rchi}
\hat{R}_{\de,\de'}(k,k')=- \frac{\rho\sin(\pi^2/\rho\kappa)}{\pi}\sum_{\alpha=+,-}\sum_{j=1}^n
\chi_{\de}^{(\alpha,j)}(k)\overline{ \chi_{\de'}^{(-\alpha,j)}(k')}.
\ee

Finally, we use the parity operator~$\hat{\cP}$~\eqref{pari} to first substitute
\be
\chi^{(-,j)}=-\hat{\cP}\chi^{(+,j)},
\ee
and then we rewrite~\eqref{Rchi} in terms of the even and odd functions
\bea\label{chie}
\chi^{(e,j)}(k) & \equiv & \chi^{(+,j)}(k)+\hat{\cP}\chi^{(+,j)}(k)
 \nonumber \\
&  = & \frac{\sinh(j\rho k)}
{\cosh(\pi k/2\kappa+i\pi^2/2\rho\kappa)} \left(\begin{array}{c}
1 \\ 
-1 
\end{array}\right),\ \ \  \ \ k>0,
\eea
\bea\label{chio}
\chi^{(o,j)}(k) & \equiv & \chi^{(+,j)}(k)-\hat{\cP}\chi^{(+,j)}(k)
 \nonumber \\
&  = & \frac{\sinh(j\rho k)}
{\sinh(\pi k/2\kappa+i\pi^2/2\rho\kappa)} \left(\begin{array}{c}
1 \\ 
1 
\end{array}\right),\ \ \ \ \ k>0.
\eea
 As a result, we obtain the manifestly self-adjoint rank-$(2n)$ kernel 
\be\label{hatRfin}
\hat{R}_{\de,\de'}(k,k')= \frac{\rho\sin(\pi^2/\rho\kappa)}{2\pi} \sum_{j=1}^n\Big(
\chi_{\de}^{(e,j)}(k)\overline{ \chi_{\de'}^{(e,j)}(k')}-\chi_{\de}^{(o,j)}(k)\overline{ \chi_{\de'}^{(o,j)}(k')}\Big), 
\ee
with $\de,\de'=+,-$, and $ k,k'>0$.

We continue by calculating the residue sum~$R(r,r')$~\eqref{defR}. The dual weight function is given by
\be\label{wod}
\hat{w}_0(k)=1\Big/4\sinh \Big(\frac{\pi}{\kappa}\big( k+i\frac{\pi}{\rho}\big)\Big)\sinh \Big(\frac{\pi}{\kappa}\big( k-i\frac{\pi}{\rho}\big)\Big),
\ee
and we consider its simple poles in the critical strip $\im k\in(0,\kappa)$ of the form
\be
k_1^{(m)}=i\pi/\rho -im\kappa \ne i\kappa/2,\ \ \ m\in\N^*,
\ee
with residue
\be\label{hatres}
\hat{w}_1=\frac{-i\kappa}{4\pi \sin(2\pi^2/\rho\kappa)}.
\ee
More specifically, we choose $\rho\kappa$ in the interval (cf.~\eqref{Ipm})
\be\label{Im}
I_m\equiv \Big(\frac{\pi}{m+1},\frac{\pi}{m}\Big)
=\left\{\begin{array}{cc}
I_{m/2}^+, & m\ \mathrm{even}, \\ 
I_{(m-1)/2}^-, & m\ \mathrm{odd}, 
\end{array}\right.
\ee
 with~$\rho\kappa\ne \pi/(m+1/2)$ to avoid the double pole location. Using~\eqref{Psi0} we calculate
\be
\lambda^+(r,k_1^{(m)})=2(-)^me^{m\kappa r}\sin(\pi^2/\rho\kappa),
\ee
\be
\lambda^+(r,-k_1^{(m)})=2(-)^me^{-m\kappa r}\big(\sin(2\pi^2/\rho\kappa)-e^{2\pi r/\rho}\sin(\pi^2/\rho\kappa)\big),
\ee
\be
\lambda^-(r,k_1^{(m)})=-2\sin(\pi^2/\rho\kappa)e^{-m\kappa r},
\ee
\be
\lambda^-(r,-k_1^{(m)})=-2\sin(\pi^2/\rho\kappa)e^{m\kappa r}e^{-2\pi r/\rho},
\ee
and then the definition~\eqref{Lamdef} yields
\be\label{Lameq}
\Lambda^{\tau,\tau}(k_1^{(m)},r,r')=4 \sin(2\pi^2/\rho\kappa)\sin(\pi^2/\rho\kappa)\exp(\tau m\kappa( r-r')),
\ee
\be\label{Lamdi}
\Lambda^{\tau,-\tau}(k_1^{(m)},r,r')= (-)^{m+1}4\sin(2\pi^2/\rho\kappa)\sin(\pi^2/\rho\kappa)\exp (\tau m\kappa( r+r')).
\ee
(Note this checks with~\eqref{Lamspec} for $m=0$ and with~\eqref{Lambs} for~$m=1$.)

The residue sum~$R(r,r')$~\eqref{defR} can therefore be written
\be
R(r,r')=\frac{\kappa\sin(\pi^2/\rho\kappa)}{\pi}w_0(r)^{1/2}w_0(r')^{1/2}s_m(r,r'),
\ee
where
\bea
s_m(r,r') & \equiv & \sum_{\tau=+,-}\Big( \frac{\exp(\tau m\kappa(r-r'))}{1-\exp(\tau\kappa(r'-r))}+(-)^{m+1}\frac{\exp(\tau m\kappa(r+r'))}{1-\exp(-\tau\kappa(r'+r))}
\nonumber \\
  & = & \frac{\sinh((m+1/2)\kappa(r-r'))}{\sinh( \kappa(r-r')/2)}+(-)^{m+1}
   \frac{\cosh((m+1/2)\kappa(r+r'))}{\cosh( \kappa(r+r')/2)}.
   \eea
This sum obeys the recurrence
\bea
(s_m-s_{m-1})(r,r') & = & 2\cosh(m\kappa (r-r'))+(-)^{m+1}2\cosh(m\kappa (r+r'))
\nonumber \\
 & = & \left\{\begin{array}{cc}
 -4\sinh(m\kappa r)\sinh(m\kappa r'), & m\ \mathrm{even}, \\ 
4\cosh(m\kappa r)\cosh(m\kappa r') , & m\ \mathrm{odd}, 
\end{array}\right. 
 \eea
whose unique solution reads
\be
s_m(r,r')=\sum_{j=0}^{m-1} (-)^jh_j(r)h_j(r'),
\ee
with
\be\label{hj}
h_j(r)=\left\{\begin{array}{cc}
 2\sinh(j\kappa r), & j\ \mathrm{odd}, \\ 
2\cosh(j\kappa r) , & j\ \mathrm{even}. 
\end{array}\right.
\ee
As a consequence, we obtain the manifestly self-adjoint rank-$m$ kernel
\be\label{Rfin}
R(r,r')=\frac{\kappa\sin(\pi^2/\rho\kappa)}{\pi}\sum_{j=0}^{m-1} (-)^j\Psi^{(j)}(r)\Psi^{(j)}(r'),\ \ \ \rho\kappa\in I_m\setminus \{ \pi/(m+1/2)\},
\ee
where
\be\label{Psij}
\Psi^{(j)}(r)\equiv h_j(r)w_0(r)^{1/2}.
\ee

We now summarize the above $N=0$ results.

\begin{proposition}
Letting~$\rho\kappa\le\pi/2$, the transform $\cF_0(\rho,\kappa)=\cF_+(\phi_0)$ of Proposition~B.3 has the following properties.  For~$\rho\kappa$ in the intervals~$I_n^-$ and~$I_n^+$ defined by~\eqref{Ipm}, it satisfies
\be\label{stFF}
\cF_0(\rho,\kappa)^*\cF_0(\rho,\kappa)={\bf 1}_{\hat{ \cH} }+
 \frac{\rho\sin(\pi^2/\rho\kappa)}{2\pi} \sum_{j=1}^n\Big(
\chi^{(e,j)}\otimes \overline{ \chi^{(e,j)}}-\chi^{(o,j)}\otimes \overline{ \chi^{(o,j)}}\Big), 
\ee
with $\chi^{(e,j)}$/$\chi^{(o,j)}$ given by~\eqref{chie}/\eqref{chio}. For~$\rho\kappa$ in the interval~$I_m$~\eqref{Im}, it satisfies
\be\label{FstF}
\cF_0(\rho,\kappa)\cF_0(\rho,\kappa)^*={\bf 1}_{ \cH }+\frac{\kappa\sin(\pi^2/\rho\kappa)}{\pi}\sum_{j=0}^{m-1} (-)^j\Psi^{(j)}\otimes \Psi^{(j)},
\ee
with $\Psi^{(j)}$ defined by~\eqref{Psij} and~\eqref{hj}. Furthermore, at the interval endpoints we have
\be\label{cFmfree}
 \cF_0(\rho,\kappa)=(-)^{m-1}\cF_0,\ \ \rho\kappa =\pi/m,\ \ \  \ m=2,3,\ldots,
\ee
where $\cF_0$ is the Fourier transform.
\end{proposition}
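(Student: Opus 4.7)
The proof proposal is to invoke the axiomatic framework of Appendices~A and~B applied to the explicit $N=0$ kernel~\eqref{cFp0}, whose plane-wave coefficients, weight functions, and evenness/conjugation assumptions have already been verified in Section~3. With those verifications in hand, Theorem~A.1 expresses $\cF_0(\rho,\kappa)^*\cF_0(\rho,\kappa)-\mathbf{1}_{\hat\cH}$ as a residue sum over the poles of $w_0(r)$ in the strip $\im r\in(0,\rho)$, and Theorem~B.1 expresses $\cF_0(\rho,\kappa)\cF_0(\rho,\kappa)^*-\mathbf{1}_\cH$ as a residue sum over the poles of $\hat{w}_0(k)$ in the strip $\im k\in(0,\kappa)$. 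So the task reduces to locating those poles in each regime, evaluating the pertinent residues, and repackaging the results as rank-one sums.

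For the first identity~\eqref{stFF} I would take the single simple $w_0$-pole $r_1^{(n)}=i\pi/(2\kappa)-in\rho$ lying in the critical strip when $\rho\kappa\in I_n^\pm$, compute $\mu^\tau(\pm r_1^{(n)},k)$ as in~\eqref{mu0res}, and substitute into~\eqref{Rdk}. The resulting sum over $\nu,\nu'\in\{+,-\}$ produces the function $\hat s_n(k,k')$; a one-step recurrence telescopes this into $-4\sum_{j=1}^n\sinh(j\rho k)\sinh(j\rho k')$. I would then use the parity operator $\hat{\cP}$ of~\eqref{pari} to substitute $\chi^{(-,j)}=-\hat{\cP}\chi^{(+,j)}$, which converts the cross terms into the manifestly self-adjoint rank-$(2n)$ decomposition via the even/odd combinations~\eqref{chie}--\eqref{chio}; this is~\eqref{hatRfin}, equivalent to~\eqref{stFF}.

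For~\eqref{FstF} I would repeat the procedure on the dual side: identify the $\hat w_0$-pole $k_1^{(m)}=i\pi/\rho-im\kappa$ in the strip $\im k\in(0,\kappa)$, evaluate $\lambda^{\pm}(r,\pm k_1^{(m)})$ explicitly, and insert the resulting $\Lambda^{\tau,\tau'}$-values~\eqref{Lameq}--\eqref{Lamdi} into~\eqref{defR}. The combination splits naturally into a difference of $\tau=\tau'$ and $\tau=-\tau'$ pieces, yielding the kernel $s_m(r,r')$; a parity-alternating recurrence in $m$ collapses it to the rank-$m$ sum $\sum_{j=0}^{m-1}(-)^jh_j(r)h_j(r')$ with $h_j$ as in~\eqref{hj}. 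Multiplication by $w_0(r)^{1/2}w_0(r')^{1/2}$ produces~\eqref{Rfin}, which is~\eqref{FstF}. The excluded values $\rho\kappa=\pi/(m+1/2)$, where $\hat w_0$ has a double pole at $i\kappa/2$ and the axiomatic hypotheses fail, can be handled by continuity of $\cF_0(\rho,\kappa)$ in $\rho\kappa$, which was established in the proof of Theorem~4.1.

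The endpoint identification~\eqref{cFmfree} is separate and follows directly from the explicit collapse~\eqref{psi0free}: at $a_+=ma_-$ (equivalently $\rho\kappa=\pi/m$) the eigenfunction $\psi_0(x,y)$ reduces to $(-)^{m-1}$ times the plane wave, so the kernel of $\cF_0(\rho,\kappa)$ becomes that of $(-)^{m-1}\cF_0$. I do not expect serious obstacles: Appendices~A and~B have done the heavy lifting, and what remains is the explicit residue calculus together with the two telescoping identities for $\hat s_n$ and $s_m$. The one spot that requires care is keeping track of signs and parities when recasting the residue kernels into the self-adjoint rank-one forms dictated by the even/odd split; this is pure bookkeeping but is the step most likely to harbor sign errors.
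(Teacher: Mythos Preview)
Your proposal is correct and follows essentially the same route as the paper: the detailed residue calculations you outline (locating $r_1^{(n)}$ and $k_1^{(m)}$, computing the $\mu^\tau$ and $\lambda^\tau$ values, telescoping $\hat s_n$ and $s_m$) are carried out in the text of Section~5 immediately preceding the proposition, and the proof then simply cites those results together with Theorems~A.1 and~B.1. One small refinement: at the midpoint $\rho\kappa=\pi/(m+1/2)$ the paper observes explicitly that the pole of $\hat w_1$ in~\eqref{hatres} is cancelled by the zero of $\sin(2\pi^2/\rho\kappa)$ in~\eqref{Lameq}--\eqref{Lamdi}, so $R(r,r')$ itself is regular there; your continuity argument is valid, but note that the strong continuity established in the proof of Theorem~4.1 was stated for the range $\rho\kappa>(N+1/2)\pi$, so you should remark that the same reasoning (boundedness and parameter-continuity of the multiplication factors in~\eqref{cFform}) applies equally on the intervals $I_m$.
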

\begin{proof}
We obtain~\eqref{stFF} from Theorem~A.1 and~\eqref{hatRfin}. Likewise, \eqref{FstF} follows from Theorem~B.1 and~\eqref{Rfin}, except at the midpoint of~$I_m$, at which~$\hat{w}_0(k)$~\eqref{wod} has a double pole. However, the pole on the rhs of~\eqref{hatres} is cancelled by a zero coming from~\eqref{Lameq}--\eqref{Lamdi}, so that $R(r,r')$ is regular for~$\rho\kappa=\pi/(m+1/2)$. By continuity, therefore, \eqref{FstF} holds on all of~$I_m$. Finally, \eqref{cFmfree} follows from~\eqref{psi0free}.
\end{proof}

For $N>0$ we need to keep track of more than one pole moving out of the $\im r$- and $\im k$-strips as $a_-$ decreases. We shall only detail the case where the `highest' $r$-pole has moved out, so as to explicitly reveal the isometry breakdown  for the interval
\be\label{isovio}
a_-\in (Na_+,(N+1/2)a_+)\Leftrightarrow \rho\kappa \in (N\pi, (N+1/2)\pi).
\ee
Thus we can still work with the pole locations~$r_j$~\eqref{rj} for $j=0,\ldots,N-1$, but now we need 
\be
r_N^{(1)}:=i\pi(N+1/2)/\kappa-i\rho\Leftrightarrow x_N^{(1)}:=i(N+1/2)a_+-ia_-.
\ee
From~\eqref{KN} and~\eqref{SigN} we deduce
\be
K_N(\de (Na_+-ia_-),y)=(-)^Ne_+(\de y)K_N(\de iNa_+,y)=(-)^Ne_+(\de y)\prod_{j=N+1}^{2N} 2s_-(ija_+),
\ee 
so via~\eqref{lamN} we obtain
\be
\mu^{\tau}_N(\de x_N^{(1)},y)=\tau v_N(y)e_+(\de\tau y)e_-(\de(y-i(N+1)a_+)/2)\prod_{j=N+1}^{2N+1} 2is_-(ija_+).
\ee
Recalling~\eqref{vhy}, this entails
\begin{multline}
\mu^{\tau}_N(\de a_- r_N^{(1)}/\rho,a_-k/\kappa)=\tau \exp(\de\tau \rho k)\exp\big(\de\big(\pi k/2\kappa-i(N+1)
\pi^2/2\rho\kappa\big)\big)
\\
\times \frac{\prod_{j=N+1}^{2N+1}\sinh\big(ij \pi^2/\rho\kappa\big)}{\prod_{j=1}^{N+1}\sinh\big(\pi k/\kappa-ij \pi^2/\rho\kappa\big)}.
\end{multline}
As should be the case, this reduces to~\eqref{mu0res} for $N=0$ and~$n=1$. Using the residue (cf.~\eqref{wr})
\be
w_N=\rho\Big(2\pi\prod_{j=1}^{2N+1}2\sinh(ij\pi^2/\rho\kappa)\Big)^{-1},
\ee
 and following the same steps as for $N=0$, this readily yields the arbitrary-$N$ and $n=1$ counterpart of~\eqref{stFF}. Specifically, we obtain
\be\label{stFFN}
\cF_N(\rho,\kappa)^*\cF_N(\rho,\kappa)={\bf 1}_{\hat{ \cH} }+(-)^N
 \frac{\rho\prod_{j=N+1}^{2N+1}\sin(j\pi^2/\rho\kappa)}{2\pi\prod_{j=1}^{N}\sin(j\pi^2/\rho\kappa)}  \Big(
\chi_N^{(e,1)}\otimes \overline{ \chi_N^{(e,1)}}-\chi_N^{(o,1)}\otimes \overline{ \chi_N^{(o,1)}}\Big), 
\ee
with even and odd functions
\bea\label{chieN}
\chi_N^{(e,1)}(k) 
&  \equiv & \frac{1}{\prod_{j=1}^N2\sinh(\pi k/\kappa+ij\pi^2/\rho\kappa)} 
\nonumber \\
& \times & \frac{\sinh(\rho k)}
{\cosh(\pi k/2\kappa+i(N+1)\pi^2/2\rho\kappa)} \left(\begin{array}{c}
1 \\ 
-1 
\end{array}\right),
\eea
\bea\label{chioN}
\chi_N^{(o,1)}(k)  
& \equiv & \frac{1}{\prod_{j=1}^N2\sinh(\pi k/\kappa+ij\pi^2/\rho\kappa)} 
\nonumber \\
& \times &  \frac{\sinh(\rho k)}
{\sinh(\pi k/2\kappa+i(N+1)\pi^2/2\rho\kappa)} \left(\begin{array}{c}
1 \\ 
1 
\end{array}\right).
\eea 

An inspection of the proof of Theorem~4.3 reveals that the reasoning can be applied to the interval~\eqref{isovio} as well. This readily yields
\be
\cF_N(\rho,\kappa)\cF_N(\rho,\kappa)^*={\bf 1}_{ \cH }+(-)^N\frac{\kappa\prod_{j=N+1}^{2N+1}\sin (j\pi^2/\rho\kappa)}{\pi\prod_{j=1}^N\sin (j\pi^2/\rho\kappa)} \Psi_N\otimes \Psi_N,
\ee
with $\Psi_N$ given by \eqref{PsiN} and~\eqref{wN}. This concludes our account of isometry violation for the interval~\eqref{isovio}. 

The reader who has followed us to this point will realize that all ingredients are in place to handle the intervals arising when  $a_-$ is further decreased, but we shall not pursue this.  

%\newpage

\begin{appendix}

\section{The  transform $\cF$}\label{AppA}

Our general transform $\cF$ involves two Hilbert spaces
\be\label{cH1}
\cH\equiv L^2(\R, dr),
\ee
and 
\be\label{cH2}
\hat{\cH}\equiv L^2((0,\infty), dk)\otimes \C^2,
\ee
with inner products
\be
(d,e)_1\equiv \intR dr\,\overline{d(r)}e(r),\ \ \ d,e\in\cH,
\ee
\be
(f,g)_2\equiv \sum_{\de=+,-} \intp dk\, \overline{f_{\de}(k)}g_{\de}(k),\ \ \ f,g\in\hat{\cH}.
\ee
(Physically speaking, these spaces can be thought of as the reduced position and momentum space of a particle pair; the variables~$r$ and~$k$ are dimensionless and related to center-of-mass position~$x=x_1-x_2$ and momentum~$p=p_1-p_2$ by $r=\nu x$ and $k=p/2\hbar\nu $.)  The transform is at first defined on the dense $\hat{\cH}$-subspace~$\hat{\cC}$ of $\C^2$-valued smooth functions $f=(f_+,f_-)$ with compact support in~$(0,\infty)$, which are assumed to be mapped into~$\cH$:
\be\label{cF}
\cF\, :\, \hat{\cC}\equiv C_0^\infty((0,\infty))^2\subset \hat{\cH} \to \cH.
\ee
Specifically, we have
\be\label{cFdef}
(\cF f)(r)=\frac{1}{\sqrt{2\pi}}\intp dk \sum_{\de=+,-} F_{\de}(r,k)f_{\de}(k).
\ee
The two transform kernels are defined in terms of one function $\Psi(r,k)$:
\be\label{Fp}
F_+(r,k)=\Psi(r,k),
\ee 
\be\label{Fm}
F_-(r,k)=-\Psi(-r,k).
\ee

For the case of no interaction we have
\be\label{free}
\Psi(r,k)=\exp(irk),
\ee
and then we denote the corresponding transform by~$\cF_0$. Thus, $\cF_0$ amounts to the Fourier transform, with~$\hat{f}\in L^2(\R,dk)$ corresponding to~$f\in\hat{\cH}$ via
\be\label{ident}
\hat{f}(k)=
\left\{
\begin{array}{ll}
f_+(k),  &  k>0, \\
-f_-(-k),  &  k<0.
\end{array}
\right.
\ee
(The phase of $F_-$ is a matter of convention, the minus sign being chosen for later purposes.)

We proceed with an initial list of assumptions on the function $\Psi(r,k)$. (This list will be supplemented in Appendix~B.) To begin with, we assume $\Psi(r,k)$ is a smooth function on $\R^2$ that satisfies
\be\label{Psiconj}
\overline{\Psi(r,k)}=\Psi(r,-k),\ \ \ (r,k)\in\R^2,
\ee
and is of the form
\be\label{Psiform}
\Psi(r,k)=  w(r)^{1/2}\sumt m^{\tau}(r,k)\exp(i\tau rk).
\ee

The weight function $w(r)$ is a positive even function and the positive square root is taken in~\eqref{Psiform}. It extends to a meromorphic function that has period $i\rho$ with~$\rho>0$. Its only singularities in the period strip $\im r\in (0,\rho)$ are finitely many simple poles. If $r_0$ is one of these poles, it follows from evenness and $i\rho$-periodicity that $i\rho -r_0$ is another such pole whose   residue has opposite sign. Thus we can pair off the poles, obtaining $2L$ distinct poles $r_1,\ldots,r_{2L}$ related by
\be\label{poles}
r_{j+L}=i\rho -r_j,\ \ \ j=1,\ldots,L,
\ee
  with residues
\be\label{wres}
w_j\equiv {\rm Res}\,( w(r))|_{r=r_j},\ \ \ j=1,\ldots,2L,
\ee
satisfying
\be\label{resid}
w_{j+L}=-w_j,\ \ \ j=1,\ldots,L.
\ee

The two coefficients $m^{\pm}(r,k)$ are smooth and satisfy
\be\label{mconj}
\overline{m^{\tau}(r,k)}=m^{\tau}(r,-k),\ \ \ \tau=+,-,\ \ \  (r,k)\in\R^2,
\ee
in accord with~\eqref{Psiconj}. As functions of~$r$, they extend to entire functions that are either both $i\rho$-periodic or both $i\rho$-antiperiodic.

Next, we assume that the asymptotic behavior of $\Psi(r,k)$ for $\re r\to \pm\infty$  is given by
\be\label{mpas}
w(r)^{1/2}m^+(r,k)= 
\left\{
\begin{array}{ll}
T(k)+\Ogamp,  &  \limp, \\
 1+\Ogamn,  &  \limn,
\end{array}
\right.
\ee
\be\label{mnas}
w(r)^{1/2}m^-(r,k)= 
\left\{
\begin{array}{ll}
\Ogamp,  &  \limp, \\
- R(k)+\Ogamn,  &  \limn,
\end{array}
\right.
\ee
and that we also have
\be\label{asextra}
\partial_r(w(r)^{1/2}m^\tau (r,k))=O(\exp(\mp\gamma r)),\ \ \tau=+,-,\ \ \re r\to \pm \infty.
\ee
Here, we have $\gamma>0$ and the implied constants can be chosen uniformly for $\im r$ and~$k$ in $\R$-compacts. (To leave no doubt regarding the meaning of the uniformity assumption, let us spell it out for the bound~\eqref{asextra}: For given compact subsets $K_1,K_2$ of~$\R$, it says that there exist positive constants~$C$ and~$R$ such that the modulus of the lhs is bounded above by $C\exp(-\gamma |\re r|)$ for all $k\in K_1$ and all~$r\in\C$ with $\im r\in K_2$ and $|\re r| \ge R$.)

The transmission and reflection coefficients $T(k)$ and $R(k)$ are assumed to be smooth functions on $\R$ satisfying
\be\label{trconj}
\overline{T(k)}=T(-k),\ \ \ \overline{R(k)}=R(-k),\ \ \ k\in\R,
\ee
\be\label{unit1}
T(-k)T(k)+R(-k)R(k)=1,
\ee
\be\label{unit2}
T(-k)R(k)+R(-k)T(k)=0.
\ee
As a consequence, the matrix multiplication operator on $\hat{\cH}$ (`$S$-matrix') given by
\be\label{Sm}
S(k)\equiv
\left(\begin{array}{cc}
T(k) & R(k) \\
R(k)  &  T(k) 
\end{array}\right),\ \ \ k>0,
\ee
is a unitary operator. We also point out that our initial assumption that $\cF$ maps $\hat{\cC}$ into $\cH$ readily follows from the asymptotic behavior we have just assumed.

We need two more assumptions on the coefficients. To this end we introduce
\be\label{Mtsdef}
M_{\alpha}^{\tau}(r,k,k')\equiv m^+(-r,-k)m^{\tau}(-\alpha r,k')+m^-(r,-k)m^{-\tau}(\alpha r,k'),\ \ \ \tau,\alpha=+,-.
\ee
Then we assume
\be\label{Mass}
M_{\alpha}^{\alpha}(r,k,k)=M_{\alpha}^{\alpha}(-r,k,k),\ \ \alpha=+,-,\ \  \ (r,k)\in\R^2.
\ee
By contrast to previous ones, these   assumptions may seem unintuitive. For now, we point out that
  \eqref{mpas} and~\eqref{mnas} imply
\be\label{M1as}
w(r)M_{+}^{+}(r,k,k')
= 
\left\{
\begin{array}{ll}
1+\Ogamp,  &  \limp, \\
 T(-k)T(k')+R(-k)R(k')+\Ogamn,  &  \limn,
\end{array}
\right.
\ee
\be\label{M2as}
 w(r)M_{-}^{-}(r,k,k')
=  
\left\{
\begin{array}{ll}
\Ogamp,  &  \limp, \\
 -T(-k)R(k')-R(-k)T(k')+\Ogamn,  &  \limn.
\end{array}
\right.
\ee
Consequently, the evenness assumptions~\eqref{Mass} may be viewed as generalizations of the unitarity assumptions~\eqref{unit1} and~\eqref{unit2}. 

Before stating a theorem that only involves the above assumptions on the various coefficients determining $\Psi(r,k)$, we add two simple choices that satisfy these assumptions. Besides $\rho$, they involve two positive parameters~$\kappa$ and~$\varphi$.   The parametrization of the $k$-dependence and the choice of numerical factors   anticipate the further assumptions made in Appendix~B.

Specifically, we set
\be\label{wsp}
w(r)\equiv 1/4\sinh(\pi r/\rho+i\varphi)\sinh(\pi r/\rho-i\varphi),
\ee
\be\label{mnu}
m_{\sigma}^{\tau}(r,k)\equiv \frac{\ell_{\sigma}^{\tau}(r,k)}{2i\sinh (\pi k/\kappa-2i\varphi)},\ \ \ \tau,\sigma=+,-, 
\ee
\be\label{ellnum}
\ell_{\sigma}^-(r,k)\equiv 2i\sigma\sinh(2i\varphi)\exp(-\pi r/\rho),\ \ \ \sigma=+,-,
\ee
\be\label{ellnup}
\ell_{\sigma}^+(r,k)\equiv 2i\big(\exp(-\pi r/\rho)\sinh(\pi k/\kappa-2i\varphi) -
\exp(\pi r/\rho) \sinh(\pi k/\kappa)\big),\ \ \sigma=+,-,
\ee
and denote the associated special transforms by $\cF_{\sigma}(\varphi)$, $\sigma=+,-$.  Note first that all of these functions are $i\pi$-periodic in $\varphi$, so we may and will restrict $\varphi$ to the period interval $(0,\pi]$. For the special choices $\varphi=\pi$ and $\varphi=\pi/2$, the weight function $w(r)$ does not satisfy the simple pole restriction, but one readily verifies 
\be\label{phiexc}
 w(r)^{1/2}\sumt m_{\pm}^{\tau}(r,k)\exp(i\tau rk)=\left\{
 \begin{array}{ll}
-\exp(irk), & \varphi =\pi, \\
\exp(irk), & \varphi =\pi/2.
 \end{array}\right.
\ee
Thus we have $\cF_{\pm}(\pi)=-\cF_0$ and $\cF_{\pm}(\pi/2)=\cF_0$, cf.~\eqref{free}. 
Choosing next 
\be\label{phiint}
\varphi \in (0,\pi/2)\cup (\pi/2,\pi),
\ee
it is routine to check that all assumptions are satisfied, with $L=1$, $\gamma=2\pi/\rho$, and 
\be\label{tnu}
T_{\pm}(k)=\frac{\sinh(\pi k/\kappa)}{\sinh(2i\varphi-\pi k/\kappa )},
\ee
\be\label{rnu}
R_{\pm}(k)=\frac{\pm\sinh(2i\varphi )}{\sinh(2i\varphi-\pi k/\kappa )}.
\ee

We are now prepared  for the following theorem.

\begin{theorem}
Letting $f,g\in\hat{\cC}$, we have
\begin{multline}\label{cFres}
(\cF f,\cF g)_1= (f,g)_2+  i\sum_{j=1}^{L}w_j
\sum_{\de,\de'=+,-}\de\de'\intp dk\, \overline{f_{\de}(k)}\intp dk'\, g_{\de'}(k')
\\
\times   \sum_{\nu,\nu'=+,-}\frac{\exp(i r_j(\nu k-\nu' k'))}{1-\exp(\rho (\nu' k'-\nu k))} M_{\de \de'}^{\de\de'\nu\nu'}(\nu r_j,k,k').
\end{multline}
\end{theorem}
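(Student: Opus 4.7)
The plan is to unfold the inner product into a triple integral over $r,k,k'$, carry out the $r$-integration by a contour shift of $i\rho$, and identify the Plancherel piece $(f,g)_2$ together with the residue corrections at the poles of $w$ in the critical strip. First, substituting $F_\delta(r,k)=\delta\Psi(\delta r,k)$ together with \eqref{Psiform}, and using $\overline{\Psi(r,k)}=\Psi(r,-k)$ from \eqref{Psiconj} and the evenness of $w$, one obtains
\begin{equation*}
\overline{F_\delta(r,k)}\,F_{\delta'}(r,k') \;=\; \delta\delta'\,w(r)\sum_{\tau,\tau'=\pm} m^{\tau}(\delta r,-k)\,m^{\tau'}(\delta'r,k')\,e^{i(\tau'\delta' k'-\tau\delta k)r}.
\end{equation*}
Reindexing $\nu=\tau\delta$, $\nu'=\tau'\delta'$ converts the phase to $e^{i(\nu'k'-\nu k)r}$, which will govern the contour shift. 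Since $f,g\in\hat\cC$ have compact support, Fubini justifies placing the $r$-integration innermost, and one then analyzes each $(\delta,\delta',\nu,\nu')$-summand separately.

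For each summand, I would apply Cauchy's theorem on the rectangle $[-T,T]+i[0,\rho]$ and let $T\to\infty$. The $i\rho$-periodicity of $w$, combined with the common $(\pm)$-(anti)periodicity of $m^+$ and $m^-$ (whose product is $i\rho$-periodic because the sign squares to $+1$), yields $G_{\nu\nu'}(r+i\rho)=e^{\rho(\nu k-\nu'k')}G_{\nu\nu'}(r)$ for the individual integrand. The vertical boundary segments, after pairing against $\overline{f_\delta(k)}g_{\delta'}(k')$ and invoking the decay \eqref{asextra} together with the analysis in Appendix~C, contribute the Plancherel piece: the non-decaying asymptotic constants of $G$—given by products of the four values $\{1,T(k),0,-R(k)\}$ from \eqref{mpas}-\eqref{mnas}—produce $\delta$-function concentrations at $\nu k=\nu'k'$. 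Summing over $\tau,\tau',\delta,\delta'$ and invoking the unitarity relations \eqref{unit1}-\eqref{unit2} (equivalently, that $S(k)$ is unitary), one finds that all cross terms with $(\delta,\delta')\in\{(+,-),(-,+)\}$ cancel, while the diagonal $(+,+)$ and $(-,-)$ combinations produce exactly $\int dk\,\overline{f_\delta(k)}g_\delta(k)$, assembling $(f,g)_2$.

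The residue contributions at $r_1,\ldots,r_{2L}$ in the strip $\im r\in(0,\rho)$ supply the correction. Here I would use the pairing \eqref{poles}-\eqref{resid}: the pole at $r_{j+L}=i\rho-r_j$ has residue $-w_j$, and the quasi-periodicity of $G_{\nu\nu'}$ allows rewriting the residue there as $-w_j\,e^{\rho(\nu'k'-\nu k)}\,m^{\nu\delta}(-\delta r_j,-k)\,m^{\nu'\delta'}(-\delta'r_j,k')\,e^{-i(\nu'k'-\nu k)r_j}$. Combining with the residue at $r_j$ yields a sum over the $L$ lower-strip poles, with the prefactor $1/(1-e^{\rho(\nu'k'-\nu k)})$ from the quasi-periodicity identity appearing after dividing through. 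The evenness assumption \eqref{Mass} is exactly what is needed to repackage the paired $(r_j,i\rho-r_j)$ contribution into the symmetric $M_{\delta\delta'}^{\delta\delta'\nu\nu'}(\nu r_j,k,k')$ of \eqref{Mtsdef}, with the factor $i$ coming from $2\pi i\cdot(1/2\pi)$ after the final simplification.

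The main obstacle is step two: extracting $(f,g)_2$ from the vertical boundary of the rectangular contour as $T\to\infty$. The individual $r$-integrand $G_{\nu\nu'}$ has nonzero asymptotic limits at $\re r\to\pm\infty$, so the boundary contribution is nontrivial and must be interpreted distributionally in $(k,k')$. The bookkeeping over all 16 combinations $(\delta,\delta',\nu,\nu')$, with the off-diagonal cancellations arising precisely from the $S$-matrix identities \eqref{unit1}-\eqref{unit2} and the diagonal reinforcements summing to unity, is where the care is needed; once this is established, the residue side follows from algebraic manipulation of the paired poles via \eqref{Mass}.
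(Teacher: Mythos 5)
Your proposal follows essentially the same route as the paper's own proof: Fubini with a finite cutoff, a rectangular contour shift over $i\rho$ exploiting the quasi-periodicity multiplier of the integrand, pairing of the $2L$ poles to reduce to the $L$ lower-strip poles and the $M$-combinations, and a distributional analysis of the vertical boundary terms (the Appendix~C lemma) in which the unitarity relations \eqref{unit1}--\eqref{unit2} produce $(f,g)_2$ from the diagonal terms and kill the cross terms. Two small corrections to your bookkeeping: the repackaging of the paired residues into $M_{\de\de'}^{\de\de'\nu\nu'}(\nu r_j,k,k')$ uses only the definition \eqref{Mtsdef} (checked case by case), whereas the evenness assumption \eqref{Mass} is what guarantees absolute convergence of the residue integrals at $\nu=\nu'$, $k=k'$ and feeds into the boundary lemma via $G_+(s,t,t)=G_-(s,t,t)$; and the multiplier attached to the pole at $i\rho-r_j$ should be $e^{\rho(\nu k-\nu' k')}$, which after dividing by $1-e^{\rho(\nu k-\nu'k')}$ yields the stated denominator $1-e^{\rho(\nu'k'-\nu k)}$.
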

\begin{proof}
To start with, we stress that the~$4L$ residue integrals on the rhs are absolutely convergent due to our evenness assumptions~\eqref{Mass}. Indeed, fixing $j,\de,\de'$, the denominator has zeros on the integration region only when~$k=k'$ and $\nu =\nu'$. Now the two terms in the sum with $\nu'=\nu$ involve $M_{+}^{+}(\nu r_j,k,k')$ for the two cases $\de=\de'$,  and $M_{-}^{-}(\nu r_j,k,k')$ for $\de=-\de'$.  Hence \eqref{Mass} ensures the  cancellation of the poles arising for $k=k'$. 

Proceeding with the proof, we use Fubini's theorem to write
\be\label{Lamlim}
(\cF f,\cF g)_1= \lim_{\Lambda \to\infty}\sum_{\de,\de'=+,-}\de\de'\intp dk\, \overline{f_{\de}(k)}\intp dk'\, g_{\de'}(k')I_{\de \de'}(\Lambda ,k,k'),
\ee
with
\be
I_{\sigma}(\Lambda ,k,k')\equiv  \frac{1}{2\pi}\int_{-\Lambda }^\Lambda  dr\, \overline{\Psi( r,k)}\Psi(\sigma r,k'),\ \ \ \sigma=+,-,
\ee
cf.~\eqref{cFdef}--\eqref{Fm}. Using~\eqref{Psiconj} and~\eqref{Psiform}, we obtain
\be\label{Idd}
I_{\sigma}(\Lambda ,k,k')=\frac{1}{2\pi}\sumtt \int_{-\Lambda }^\Lambda  dr\, w(r)J^{\tau,\tau'}_{\sigma}(r,k,k'),
\ee
where
\be\label{Jdef}
J^{\tau,\tau'}_{\sigma}(r,k,k')\equiv  m^{\tau} ( r,-k)m^{\tau'} (\sigma r,k')\exp(ir(\sigma\tau'k'-\tau k)).
\ee
Fixing $\tau$ and $\tau'$, the integrand in~\eqref{Idd} picks up an $r$-independent multiplier when~$r$ is shifted by $i\rho$. Indeed, $w(r)$ is $i\rho$-periodic, while
\be
J^{\tau,\tau'}_{\sigma}(r+i\rho,k,k')=\exp(\rho(\tau k-\sigma\tau' k'))J^{\tau,\tau'}_{\sigma}(r,k,k').
\ee
To exploit this, we use Cauchy's theorem to get
\begin{multline}\label{Jint}
 \big(1-\exp(\rho(\tau k-\sigma\tau' k'))\big) \int_{-\Lambda }^\Lambda  dr\, w(r)J^{\tau,\tau'}_{\sigma}(r,k,k') 
 \\
=2\pi i\sum_{j=1}^{2L} w_j J^{\tau,\tau'}_{\sigma}(r_j,k,k')+B^{\tau,\tau'}_{\sigma}(\Lambda ,k,k'),
\end{multline}
where
\be\label{BttLam}
B^{\tau,\tau'}_{\sigma}(\Lambda ,k,k')\equiv
-\left( \int_\Lambda ^{\Lambda +i\rho}+\int_{-\Lambda +i\rho}^{-\Lambda } \right) dr\, w(r)J^{\tau,\tau'}_{\sigma}(r,k,k').
\ee
Here, we have chosen $\Lambda $ sufficiently large so that the rectangular contour with corners~$-\Lambda,\Lambda,\Lambda+i\rho,-\Lambda+i\rho$, encloses all of the $w$-poles $r_1,\ldots,r_{2L}$.

Using \eqref{poles}--\eqref{resid}, we can write the residue sum as
\begin{multline}\label{ressum}
\sum_{j=1}^{2L} w_j J^{\tau,\tau'}_{\sigma}(r_j,k,k')
 =   \sum_{j=1}^{L} w_j \Big(
 \exp(i r_j(\sigma\tau' k'-\tau k))
  m^{\tau}( r_j,-k)m^{\tau'}(\sigma r_j,k')
  \\
-\exp(\rho(\tau k-\sigma\tau' k'))\exp(-i r_j(\sigma\tau' k'-\tau k))
  m^{\tau}(- r_j,-k)m^{\tau'}(-\sigma r_j,k')\Big).
\end{multline}

As a consequence, we obtain
\begin{multline}
\sum_{\tau,\tau'=+,-}\sum_{j=1}^{2L}w_j J^{\tau,\tau'}_{\sigma}(r_j,k,k')\big/[1-\exp(\rho(\tau k-\sigma\tau' k'))]
\\
=\sum_{j=1}^L\sum_{\nu,\nu'=+,-}\frac{w_j}{1-\exp(\rho(\nu k-\nu' k'))}\Big[ \exp(ir_j(\nu' k'-\nu k))m^{\nu}( r_j,-k)m^{\sigma\nu'}(\sigma r_j,k')
\\
-\exp(\rho(\nu k-\nu' k'))\exp(-ir_j(\nu' k'-\nu k))m^{\nu}(- r_j,-k)m^{\sigma\nu'}(-\sigma r_j,k')\Big],
\end{multline}
where we have changed variables $\tau,\tau'\to \nu,\sigma\nu'$. When we now take $\nu,\nu'\to -\nu,-\nu'$ in the first sum, we get
\begin{multline}\label{crit1}
\sum_{j=1}^Lw_j\sum_{\nu,\nu'=+,-}\frac{\exp(ir_j(\nu k-\nu' k'))}{1-\exp(\rho(\nu' k'-\nu k))}
\\
\times\big[  m^{-\nu}( r_j,-k)m^{-\sigma\nu'}(\sigma r_j,k')+m^{\nu}(- r_j,-k)m^{\sigma\nu'}(-\sigma r_j,k')\big].
\end{multline}
Combining this with~\eqref{Lamlim}--\eqref{Jint}, the $4L$ residue integrals on the rhs of~\eqref{cFres}  result upon rewriting the second line of~\eqref{crit1} as $M_{\sigma}^{\sigma\nu\nu'}(\nu r_j,k,k')$, cf.~\eqref{Mtsdef}. (The latter equality is not immediate, but it can be readily verified case by case.) 

We proceed to rewrite~$B^{\tau,\tau'}_{\sigma}(\Lambda ,k,k')$. Taking $r\to -r+i\rho$ in the second integral, we use evenness and $i\rho$-periodicity of~$w(r)$ to obtain
\be
B^{\tau,\tau'}_{\sigma}(\Lambda ,k,k')=\int_\Lambda ^{\Lambda +i\rho}  dr\, w(r)\big( J^{\tau,\tau'}_{\sigma}(-r+i\rho,k,k') 
-J^{\tau,\tau'}_{\sigma}(r,k,k') \big).
\ee
Shifting $r$ over $i\rho/2$ and using~\eqref{Jdef}, this entails
\begin{multline}\label{boundint}
 \frac{B^{\tau,\tau'}_{\sigma}(\Lambda ,k,k')}{1-\exp(\rho(\tau k-\sigma\tau' k'))}=\frac{1}{2\sinh(\rho(\tau k-\sigma\tau' k')/2)}\int_{\Lambda -i\rho/2}^{\Lambda +i\rho/2}  dr\, w(r+i\rho/2)
\\
\times\sum_{\epsilon=+,-}\epsilon \exp(i\epsilon r(\sigma\tau' k'-\tau k))m^{\tau}(\epsilon(r+i\rho/2),-k)m^{\tau'}(\sigma\epsilon(r+i\rho/2),k').
\end{multline}
We now study the $\Lambda \to\infty$ limit of
\be\label{start}
\de\de'\intp dk\, \overline{f_{\de}(k)}\intp dk'\, g_{\de'}(k')\sumtt \frac{B^{\tau,\tau'}_{\de\de'}(\Lambda ,k,k')}{1-\exp(\rho(\tau k-\de\de'\tau' k'))},
\ee
 for the four cases~$\de,\de'=+,-$. 

First let $\de=\de'$. Then for the terms in~\eqref{start} with $\tau=-\tau'$, the denominator does not vanish on the integration region. Therefore, using~\eqref{boundint} and the asymptotics assumptions~\eqref{mpas}--\eqref{mnas}, it readily follows that they vanish for $\Lambda \to\infty$. (For the dominant asymptotics the $r$-integration is elementary, and one need only invoke the Riemann-Lebesgue lemma and dominated convergence, while the subdominant terms vanish by dominated convergence.) 

Consider next the terms with $\tau=\tau'$.  From~\eqref{boundint} and~\eqref{Mtsdef} we infer
\begin{multline}\label{firstcase}
\sum_{\tau=+,-}\frac{B^{\tau,\tau}_{+}(\Lambda ,k,k')}{1-\exp(\rho\tau (k-k'))}
=\frac{1}{2\sinh(\rho( k'- k)/2)}\int_{\Lambda -i\rho/2}^{\Lambda +i\rho/2}  dr\, w(r+i\rho/2)
\\
\times\sum_{\alpha=+,-}\alpha\exp(i\alpha r(k-k'))M_{+}^{+}(\alpha(r+i\rho/2),k,k').
\end{multline}
Now we change variables
\be\label{stt}
t=\rho k/2,\ \ t'=\rho k'/2,\ \ s=2r/\rho,
\ee
and set
\be
G_{\alpha}(s,t,t')\equiv w(\rho(s+i)/2))M_{+}^{+}(\alpha \rho (s+i)/2, 2t/\rho,2t'/\rho),\ \ \ \alpha=+,-.
\ee
This implies that the rhs of \eqref{firstcase} can be rewritten as
\be
\frac{\rho}{4\sinh(t'-t)}\int_{2\Lambda /\rho -i}^{2\Lambda /\rho +i}ds\, \sum_{\alpha=+,-}\alpha\exp(i\alpha s(t-t'))G_{\alpha}(s,t,t').
\ee
In view of our evenness assumptions~\eqref{Mass}, the functions $G_{\pm}(s,t,t')$ satisfy~\eqref{Geq}. Moreover, \eqref{M1as} implies  that $G_{\pm}(s,t,t')$ also satisfy~\eqref{Gpmas}--\eqref{rhojpart}, with~$\eta=\gamma\rho/2$, and
\be
A_1(t,t')=1,\ \ \ \ \ A_2(t,t')=T(-2t/\rho)T(2t'/\rho)+R(-2t/\rho)R(2t'/\rho).
\ee
Finally, our unitarity assumption \eqref{unit1} entails that \eqref{Aeq} is obeyed, with~$A(t)=1$.
 Putting
\be
\phi(t,t')\equiv \frac{1}{2\pi\rho}\,\overline{f_{\de}(2t/\rho)}g_{\de}(2t'/\rho),
\ee
it follows that all assumptions of Lemma~C.1 are fulfilled.
 As a result, we deduce
\be\label{deplim}
\lim_{\Lambda \to\infty}\int_{(0,\infty)^2} dk\, dk'\,\frac{\overline{f_{\de}(k)}  g_{\de}(k')}{2\pi}\sumtt \frac{B^{\tau,\tau'}_{+}(\Lambda ,k,k')}{1-\exp(\rho(\tau k-\tau'k'))}=\intp dk\, \overline{f_{\de}(k)}g_{\de}(k).
\ee

 It remains to show that for the two remaining cases $\de'=-\de$ the pertinent limits vanish. 
As before, this is easily checked when $\tau'=\tau$. 
Consider now the case  $\tau'=-\tau$. As the analog of~\eqref{firstcase}, we then get
\begin{multline}\label{thirdcase}
\sum_{\tau=+,-}\frac{B^{\tau,-\tau}_{-}(\Lambda ,k,k')}{1-\exp(\rho\tau (k-k'))}
=\frac{1}{2\sinh(\rho( k'- k)/2)}\int_{\Lambda -i\rho/2}^{\Lambda +i\rho/2}  dr\, w(r+i\rho/2)
\\
\times\sum_{\alpha=+,-}\alpha\exp(i\alpha r(k-k'))M_{-}^{-}(\alpha(r+i\rho/2),k,k').
\end{multline}
Using again the variable change~\eqref{stt}, but now putting
\be
G_{\alpha}(s,t,t')\equiv w(\rho(s+i)/2))M_{-}^{-}(\alpha \rho (s+i)/2, 2t/\rho,2t'/\rho),\ \ \ \alpha=+,-,
\ee
\be
\phi(t,t')\equiv \frac{1}{2\pi\rho}\,\overline{f_{\de}(2t/\rho)}g_{-\de}(2t'/\rho),
\ee
it follows once more that all assumptions of Lemma~C.1 are fulfilled, now with
\be
A_1(t,t')=0,\ \ \ A_2(t,t')=-T(-2t/\rho)R (2t'/\rho)-R(-2t/\rho)T(2t'/\rho),
\ee
and $A(t)=0$ (cf.~\eqref{Mass}, \eqref{M2as}, and \eqref{unit2}).
Thus the cases $\de'=-\de$ yield limit zero, so that  the proof is complete. 
\end{proof}

To apply the theorem in the main text, it is expedient to rewrite the residue sums in an  alternative form, which is detailed in the following corollary.

\begin{corollary}
Setting
\begin{multline}\label{Rdk}
\hat{R}_{\de,\de'}(k,k')\equiv i\de\de'\sum_{j=1}^{L} w_j 
  \sum_{\nu,\nu'=+,-}\frac{ 1}{1-\exp(\rho (\nu' k'-\nu k))} 
  \\
  \times
\big[  \mu^{-\de\nu}(\de r_j,-k)\mu^{-\de'\nu'}(\de' r_j,k')+\mu^{\de\nu}(-\de r_j,-k)\mu^{\de'\nu'}(-\de' r_j,k')\big],  
\end{multline}
where
\be\label{mudef}
\mu^{\tau}(r,k)\equiv \exp(i\tau rk)m^{\tau}(r,k),\ \ \ \tau=+,-,
\ee
the transform $\cF$ is isometric if and only if the residue sums $\hat{R}_{\de,\de'}(k,k')$, $\de,\de'=+,-$, vanish.
\end{corollary}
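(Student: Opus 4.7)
The plan is to show that the residue term in Theorem~A.1 coincides with the bilinear form given by the kernel $\hat{R}_{\de,\de'}$ from~\eqref{Rdk}, and then to pass from the integrated identity to pointwise vanishing by a standard localization argument.

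First, I would substitute $\mu^{\tau}(r,k)=\exp(i\tau rk)m^{\tau}(r,k)$ into~\eqref{Rdk}. A brief computation of the phases shows that both $\mu$-products appearing in the bracket carry the common factor $\exp(ir_j(\nu k-\nu' k'))$, which can be pulled outside. What remains inside is
\be
B_{\de,\de'}^{\nu,\nu'}(r_j,k,k')\equiv m^{-\de\nu}(\de r_j,-k)m^{-\de'\nu'}(\de' r_j,k')+m^{\de\nu}(-\de r_j,-k)m^{\de'\nu'}(-\de' r_j,k').
\ee
The crux is the bookkeeping identity $B_{\de,\de'}^{\nu,\nu'}(r_j,k,k')=M_{\de\de'}^{\de\de'\nu\nu'}(\nu r_j,k,k')$, with $M$ given by~\eqref{Mtsdef}. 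I would verify it by matching the two summands of each side according to whether $\nu=-\de$ or $\nu=+\de$: in the first case the terms pair in the given order, in the second in reverse order, and in either case the two-element sums agree (the consistency in $\nu'$ and $\de'$ is automatic). Substituting this identity back into the residue term of~\eqref{cFres} then yields
\be
(\cF f,\cF g)_1-(f,g)_2=\sum_{\de,\de'=+,-}\intp dk\,\overline{f_{\de}(k)}\intp dk'\,g_{\de'}(k')\hat{R}_{\de,\de'}(k,k'),\ \ f,g\in\hat{\cC}.
\ee

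The ``if'' direction is now immediate: vanishing of each $\hat{R}_{\de,\de'}$ forces $(\cF f,\cF g)_1=(f,g)_2$ on the dense subspace $\hat{\cC}\subset\hat{\cH}$, so $\cF$ is isometric. For the converse, I would first recall that the apparent singularity at $k=k'$ produced by the denominator when $\nu=\nu'$ is cancelled by the evenness assumption~\eqref{Mass}, exactly as noted at the outset of the proof of Theorem~A.1; hence each $\hat{R}_{\de,\de'}$ is continuous on $(0,\infty)^2$. If isometry held while some $\hat{R}_{\de_0,\de'_0}$ were nonzero at a point $(k_0,k_0')$, I would choose $f$ and $g$ to be bump functions with only the $\de_0$- and $\de_0'$-components nonzero and supported in small neighbourhoods of $k_0$ and $k_0'$; continuity together with a suitable sign choice then forces the right-hand side to be nonzero, contradicting isometry. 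The only step requiring genuine work is the bookkeeping identity in the first paragraph; the rest is standard density and localization.
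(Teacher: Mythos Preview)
Your proof is correct and follows essentially the same route as the paper. The paper's own proof is a one-liner referring back to the intermediate expression~\eqref{crit1} obtained inside the proof of Theorem~A.1 (before it was rewritten in terms of~$M_{\sigma}^{\tau}$); your ``bookkeeping identity'' $B_{\de,\de'}^{\nu,\nu'}(r_j,k,k')=M_{\de\de'}^{\de\de'\nu\nu'}(\nu r_j,k,k')$ is exactly this rewriting read in the other direction, and you spell out the converse direction (continuity plus localization) that the paper leaves implicit.
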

\begin{proof}
	This readily follows  from~\eqref{cFres} by invoking~\eqref{crit1} with $\sigma=\de\de'$.
 \end{proof}
 
For the special cases~\eqref{wsp}--\eqref{ellnup}, we have
\be
r_1=i\rho\varphi/\pi\in i(0,\rho),\ \ \ w_1=\frac{\rho}{4\pi\sinh(2i\varphi)},\ \ \ \varphi \in (0,\pi/2)\cup (\pi/2,\pi),
\ee
\be
\ell_{\sigma}^-(\pm r_1,k)=2i\sigma\sinh(2i\varphi)\exp(\mp i\varphi),
\ee
\be
\ell_{\sigma}^+(\pm r_1,k)=-2i\sinh(2i\varphi)\exp(\pm (\pi k/\kappa-i\varphi)).
\ee
From this we obtain the following ratios:
\begin{multline}\label{ratios}
M_{\alpha}^{\alpha}(-r_1,k,k')/M_{\alpha}^{\alpha}(r_1,k,k')=e^{\pi(k'-k)/\kappa},\ \  
M_{\alpha}^{-\alpha}(r_1,k,k')/M_{\alpha}^{\alpha}(r_1,k,k')=-\sigma e^{\pi k'/\kappa}, 
\\ 
M_{\alpha}^{-\alpha}(-r_1,k,k')/M_{\alpha}^{\alpha}(r_1,k,k')=-\sigma e^{-\pi k/\kappa},\ \ \ \alpha=+,-.
\end{multline}
Hence the residue sum in~\eqref{cFres} is proportional to
\begin{multline}
\frac{1}{1-\exp(\rho(k'-k))}+\frac{\exp((2ir_1+\pi/\kappa)(k'-k))}{1-\exp(\rho(-k'+k))}
\\
-\sigma\frac{\exp((2ir_1+\pi/\kappa)k')}{1-\exp(\rho(-k'-k))}
-\sigma\frac{\exp(-(2ir_1+\pi/\kappa)k)}{1-\exp(\rho(k'+k))}.
\end{multline}
For this to vanish we can choose either 
\be\label{phi1}
\sigma=+,\ \ \ 2ir_1+\pi/\kappa=0,\ \ \ \rho\kappa>\pi/2, 
\ee
or
\be\label{phi2}
\sigma=+,-,\ \ \ 2ir_1+\pi/\kappa=-\rho,\ \ \ \rho\kappa>\pi,
\ee
with the inequalities due to the requirement that~$r_1$ belong to $i(0,\rho)$. 
Indeed, for these choices the asserted vanishing comes down to the identities
\be\label{id1}
\frac{1}{1-A'/A}+\frac{1}{1-A/A'}-\frac{1}{1-1/A'A}-\frac{1}{1-A'A}=0,
\ee
\be\label{id2}
\frac{1}{1-A'/A}+\frac{A}{A'}\,\frac{1}{1-A/A'} -\frac{1}{A'}\,\frac{\sigma}{1-1/A'A} -A\,\frac{\sigma}{1-A'A}=0,\ \ \sigma=+,-,
\ee
which are easily checked.
Thus, setting
\be\label{rk0}
\phi_0\equiv\frac{\pi^2}{2\rho\kappa},\ \ \ \rho\kappa>\pi/2,
\ee
\be\label{rke}
\phi_e\equiv\frac{\pi^2}{2\rho\kappa}+\frac{\pi}{2},\ \ \ \rho\kappa>\pi,
\ee
we deduce that the three transforms $\cF_+(\phi_0)$, $\cF_+(\phi_e)$ and $\cF_-(\phi_e)$ are isometric.
In case the restrictions on $\rho\kappa$ are not satisfied, we need to subtract a suitable multiple of $i\rho$ from $i\rho\phi_j/\pi$, $j=0,e$, to obtain a pole location $r_1$ with $\im r\in (0,\rho)$. Generically, this yields a nonvanishing residue sum, hence isometry  breakdown.

The transform $\cF_+(\phi_0)$ amounts to the $N=0$ transform in the main text. The `extra' transforms~$\cF_{\pm}(\phi_e)$ go to show that our assumptions allow realizations beyond the main text. In Section~5 we shall elaborate on the issue of  isometry violation. In particular, Proposition~5.1 encodes the salient features of~$\cF_+(\phi_0)$ for $\rho\kappa\le\pi/2$.

%\newpage

\section{The  transform $\cF^*$ }\label{AppB}

In this appendix we retain the assumptions on $\cF$ made in Appendix~A. As we have pointed out, they imply in particular that $\cF$ maps~$\hat{\cC}$ into~$\cH$. However, they do not imply that~$\cF$ is a bounded operator. (To be sure, boundedness is plain when the residue sum vanishes.) In fact, it might not even follow from the assumptions made thus far that~$\cF$ has an adjoint that is densely defined.

From the additional assumptions in this appendix it shall follow that~$\cF^*$ is indeed densely defined. In particular, we shall see that we have
\be\label{cTst}
\cF^*\, :\, \cC\equiv C_0^\infty(\R)\subset \cH \to \hat{\cH}.
\ee 
Of course, it follows without further assumptions that for~$h\in \cC$ we obtain
\be
(h,\cF f)_1=\frac{1}{\sqrt{2\pi}}\sum_{\de=+,-}\intR dr\, \overline{h(r)} 
\intp dk \,  F_{\de}(r,k)f_{\de}(k),\ \ \ f\in \hat{\cC},
\ee
cf.~\eqref{cFdef}. From this and~\eqref{Fp}--\eqref{Psiconj} we easily deduce
\be\label{cTstdef}
(\cF^*h)_{\de}(k)=\frac{\de}{\sqrt{2\pi}} \intR dr\, \Psi(\de r,-k)h(r),\ \ \ \de=+,-,\ \ \ k>0,
\ee
whenever both integrals yield functions of~$k$ that are square-integrable on~$(0,\infty)$. But we can only ensure this property by making more assumptions. 

These extra assumptions concern the $k$-dependence of $m^{\pm}(r,k)$. They will allow us to study  $\cF^*$ in much the same way as $\cF$ itself. We would like to stress, however, that it is at this point that we part company with the nonrelativistic framework, inasmuch as the Jost functions~\eqref{Jost} do not have the periodicity in the spectral variable~$k$ we are about to require. 

First, we assume that the coefficients  $m^{+}(r,k)$ and $m^{-}(r,k)$ are $i\kappa$-periodic and $i\kappa$-antiperiodic functions of~$k$, resp., with $\kappa>0$. Second, they are of the form
\be\label{mtform}
m^{\tau}(r,k)= v(k)\ell^{\tau}(r,k),\ \ \ \tau=+,-.
\ee
Here, $\ell^{\pm}(r,k)$ are assumed to be entire in~$k$, whereas $v(k)$ is meromorphic and either $i\kappa$-periodic or $i\kappa$-antiperiodic. Third, $v(k)$ satisfies
\be\label{vass}
\overline{v(k)}=v(-k),\ \ \ k\in\R.
\ee
 Note that these assumptions are satisfied for the special cases given by~\eqref{mnu}--\eqref{ellnup}, with $v(k)$ being $i\kappa$-antiperiodic.

Introducing the dual weight function
\be\label{hwdef}
\hat{w}(k)\equiv   |v(k)|^2=v(-k)v(k),\ \ \ k\in\R,
\ee
it follows from the assumptions just made that it  extends to a meromorphic $i\kappa$-periodic function. We assume that its  
 only singularities in the period strip $\im k\in (0,\kappa)$ are finitely many simple poles. Just as for $w(r)$, it then follows that we can pair off the poles, obtaining $2\hat{L}$ distinct poles $k_1,\ldots,k_{2\hat{L}}$ related by
\be\label{whpoles}
k_{j+\hat{L}}=i\kappa -k_j,\ \ \ j=1,\ldots,\hat{L},
\ee
  with residues
\be\label{whres}
\hat{w}_j\equiv {\rm Res}\,( \hat{w}(k))|_{k=k_j},\ \ \ j=1,\ldots,2\hat{L},
\ee
satisfying
\be\label{whresid}
\hat{w}_{j+\hat{L}}=-\hat{w}_j,\ \ \ j=1,\ldots,\hat{L}.
\ee 

Clearly, these assumptions are satisfied for~\eqref{mnu}--\eqref{ellnup}, with~$\hat{L}=1$, and
\be\label{hw}
\hat{w}(k)\equiv \frac{1}{4\sinh (\pi k/\kappa -2i\varphi)\sinh (\pi k/\kappa +2i\varphi)},
\ee
provided $\varphi$ is restricted by~\eqref{phiint} and not equal to $ \pi/4$ or $3\pi/4$. (The latter cases yield double poles at $k= i\kappa/2$.)

Next, we assume that the $|\re k|\to \infty$ asymptotics of the coefficients is given by
\be\label{mpask}
 m^+(r,k)= 
\left\{
\begin{array}{ll}
 C(r)+O(\exp(-\hat{\gamma} k)),  &  \re k\to\infty, \\
 \overline{C(r)}+O(\exp(\hat{\gamma} k)),  &  \re k\to -\infty,
\end{array}
\right.
\ee
\be\label{mmask}
 m^-(r,k)= O(\exp(\mp \hat{\gamma} k)),  \ \ \   \re k\to\pm \infty,
 \ee
\be\label{mexk}
\partial_km^{\tau}(r,k)=O(\exp(\mp \hat{\gamma} k)),  \ \ \ \tau=+,-,\ \ \   \re k\to\pm \infty,
\ee
 with $\hat{\gamma}>0$ and the implied constants uniform for $\im k$ and~$r$ in $\R$-compacts. Moreover, we assume that the function  $C(r)$ extends from the real line to an entire $i\rho$-periodic or $i\rho$-antiperiodic function satisfying
\be\label{cw}
|C(r)|^2=1/w(r),\ \ \ r\in\R.
\ee
Introducing
\be\label{Ur}
U(r)^{1/2}\equiv C(r)w(r)^{1/2},\ \ \ r\in\R,
\ee
it follows from the above assumptions that $U(r)$ extends to a meromorphic $i\rho$-periodic function that is a phase for real~$r$, and that we have a reflectionless asymptotics
\be\label{Psikas}
\Psi(r,k)= 
  \left\{
\begin{array}{ll}
 U(r)^{1/2}\exp(irk)+O(\exp(-\hat{\gamma}k)),  &  \re k\to\infty, \\
U(r)^{-1/2}\exp(irk) +O(\exp(\hat{\gamma} k)),  &  \re k\to -\infty,
\end{array}
\right.
\ee
uniformly for $\im k$ and $r$ in $\R$-compacts.

From this asymptotic behavior it easily follows that $\cF^*$ is defined on~$\cC$ and given by~\eqref{cTstdef}, as announced. Actually, it may well follow from our assumptions that $\cF$ must be bounded. We have not tried to show this, however, since we do not need this property for the general analysis undertaken in this appendix. (It is not hard to see that the example transforms $\cF_{\pm}(\varphi)$ and the transforms arising in the main text are bounded, cf. the paragraph containing~\eqref{cFform}.)  

Our final assumptions generalize the evenness assumptions~\eqref{Mass}. Introducing
\be\label{Lttp}
L^{\tau,\tau'}(k,r,r')\equiv \ell^{\tau}(r,k)\ell^{\tau'}(r',-k)+\ell^{-\tau}(-r,k)\ell^{-\tau'}(-r',-k),\ \ \ \tau,\tau'=+,-,
\ee
it follows from the above that $L^{\tau,\tau}(k,r,r')$ is $i\kappa$-periodic in~$k$, whereas $L^{\tau,-\tau}(k,r,r')$ is $i\kappa$-antiperiodic. Also,
it is plain that we have  symmetries
\be\label{keven}
L^{\tau,\tau'}(k,r,r')=L^{-\tau,-\tau'}(k,-r,-r'),\ \ \  L^{\tau,\tau'}(k,r,r)=L^{\tau',\tau}(-k,r,r).
\ee
But the evenness properties
\be\label{Lass}
L^{\tau,\tau'}(k,r,r')=L^{\tau,\tau'}(-k,-r,-r'),\ \ \ \tau,\tau'=+,-,
\ee
amount to further assumptions on the coefficients, strengthening~\eqref{Mass}. Indeed, for $\tau'=\tau=+$ and $r'=r$,   \eqref{Lass} amounts to~\eqref{Mass} with $\alpha=+$, while for $\tau'=-\tau=-$ and $r'=-r$,   \eqref{Lass} reduces to~\eqref{Mass} with $\alpha=-$.

It is easy to check that the assumptions on the $|\re k|\to\infty$ asymptotics are satisfied for the example cases \eqref{mnu}--\eqref{ellnup}, with~$\hat{\gamma}=\pi/\kappa$, and
\be
C(r)= -2\exp (i\varphi)\sinh (\pi r/\rho +i\varphi),\ \ U(r)= \exp (2i\varphi)\frac{\sinh (\pi r/\rho +i\varphi)}{\sinh (\pi r/\rho -i\varphi)}.
\ee
With due effort, the assumptions~\eqref{Lass} can also be checked for these special cases.

\begin{theorem}
With the above assumptions in effect, let $f,g\in \cC$. Then we have
\begin{multline}\label{cFstres}
(\cF^* f,\cF^* g)_2= (f,g)_1+i\sum_{j=1}^{\hat{L}} \hat{w}_j 
  \intR dr\, \overline{f(r)}\intR dr'\, g(r')w(r)^{1/2}w(r')^{1/2}
\\
\times   \sum_{\tau,\tau'=+,-}\frac{\exp(i k_j(\tau r- \tau'r'))}{1-\tau\tau'\exp(\kappa (\tau' r' - \tau r))}L^{\tau,\tau'}(  k_j,r,r').
\end{multline}
\end{theorem}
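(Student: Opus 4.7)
The strategy is to mirror the proof of Theorem~A.1 with the roles of $r$ and $k$ interchanged, exploiting the $i\kappa$-(anti)periodicity of $L^{\tau,\tau'}$ and the $i\kappa$-periodicity of $\hat{w}$ in place of the $i\rho$-periodicity used there, and replacing the $S$-matrix unitarity input \eqref{unit1}--\eqref{unit2} with the phase identity $|U(r)|=1$ from \eqref{cw}.

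First I would expand $(\cF^*f,\cF^*g)_2$ via \eqref{cTstdef} and the conjugation relation \eqref{Psiconj}, push the $k$-integral inside the double $r,r'$-integral by Fubini (justified by the compact support of $f,g$ together with the $|\re k|\to\infty$ decay), and insert \eqref{Psiform} along with $m^\tau=v\,\ell^\tau$. Relabeling $\tau,\tau'\to-\tau,-\tau'$ in the $\de=-$ summand and using $\hat{w}(k)=v(k)v(-k)$, the two $\de$-components fuse into
\begin{equation*}
(\cF^*f,\cF^*g)_2=\lim_{\Lambda\to\infty}\frac{1}{2\pi}\int\int dr\,dr'\,\overline{f(r)}g(r')\,w(r)^{1/2}w(r')^{1/2}\sum_{\tau,\tau'}I_\Lambda^{\tau,\tau'}(r,r'),
\end{equation*}
with $I_\Lambda^{\tau,\tau'}(r,r')\equiv\int_0^\Lambda dk\,\hat{w}(k)L^{\tau,\tau'}(k,r,r')\,e^{ik(\tau r-\tau' r')}$.

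Next I would apply Cauchy's theorem to the rectangular contour with vertices $0,\Lambda,\Lambda+i\kappa,i\kappa$. By the $i\kappa$-periodicity of $\hat{w}$ and the $i\kappa$-(anti)periodicity of $L^{\tau,\tau'}$, the integrand acquires the factor $\tau\tau'\exp(-\kappa(\tau r-\tau' r'))$ under $k\to k+i\kappa$, which gives
\begin{equation*}
(1-\tau\tau' e^{-\kappa(\tau r-\tau' r')})\,I_\Lambda^{\tau,\tau'}(r,r')=2\pi i\sum_{j}\hat{w}_j L^{\tau,\tau'}(k_j,r,r')\,e^{ik_j(\tau r-\tau' r')}+B_\Lambda^{\tau,\tau'}(r,r'),
\end{equation*}
summed over the enclosed $\hat{w}$-poles, with $B_\Lambda^{\tau,\tau'}$ the two vertical-edge contributions. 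Folding the poles via the pairing \eqref{whpoles}--\eqref{whresid}---using the identity $L^{\tau,\tau'}(-k,r,r')=L^{-\tau,-\tau'}(k,r,r')$ that follows from combining \eqref{Lass} with \eqref{keven}---collapses the residue sum to the $\sum_{j=1}^{\hat L}$ on the right-hand side of \eqref{cFstres}, exactly as the $M$-identification does in the proof of Theorem~A.1.

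The main technical hurdle is extracting the identity contribution $(f,g)_1$ from the $\Lambda\to\infty$ limit of the right-end boundary in $B_\Lambda^{\tau,\tau'}$. I would recentre the relevant vertical segment at $k=\Lambda+i\kappa/2$ and invoke Lemma~C.1 with the variable change $s=2k/\kappa$, $t=\rho r/2$, $t'=\rho r'/2$---the $r\leftrightarrow k$ counterpart of its use in the proof of Theorem~A.1. Its hypotheses are secured by the reflectionless asymptotics \eqref{Psikas}, whose "unitarity" $|U(r)|=1$ from \eqref{cw} plays the role that the $S$-matrix unitarity \eqref{unit1}--\eqref{unit2} did in Appendix~A, while the strengthened evenness \eqref{Lass} supplies the symmetry input that \eqref{Mass} provided there. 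In the diagonal case $\tau=\tau'$ the lemma yields $(f,g)_1$, playing the role $(f,g)_2$ played in Theorem~A.1; for $\tau=-\tau'$ the phase $e^{ik\tau(r+r')}$ is nonstationary against the smooth compactly supported $\overline{f}\,g$ and the boundary contribution vanishes by Riemann--Lebesgue. The $k=0$ vertical pieces yield smooth $\Lambda$-independent contributions that cancel out of the paired-residue bookkeeping thanks to the same evenness symmetries; assembling all pieces gives \eqref{cFstres}.
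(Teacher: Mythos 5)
Your overall strategy is the paper's: mirror the proof of Theorem~A.1 with $r$ and $k$ interchanged, shift the $k$-contour by $i\kappa$, collect the residues of $\hat w$, kill the off-diagonal channels, and extract $(f,g)_1$ from the $\re k=\Lambda$ edge via Lemma~C.1, with $|C(r)|^2=1/w(r)$ playing the role of the unitarity relations. The genuine gap is in your contour step. You keep the $k$-integral on $(0,\Lambda)$ and close over the rectangle with vertices $0,\Lambda,\Lambda+i\kappa,i\kappa$. Since $\hat w(k)=v(k)v(-k)$ is even as well as $i\kappa$-periodic, its poles in the strip $\im k\in(0,\kappa)$ come in pairs $k_j$, $i\kappa-k_j$ with \emph{opposite} real parts. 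Consequently, either the poles lie on the imaginary axis, which is precisely what happens in every realization in this paper (e.g.\ $k_j=ij\pi/\rho$ in \eqref{kj}, and the examples of Propositions~B.3--B.5), and then your left edge runs straight through them, so Cauchy's theorem cannot be applied as stated; or the poles lie off the axis, and then only one member of each pair is enclosed, so the advertised ``folding via the pairing \eqref{whpoles}--\eqref{whresid}'' has nothing to fold (the one-sided residue sum does happen to equal the folded expression on the rhs of \eqref{cFstres}, but only by an invariance under relabeling the pair members which you neither state nor prove). The paper avoids all of this by first using \eqref{Lasseq} and the evenness of $\hat w$ to extend the integral to $(-\Lambda,\Lambda)$ with a factor $1/2$, and only then closing over the rectangle with corners $-\Lambda,\Lambda,\Lambda+i\kappa,-\Lambda+i\kappa$, which encloses all $2\hat L$ poles and keeps both vertical edges far from them; the pairing then genuinely collapses $2\hat L$ residues to the $\hat L$ terms in \eqref{cFstres}.

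Two further points. With only one $\Lambda$-edge per channel, a single diagonal channel $\tau'=\tau$ supplies only one of the two phases $e^{\pm ik(r-r')}$ that Lemma~C.1 requires (its hypothesis \eqref{Geq} is what cancels the $1/\sinh(t'-t)$ singularity at $r=r'$); you must combine the channels $\tau'=\tau=+$ and $\tau'=\tau=-$ before invoking the lemma, whereas in the paper the two phases arise within each channel from the two edges at $\pm\Lambda$. Also, the substitution must be $t=\kappa r/2$, $t'=\kappa r'/2$ (not $\rho r/2$), or the phase is not of the form $e^{i\alpha s(t-t')}$. Your observation that the $\Lambda$-independent edge along $[0,i\kappa]$ cancels after the $(\tau,\tau')\to(-\tau,-\tau')$ pairing is correct when no poles lie on that edge, but that proviso is exactly where the construction breaks down in the cases of interest.
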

\begin{proof}
The proof proceeds along the same lines as the proof of Theorem~A.1. We begin by noting that we have an identity
\be\label{idextra}
L^{\tau,\tau}(k,r,r)=L^{-\tau,-\tau}(k,r,r),
\ee
as readily follows from~\eqref{Lttp} and~\eqref{Lass}. Thus  the residue integrals involving~$\tau'=\tau$ are absolutely convergent. (For~$\tau'=-\tau$ absolute convergence is immediate, since then the denominator has no zeros.)

Next, we use \eqref{cTst} and Fubini's theorem to write
\be\label{Lalim}
(\cF^* f,\cF^* g)_2= \lim_{\Lambda \to\infty} \intR dr\, \overline{f(r)}\intR dr'\, g(r')I(\Lambda ,r,r'),
\ee
with
\be
I (\Lambda ,r,r')\equiv  \frac{1}{2\pi}\int_{0}^\Lambda  dk\, \big(\Psi(r,k) \Psi( r',-k)+\Psi(-r,k) \Psi(- r',-k)\big),
\ee
cf.~\eqref{Fp}--\eqref{Psiconj}. Using~\eqref{Psiform}, \eqref{mtform}--\eqref{hwdef}, and our evenness assumptions~\eqref{Lass}, we now deduce
\be\label{IhJ}
I (\Lambda ,r,r')=\frac{1}{4\pi}w(r)^{1/2}w(r')^{1/2}\sumtt \int_{-\Lambda }^\Lambda  dk\,  \hat{w}(k)J^{\tau,\tau'} (k,r,r'),
\ee
where
\be\label{Jddef}
 J^{\tau,\tau'} (k,r,r')\equiv L^{\tau,\tau'}(k,r,r')  \exp(ik (\tau r-\tau' r')).
\ee

Fixing $\tau$ and $\tau'$, the integrand in~\eqref{IhJ} picks up a $k$-independent multiplier when~$k$ is shifted by $i\kappa$. Indeed, this is clear for the plane wave factor, whereas $\hat{w}(k)L^{\tau,\tau'}(k,r,r') $ is  $i\kappa$-periodic/$i\kappa$-antiperiodic for $\tau\tau'=+$/$\tau\tau'=-$. Using Cauchy's theorem, we therefore get
\begin{multline}\label{hJint}
 \big(1-\tau\tau'\exp(\kappa(\tau' r'-\tau r))\big)\int_{-\Lambda }^\Lambda  dk\, \hat{w}(k)J^{\tau,\tau'}(k,r,r') 
 \\
= 2\pi i\sum_{j=1}^{2\hat{L}} \hat{w}_j J^{\tau,\tau'}(k_j,r,r')+B^{\tau,\tau'}(\Lambda ,r,r'),
\end{multline}
where
\be\label{BdttLam}
B^{\tau,\tau'} (\Lambda ,r,r')\equiv
-\left( \int_\Lambda ^{\Lambda +i\kappa}+\int_{-\Lambda +i\kappa}^{-\Lambda } \right) dk\, \hat{w}(k)J^{\tau,\tau'}(k,r,r').
\ee
As before, we choose $\Lambda $ sufficiently large so that the   contour   encloses all of the $\hat{w}$-poles.

We proceed to rewrite   the residue sum   as
\begin{multline}\label{ressum2}
\sum_{j=1}^{2\hat{L}} \hat{w}_j J^{\tau,\tau'}(k_j,r,r')
 =   \sum_{j=1}^{\hat{L}} \hat{w}_j \Big(
 \exp(i k_j(\tau r-\tau' r'))
 L^{\tau,\tau'}( k_j,r,r')
  \\
-\tau\tau'\exp(\kappa(\tau' r'-\tau r))\exp(-i k_j(\tau r-\tau' r'))
L^{\tau,\tau'}(- k_j,r,r')
  \Big),
\end{multline}
where we used \eqref{whpoles}--\eqref{whresid}. Hence we deduce
\begin{multline}
\sum_{\tau,\tau'=+,-}\sum_{j=1}^{2\hat{L}}\hat{w}_j J^{\tau,\tau'} (k_j,r,r')\big/[1-\tau\tau'\exp(\kappa(\tau' r'-\tau r))]
\\
=\sum_{j=1}^{\hat{L}}\sum_{\tau,\tau'=+,-}\frac{ \hat{w}_j}{1-\tau\tau'\exp(\kappa(\tau' r'- \tau r))} 
\big[\exp(ik_j(\tau r-\tau' r'))L^{\tau,\tau'}( k_j,r,r')
\\
-\tau\tau'\exp(\kappa(\tau' r'- \tau r))\exp(-ik_j(\tau r-\tau' r'))L^{-\tau,-\tau'}( k_j,r,r')\big],
\end{multline}
where we have used an identity equivalent to~\eqref{Lass}, viz.,
\be\label{Lasseq}
L^{\tau,\tau'}(-k,r,r')=L^{-\tau,-\tau'}(k,r,r').
\ee
In the second sum we now take $\tau,\tau'\to -\tau,-\tau'$, thus  arriving at
 \be\label{crit2}
2\sum_{j=1}^{\hat{L}} \sum_{\tau,\tau'=+,-}\frac{ \hat{w}_j }{1-\tau\tau'\exp(\kappa(\tau' r'- \tau r))} 
\exp(ik_j(\tau r-\tau' r'))L^{\tau,\tau'}( k_j,r,r').
\ee
Combining this with~\eqref{Lalim}--\eqref{hJint}, we readily obtain the residue integrals on the rhs of~\eqref{cFstres}.

Next, we change variables to rewrite \eqref{BdttLam} as
\be
B^{\tau,\tau'} (\Lambda ,r,r')=
 \int_{\Lambda-i\kappa/2} ^{\Lambda +i\kappa/2}  dk\, \hat{w}(k+i\kappa/2)\big(J^{\tau,\tau'}(-k+i\kappa/2,r,r')-J^{\tau,\tau'}(k+i\kappa/2,r,r')\big).
\ee
From this we infer, using \eqref{Jddef},
\begin{multline}\label{bndint}
 \frac{B^{\tau,\tau'} (\Lambda ,r,r')}{1-\tau\tau'\exp(\kappa(\tau' r'-\tau r))} =\frac{\exp(\kappa(\tau' r'-\tau r)/2)}{1-\tau\tau'\exp(\kappa(\tau' r'-\tau r))} \int_{\Lambda -i\kappa/2}^{\Lambda +i\kappa/2}  dk\, \hat{w}(k+i\kappa/2)
\\
\times\sum_{\sigma=+,-}\sigma \exp(i\sigma k(\tau' r'-\tau r))L^{\tau,\tau'}(-\sigma k+i\kappa/2,r,r') .
\end{multline}

We now study the $\Lambda\to\infty$ limit of the four terms
\be
T^{\tau,\tau'}(\Lambda)\equiv \frac{1}{4\pi}\intR dr\, \overline{f(r)}\intR dr'\, g(r')w(r)^{1/2}w(r')^{1/2}\frac{B^{\tau,\tau'} (\Lambda ,r,r')}{1-\tau\tau'\exp(\kappa(\tau' r'-\tau r))},
\ee
contributing to the rhs of~\eqref{Lalim}. For the cases $\tau'=-\tau$, it follows from the uniform asymptotics~\eqref{mpask}--\eqref{mmask} that we have
\be
\lim_{\Lambda\to\infty}T^{\tau,-\tau}(\Lambda)=0,\ \ \ \tau=+,-.
\ee
(Note that in \eqref{bndint} we may replace $\hat{w}(k+i\kappa/2)$ by $\hat{w}(-\sigma k+i\kappa/2)$, since $\hat{w}$ is even and $i\kappa$-periodic.) For the cases $\tau'=\tau$, the rhs of~\eqref{bndint} can be rewritten as
\begin{multline}\label{brhs}
\frac{1}{2\sinh(\kappa (r'-r)/2)}  \int_{\Lambda -i\kappa/2}^{\Lambda +i\kappa/2}  dk\, \sum_{\alpha=+,-}\alpha 
  \exp(i\alpha  k( r- r'))
  \\
  \times\hat{w}(\alpha k+i\kappa/2)L^{+,+}(\alpha k+i\kappa/2,r,r'),
 \end{multline}
 where we used \eqref{Lasseq} with $\tau'=\tau=-$ and $i\kappa$-periodicity. 
 
We now change variables
\be\label{stt2}
t=\kappa r/2,\ \ t'=\kappa r'/2,\ \ s=2k/\kappa,
\ee
and set
\be
G_{\alpha}(s,t,t')\equiv \hat{w}(\alpha \kappa (s+i)/2)L^{+,+}(\alpha \kappa (s+i)/2, 2t/\kappa,2t'/\kappa),\ \ \ \alpha=+,-.
\ee
Then \eqref{brhs} can be rewritten as
\be
\frac{\kappa}{4\sinh(t'-t)}\int_{2\Lambda /\kappa -i}^{2\Lambda /\kappa +i}ds\, \sum_{\alpha=+,-}\alpha\exp(i\alpha s(t-t'))G_{\alpha}(s,t,t').
\ee
 Thanks to~\eqref{keven},  the functions $G_{\pm}(s,t,t')$ satisfy~\eqref{Geq}. Moreover, our assumptions \eqref{mpask}--\eqref{mexk} about the $\re k\to\pm \infty$ asymptotics of the coefficients $m^{\tau}(r,k)$  ensure that $G_{\pm}(s,t,t')$ also satisfy~\eqref{Gpmas}--\eqref{rhojpart}, with $\eta=\hat{\gamma}\kappa/2$, and
 \be
A_1(t,t')=C(2t/\kappa)\overline{C(2t'/\kappa)},\ \ \ \ \ A_2(t,t')= \overline{A_1(t,t')}. 
\ee
  Putting
\be
\phi(t,t')\equiv \frac{1}{4\pi\kappa}\,\overline{f(2t/\kappa)}g(2t'/\kappa)w(2t/\kappa)^{1/2}w(2t'/\kappa)^{1/2},
\ee
it follows that the assumptions of Lemma~C.1 are obeyed, with
\be
A(t)=|C(2t/\kappa)|^2.
\ee
 In view of~\eqref{cw}, we therefore get
\be
\lim_{\Lambda\to\infty}T^{\tau,\tau}(\Lambda)=(f,g)_1/2,\ \ \ \tau=+,-,
\ee
and hence the theorem follows.
\end{proof}

Just as for the residue sum in Theorem~A.1,  in the main text it is more expedient to use an alternative form for the residue sum in Theorem~B.1, as specified by the following corollary.

\begin{corollary}
Setting
\be\label{defR}
R(r,r')\equiv iw(r)^{1/2}w(r')^{1/2}\sum_{j=1}^{\hat{L}} \hat{w}_j 
  \sum_{\tau,\tau'=+,-}\frac{ \Lambda^{\tau,\tau'}(k_j,r,r')}{1-\tau\tau'\exp(\kappa (\tau' r'-\tau r))},
  \ee
  where
\be\label{Lamdef}
\Lambda^{\tau,\tau'}(k,r,r')\equiv  \lambda^{\tau}(r, k)\lambda^{\tau'}(r',- k)+\lambda^{-\tau}(- r,k)\lambda^{-\tau'}(-r',- k), 
\ee
\be\label{lamdef}
\lambda^{\tau}(r,k)\equiv \exp(i\tau rk)\ell^{\tau}(r,k),
\ee
the operator $\cF^*$ is isometric if and only if the residue sum~$R(r,r')$ vanishes.
\end{corollary}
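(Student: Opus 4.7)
My approach is to view this corollary as a direct repackaging of Theorem~B.1, entirely analogous to how Corollary~A.2 repackages Theorem~A.1. The first step is the elementary identity
\be\label{keyid}
\Lambda^{\tau,\tau'}(k,r,r') = \exp(ik(\tau r - \tau' r'))\,L^{\tau,\tau'}(k,r,r'),
\ee
which follows immediately from the definitions~\eqref{Lttp}, \eqref{Lamdef}, \eqref{lamdef}: in the first summand of $\Lambda^{\tau,\tau'}$ the plane wave factors combine to $\exp(ik(\tau r - \tau' r'))$, and the same factor comes out of the second summand via $\lambda^{-\tau}(-r,k)=\exp(i\tau rk)\ell^{-\tau}(-r,k)$ and $\lambda^{-\tau'}(-r',-k)=\exp(-i\tau' r'k)\ell^{-\tau'}(-r',-k)$. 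Factoring and comparing with \eqref{Lttp} then yields \eqref{keyid}.

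Substituting \eqref{keyid} into the residue sum on the right-hand side of~\eqref{cFstres} shows that Theorem~B.1 can be recast as
$$(\cF^* f,\cF^* g)_2 = (f,g)_1 + \intR dr\intR dr'\, \overline{f(r)}\,g(r')\,R(r,r'),\qquad f,g\in\cC.$$
Since $\cC$ is dense in $\cH$, the operator $\cF^*$ extends to an isometry on $\cH$ if and only if $\|\cF^* f\|_2 = \|f\|_1$ for all $f\in\cC$, which by polarization is equivalent to $(\cF^*f,\cF^*g)_2=(f,g)_1$ on $\cC\times\cC$, and thus to the vanishing of the sesquilinear form with kernel $R$ on $\cC\times\cC$. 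Provided $R$ is continuous, this is in turn equivalent to $R\equiv 0$.

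The one technical point I would have to verify is precisely this continuity of $R(r,r')$ on $\R^2$, and this is where I expect the only real subtlety. The denominators $1-\tau\tau'\exp(\kappa(\tau' r'-\tau r))$ are non-vanishing on $\R^2$ when $\tau\tau'=-$, but for $\tau\tau'=+$ they vanish along the diagonal $r=r'$. To see that the combined contribution is regular there, I would pair the $\tau=\tau'=+$ and $\tau=\tau'=-$ terms: the identity $\Lambda^{+,+}(k_j,r,r)=\Lambda^{-,-}(k_j,r,r)$, which is a direct consequence of~\eqref{idextra} and the definition of $\Lambda$, combined with the elementary telescoping identity $(1-e^x)^{-1}+(1-e^{-x})^{-1}=1$, shows that the apparent simple poles at $r=r'$ cancel and leave a continuous remainder. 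Once this check is in place, no further obstacles remain.
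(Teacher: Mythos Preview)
Your proposal is correct and follows essentially the same route as the paper, which dispatches the corollary in one line by invoking~\eqref{cFstres} and~\eqref{Lttp}. Your identity~\eqref{keyid} is exactly the bridge the paper has in mind, and the continuity check you sketch at the diagonal is precisely the content of the first paragraph of the proof of Theorem~B.1 (via~\eqref{idextra}); you have simply made explicit what the paper leaves implicit.
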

\begin{proof}
This easily follows from~\eqref{cFstres} and~\eqref{Lttp}.  
\end{proof}

For the examples~\eqref{wsp}--\eqref{ellnup}, we should distinguish two cases for the simple pole locations~$k_1$ and $i\kappa-ik_1$ in the period strip  $\im k\in(0,\kappa)$, viz.,
\be\label{kap1}
k_1=2i\kappa\varphi/\pi,\ \ \ \varphi\in(0,\pi/2), \ \ \ \varphi\ne \pi/4,
\ee
and
\be\label{kap2}
k_1=2i\kappa\varphi/\pi -i\kappa,\ \ \ \varphi\in(\pi/2,\pi),\ \ \ \varphi\ne 3\pi/4.
\ee
Also, we obtain 
\be
\ell_{\sigma}^-(r,\pm k_1)=2i\sigma\sinh(2i\varphi)\exp(-\pi r/\rho),
\ee
\be
\ell_{\sigma}^+(r, k_1)=\mp 2i\sinh(2i\varphi)\exp(\pi r/\rho),
\ee 
\be
\ell_{\sigma}^+(r, -k_1)=\mp 2i\sinh(2i\varphi)\big(2\cosh(2i\varphi)\exp(-\pi r/\rho)-\exp(\pi r/\rho)\big),
\ee
so that \eqref{Lttp} yields
\be\label{Lp}
L^{\tau,\tau}( k_1,r,r')=-8\sinh(2i\varphi)^2\cosh(2i\varphi)\exp(\tau \pi (r-r')/\rho),\ \ \tau=+,-,
\ee
\be\label{Lm}
L^{\tau,-\tau}( k_1,r,r')=\pm 8\sigma\sinh(2i\varphi)^2\cosh(2i\varphi)\exp(\tau \pi (r+r')/\rho),\ \ \sigma,\tau=+,-,
\ee 
with the upper/lower sign referring to~\eqref{kap1}/\eqref{kap2}. From this we calculate ratios
\be
L^{-,-}/L^{+,+} =e^{2\pi(r'-r)/\rho},\ \ 
L^{+,-}/L^{+,+} =\mp \sigma e^{2\pi r'/\rho},
\ \
L^{-,+} /L^{+,+} =\mp \sigma e^{-2\pi r/\rho}.
\ee

With~\eqref{kap1} in force, the residue sum in~\eqref{cFstres}  is therefore proportional to
\begin{multline}
\frac{1}{1-\exp(\kappa(r'-r))}+\frac{\exp((2ik_1+2\pi/\rho)(r'-r))}{1-\exp(\kappa(-r'+r))}
\\
-\sigma\frac{\exp((2ik_1+2\pi/\rho)r')}{1+\exp(\kappa(-r'-r))}
-\sigma\frac{\exp(-(2ik_1+2\pi/\rho)r)}{1+\exp(\kappa(r'+r))}.
\end{multline}
For this to vanish we can choose
\be
\sigma =+,\ \ \ ik_1+\pi/\rho=0\Leftrightarrow \varphi =\pi^2/2\rho\kappa=\phi_0,
\ee
cf.~\eqref{rk0}. (Indeed, vanishing boils down to~\eqref{id1} with $A,A'\to iA, iA'$.) As a consequence, the transform $\cF_+(\phi_0)$ is unitary for $\phi_0=\pi^2/2\rho\kappa\in(0,\pi/2)$, $\phi_0\ne \pi/4$. Using a continuity argument, it follows that $\cF_+(\phi_0)$ is unitary for the double-pole case $\phi_0=\pi/4$, too.

Next, we consider the case~\eqref{kap2}. Then~\eqref{hw} yields the residue
\be\label{hw1}
\hat{w}_1=\frac{\kappa}{4\pi\sinh(4i\varphi)}.
\ee
From~\eqref{Lp}--\eqref{Lm} we now deduce
\be
R(r,r')=\cN(\varphi)w(r)^{1/2}w(r')^{1/2}S_{\sigma}(\varphi;r,r'),
\ee
where we have introduced
\be
\cN(\varphi)\equiv \frac{\kappa\sin(2\varphi)}{\pi}\in (-\infty,0),\ \ \ \varphi\in(\pi/2,\pi),
\ee
and
\be\label{Sdef}
S_{\sigma}(\varphi;r,r')\equiv \sum_{\tau=+,-}\Big(\frac{\exp(\tau (ik_1+\pi/\rho)( r- r'))}{1-\exp(\tau\kappa (r' - r))} 
+\sigma
\frac{\exp(\tau (ik_1+\pi/\rho)( r+ r'))}{1+\exp(-\tau\kappa (r' + r))}  \Big).
\ee

We proceed to study the two $\varphi$-choices $\phi_0$, $\phi_e\in(\pi/2,\pi)$, for which we already know that $\cF_+(\phi_0)$ and $\cF_{\pm}(\phi_e)$ are isometric. (Recall $\rho\kappa\in(\pi/2,\pi)$ in the first case and $\rho\kappa\in (\pi,\infty)$ in the second one, cf.~\eqref{rk0}--\eqref{rke}.) For the case
 $\varphi=\phi_0$ we obtain, putting 
\be\label{AAp}
A\equiv \exp(\kappa r),\ \ \ A'\equiv \exp(\kappa r'),
\ee
and setting  $ik_1+\pi/\rho=\kappa$ in~\eqref{Sdef},
\begin{multline}\label{S0}
S_+(\phi_0;r,r')=\frac{A}{A'}\,\frac{1}{1-A'/A}+\frac{A'}{A}\,\frac{1}{1-A/A'} +AA'\,\frac{1}{1+1/A'A}+\frac{1}{A'A}\,\frac{1}{1+A'A} 
\\
=\frac{A'}{A}+\frac{A}{A'}+A'A+\frac{1}{A'A}=4\cosh(\kappa r)\cosh(\kappa r').
\end{multline}
Introducing the function (cf.~\eqref{wsp})
\be\label{Psi0}
\Psi_0(r)\equiv 2\cosh(\kappa r) w_0(r)^{1/2}, \ \ \ w_0(r)\equiv 1\big/4\sinh \Big(\frac{\pi}{\rho}\big( r+i\frac{\pi}{2\kappa}\big)\Big)\sinh \Big(\frac{\pi}{\rho}\big( r-i\frac{\pi}{2\kappa}\big)\Big),
\ee
 we get
\be
\Psi_0(\cdot)\in \cH, \ \ \rho\kappa \in(\pi/2,\pi).
\ee
Thus we can rewrite~\eqref{cFstres} as
\be\label{F0st}
(\cF_+(\phi_0)^* f,\cF_+(\phi_0)^* g)_2= (f,g)_1+\frac{\kappa \sin (\pi^2/\rho\kappa)}{\pi}(f,\Psi_0)_1(\Psi_0,g)_1.
\ee

Now $\cF_+(\phi_0)$ is isometric, so $\cF_+(\phi_0)^*$ has a continuous extension from~$\cC$ to a partial isometry such that
\be 
\cF_+(\phi_0)^*\cF_+(\phi_0)={\bf 1}_{\hat{\cH} },\ \ \ \cF_+(\phi_0)\cF_+(\phi_0)^*={\bf 1}_{ \cH }-P,
\ee
where $P$ is the projection on the orthocomplement of $\cF_+(\phi_0)(\hat{\cH} )$.
 In particular, this implies \eqref{F0st} holds true for all $f,g\in\cH$, which yields
 \be
 \cF_+(\phi_0)\cF_+(\phi_0)^*={\bf 1}_{ \cH }+\frac{\kappa \sin (\pi^2/\rho\kappa)}{\pi}\Psi_0\otimes \Psi_0.
 \ee 
Hence we have~$\cF_+(\phi_0)^*\Psi_0=0$ and  
\be\label{inpr0}
(\Psi_0,\Psi_0)_1=-\frac{\pi}{\kappa \sin (\pi^2/\rho\kappa)},\ \ \rho\kappa \in(\pi/2,\pi).
\ee
By continuity these formulas are also valid for the double-pole case $\phi_0=3\pi/4$.

Let us now summarize our findings regarding~$\cF_+(\phi_0)$.

\begin{proposition}
The transform~$\cF_+(\phi_0) $ defined by~\eqref{cF}--\eqref{Fm} with
\begin{multline}\label{cFp0}
\Psi(r,k)\equiv \frac{w_0(r)^{1/2}}{\sinh\Big( \frac{\pi}{\kappa}\big( k-\frac{i\pi}{\rho}\big)\Big)}
\Big( e^{irk}\Big[ e^{-\pi r/\rho}\sinh\Big( \frac{\pi}{\kappa}\big( k-\frac{i\pi}{\rho}\big)\Big) -
e^{\pi r/\rho}\sinh\big( \frac{\pi}{\kappa} k \big)\Big]
\\
+e^{-irk}e^{-\pi r/\rho}\sinh\big( \frac{i\pi^2}{\rho\kappa}  \big)\Big),\ \ \ \rho\kappa\in(\pi/2,\infty),
\end{multline}
 is unitary for $\rho\kappa\in [\pi,\infty)$. For $\rho\kappa\in (\pi/2,\pi)$ it is isometric and satisfies
\be
\cF_+(\phi_0)\cF_+(\phi_0)^*={\bf 1}_{ \cH }-\Psi_0\otimes \Psi_0/(\Psi_0,\Psi_0)_1.
\ee
Here, $w_0(r)$ and $\Psi_0(r)$ are defined by~\eqref{Psi0} and the inner product is given by~\eqref{inpr0}.
\end{proposition}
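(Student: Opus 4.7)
The approach is to read off the proposition as an organized application of the two workhorses Theorem~A.1/Corollary~A.2 (for isometry) and Theorem~B.1/Corollary~B.2 (for the adjoint), with the concrete residue calculations already laid out for the example family $\cF_\pm(\varphi)$ in the body of Appendix~B. First I would verify that \eqref{cFp0} is literally the kernel of $\cF_+(\phi_0)$ from \eqref{wsp}--\eqref{ellnup} with $\sigma=+$ and $\varphi=\phi_0=\pi^2/2\rho\kappa$: substitution into \eqref{Psiform} together with \eqref{mnu}--\eqref{ellnup} reproduces \eqref{cFp0} after clearing the common denominator $2i\sinh(\pi k/\kappa-2i\phi_0)=2i\sinh((\pi/\kappa)(k-i\pi/\rho))$.

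Second, for isometry on the whole range $\rho\kappa>\pi/2$, the single $w$-pole is $r_1=i\rho\phi_0/\pi\in i(0,\rho)$, and the condition $2ir_1+\pi/\kappa=0$ of \eqref{phi1} holds by definition of $\phi_0$. Corollary~A.2 with the ratios~\eqref{ratios} reduces the vanishing of the residue sum to the identity~\eqref{id1}, which is trivial. For unitarity on $[\pi,\infty)$, the endpoint $\rho\kappa=\pi$ gives $\phi_0=\pi/2$, whence $\cF_+(\pi/2)=\cF_0$ by~\eqref{phiexc}. For $\rho\kappa>\pi$ we have $\phi_0\in(0,\pi/2)$, so the simple $\hat w$-pole in the strip is of type~\eqref{kap1} with $k_1=2i\kappa\phi_0/\pi=i\pi/\rho$, and $ik_1+\pi/\rho=0$. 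Corollary~B.2 together with the expressions~\eqref{Lp}--\eqref{Lm} then reduces the vanishing of $R(r,r')$ (for $\sigma=+$) to~\eqref{id1} with $A,A'\to iA,iA'$. The single excluded value $\phi_0=\pi/4$ (i.e.\ $\rho\kappa=2\pi$), at which $\hat w$ develops a double pole, is recovered by continuity of $\cF_+(\phi_0)\cF_+(\phi_0)^{*}$ in $\rho\kappa$.

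Third, for $\rho\kappa\in(\pi/2,\pi)$ the pole migrates into type~\eqref{kap2}, with $k_1=2i\kappa\phi_0/\pi-i\kappa$ and $ik_1+\pi/\rho=\kappa$. Substituting into~\eqref{Sdef} and carrying out the telescoping displayed in~\eqref{S0} gives $S_+(\phi_0;r,r')=4\cosh(\kappa r)\cosh(\kappa r')$, so \eqref{defR} and \eqref{hw1} together with $\cN(\phi_0)=(\kappa/\pi)\sin(\pi^2/\rho\kappa)$ collapse $R(r,r')$ to a rank-one kernel proportional to $\Psi_0(r)\Psi_0(r')$, where $\Psi_0$ is defined in~\eqref{Psi0}. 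Plugging this into~\eqref{cFstres} yields \eqref{F0st}. Since $\cF_+(\phi_0)$ is already isometric by step two, $\cF_+(\phi_0)\cF_+(\phi_0)^{*}$ is the orthogonal projection $\mathbf 1_{\cH}-P$ onto the range, so \eqref{F0st} extends by density to all $f,g\in\cH$ and forces $P$ to be rank one with range $\C\Psi_0$. Matching normalizations then reads off the inner product~\eqref{inpr0} and the stated formula $\cF\cF^{*}=\mathbf 1-\Psi_0\otimes\Psi_0/(\Psi_0,\Psi_0)_1$. The double-pole case $\phi_0=3\pi/4$ (i.e.\ $\rho\kappa=2\pi/3$) is again handled by continuity.

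The main obstacle is conceptually modest but requires care: as $\rho\kappa$ crosses $\pi$ from above, the $\hat w$-pole responsible for the defect switches from case \eqref{kap1} (vanishing residue sum, unitarity) to case \eqref{kap2} (nonzero rank-one residue), and one must track the sign flip in the antiperiodicity prefactor $1-\tau\tau'\exp(\kappa(\tau'r'-\tau r))$ that makes this transition consistent. Once the bookkeeping that led to~\eqref{S0} is in place, the remainder is essentially a rearrangement of quantities already derived in Appendices~A and~B.
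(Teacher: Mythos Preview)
Your proposal is correct and follows essentially the same approach as the paper: the proposition is stated there as a summary of the preceding calculations in Appendices~A and~B, and you have accurately identified and organized exactly those ingredients (the isometry via \eqref{phi1}/\eqref{id1}, the unitarity for $\phi_0\in(0,\pi/2)$ via \eqref{kap1} and \eqref{id1} with $A,A'\to iA,iA'$, the endpoint $\phi_0=\pi/2$ via \eqref{phiexc}, the rank-one defect for $\phi_0\in(\pi/2,\pi)$ via \eqref{kap2}--\eqref{S0}, and the continuity arguments at the double-pole values).
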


As mentioned below~\eqref{psizero}, the function~$\Psi(r,k)$ in this proposition is a reparametrized version of the $N=0$ function $\psi(a_+;x,y)$ of the main text. We also point out that from~\eqref{cFp0} one can read off that for $\rho\kappa\in[\pi,\infty)$ the function $\Psi(r,k)$ is analytic in the strip $\im k \in [0,\kappa)$. By contrast, for $\rho\kappa\in(\pi/2,\pi)$ it has a pole in this strip at~$k=i\pi/\rho-i\kappa$, and  the bound state $\Psi_0(r)$ is proportional to the residue  at this pole. Finally, we repeat that Proposition~5.1 yields a complete picture of the transform~$\cF_+(\phi_0) $ for~$\rho\kappa\le \pi/2$.

Next, we study the choice $\varphi=\phi_e$. Then we need to set $ik_1+\pi/\rho=0$ in~\eqref{Sdef}, yielding
\be
S_{\sigma}(\phi_e;r,r')= \frac{1}{1-A'/A}+ \frac{1}{1-A/A'} +\frac{\sigma}{1+1/A'A}+ \frac{\sigma}{1+A'A}. 
\ee
For $\sigma=-$ this vanishes, implying that the transform~$\cF_-(\phi_e)$ is unitary.

By contrast, we have
\be
S_{+}(\phi_e;r,r')=2.
\ee
Defining (cf.~\eqref{wsp})
\be\label{we}
 w_e(r)\equiv  1\Big/4\cosh \Big(\frac{\pi}{\rho}\big( r+i\frac{\pi}{2\kappa}\big)\Big)\cosh \Big(\frac{\pi}{\rho}\big( r-i\frac{\pi}{2\kappa}\big)\Big),
\ee
 we get
\be\label{Psie}
\Psi_e(r)\equiv   w_e(r)^{1/2} \in \cH, \ \ \rho\kappa \in(\pi,\infty).
\ee
Thus we can rewrite~\eqref{cFstres} as
\be
(\cF_+(\phi_e)^* f,\cF_+(\phi_e)^* g)_2= (f,g)_1-\frac{2\kappa \sin (\pi^2/\rho\kappa)}{\pi}(f,\Psi_e)_1(\Psi_e,g)_1.
\ee
As before, we deduce from this that we have  $\cF_+(\phi_e)^*\Psi_e=0$ and
\be\label{inpre}
(\Psi_e,\Psi_e)_1=\frac{\pi}{2\kappa \sin (\pi^2/\rho\kappa)},\ \ \ \rho\kappa\in(\pi,\infty).
\ee
Once more, by continuity this is also true for   $\phi_e=3\pi/4$.

We proceed to summarize these results.

\begin{proposition}
Consider the transforms~$\cF_{\sigma}(\phi_e) $, $\sigma=+,-$, defined by~\eqref{cF}--\eqref{Fm} with
\begin{multline}\label{cFe}
\Psi(r,k)\equiv \frac{w_e(r)^{1/2}}{\sinh\Big( \frac{\pi}{\kappa}\big( k-\frac{i\pi}{\rho}\big)\Big)}
\Big( e^{irk}\Big[ e^{-\pi r/\rho}\sinh\Big( \frac{\pi}{\kappa}\big( k-\frac{i\pi}{\rho}\big)\Big) +
e^{\pi r/\rho}\sinh\big( \frac{\pi}{\kappa} k \big)\Big]
\\
+\sigma e^{-irk}e^{-\pi r/\rho}\sinh\big( \frac{i\pi^2}{\rho\kappa}  \big)\Big), \ \ \ \rho\kappa\in(\pi,\infty),
\end{multline}
where $w_e(r)$ is given by \eqref{we}. The transform $\cF_{-}(\phi_e) $
 is unitary, whereas $\cF_{+}(\phi_e) $ is isometric and satisfies
\be
\cF_+(\phi_e)\cF_+(\phi_e)^*={\bf 1}_{ \cH }-\Psi_e\otimes \Psi_e/(\Psi_e,\Psi_e)_1.
\ee
Here, $\Psi_e(r)$ is defined by~\eqref{Psie} and the inner product is given by~\eqref{inpre}.
\end{proposition}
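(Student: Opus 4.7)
My plan is to track the same chain of reasoning used to establish Proposition B.3 and piece together the individual assertions already verified in the paragraphs preceding the statement.

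First I would check that the kernel~\eqref{cFe} is indeed the one from~\eqref{wsp}--\eqref{ellnup} evaluated at $\varphi=\phi_e$. The weight~$w(r)$ from~\eqref{wsp} becomes $w_e(r)$~\eqref{we} after using $\sinh(\pi r/\rho\pm i\phi_e)=\pm i\cosh(\pi r/\rho\pm i\pi^2/2\rho\kappa)$, and substituting $\varphi=\phi_e$ and $2i\varphi=i\pi+i\pi^2/\rho\kappa$ in $\ell_{\sigma}^{\pm}(r,k)$ produces, up to a common factor absorbed into the overall normalization, exactly the bracketed expression inside~\eqref{cFe}. The parameter restriction $\rho\kappa\in(\pi,\infty)$ corresponds to $\phi_e\in(\pi/2,\pi)$, so the assumptions made in Appendix~B for the example class apply (modulo the excluded double-pole point $\phi_e=3\pi/4$, which I would handle at the very end by continuity).

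Next I would apply Theorem~A.1 to obtain $\cF_{\sigma}(\phi_e)^*\cF_{\sigma}(\phi_e)={\bf 1}_{\hat{\cH}}$ for both $\sigma=\pm$. This is the content of~\eqref{phi2}: the residue sum in~\eqref{cFres} is proportional to the left-hand side of identity~\eqref{id2}, which vanishes for both signs of $\sigma$ once $2ir_1+\pi/\kappa=-\rho$, i.e.\ $\varphi=\phi_e$. Hence $\cF_{\pm}(\phi_e)$ are isometric, and consequently each $\cF_{\pm}(\phi_e)^*$ extends to a partial isometry with $\cF_{\pm}(\phi_e)\cF_{\pm}(\phi_e)^*={\bf 1}_{\cH}-P_{\pm}$, where $P_{\pm}$ projects onto the orthocomplement of the range of $\cF_{\pm}(\phi_e)$.

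Then I would invoke Theorem~B.1 to compute $(\cF_{\sigma}(\phi_e)^*f,\cF_{\sigma}(\phi_e)^*g)_2$. The relevant residue sum~\eqref{defR} for the pole $k_1$~\eqref{kap2} reduces, via the ratios computed just above using~\eqref{Lp}--\eqref{Lm}, to $\cN(\phi_e)w(r)^{1/2}w(r')^{1/2}S_{\sigma}(\phi_e;r,r')$ with $S_{\sigma}$ as in~\eqref{Sdef} and $ik_1+\pi/\rho=0$. For $\sigma=-$ a direct check (all four terms pair off) gives $S_-(\phi_e;r,r')=0$, so the residue sum vanishes, $P_-=0$, and $\cF_-(\phi_e)$ is unitary. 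For $\sigma=+$ the same evaluation gives $S_+(\phi_e;r,r')=2$, producing the rank-one correction $-\tfrac{2\kappa\sin(\pi^2/\rho\kappa)}{\pi}(f,\Psi_e)_1(\Psi_e,g)_1$ with $\Psi_e$ given by~\eqref{Psie}.

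Finally I would convert this quadratic-form identity into the operator identity in the claim. Since $\cF_+(\phi_e)$ is already isometric, $P_+$ is an orthogonal projection, and $\cC$ is dense in $\cH$, the identity extends to all $f,g\in\cH$ and forces $P_+$ to be the rank-one projection onto the span of $\Psi_e$. Equating the two expressions for $P_+$ yields simultaneously $\cF_+(\phi_e)^*\Psi_e=0$, the operator formula $\cF_+(\phi_e)\cF_+(\phi_e)^*={\bf 1}_{\cH}-\Psi_e\otimes\Psi_e/(\Psi_e,\Psi_e)_1$, and the norm value~\eqref{inpre}. The only mild obstacle is the excluded point $\phi_e=3\pi/4$ (where $\hat{w}(k)$ has a double pole), which I would resolve exactly as in the argument preceding Proposition~B.3: the residue formula and the inner product~\eqref{inpre} are continuous in $\rho\kappa$ throughout $(\pi,\infty)$, so the statement extends across this single value by continuity.
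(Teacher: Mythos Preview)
Your proposal is correct and follows essentially the same route as the paper: the proposition is a summary of the computations already carried out in the paragraphs preceding it, namely the isometry verification via~\eqref{phi2}/\eqref{id2}, the evaluation of $S_{\sigma}(\phi_e;r,r')$ from~\eqref{Sdef} with $ik_1+\pi/\rho=0$, and the conversion to the rank-one projection formula, with the double-pole value $\phi_e=3\pi/4$ handled by continuity. There is nothing to add.
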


From \eqref{cFe} one sees that for $\sigma=-$ the function $\Psi(r,k)$ is analytic in the strip $\im k\in[0,\kappa)$. For $\sigma=+$ it has a pole in this strip at $k=i\pi/\rho$, and the bound state $\Psi_e(r)$ is proportional to the residue of~$\Psi(r,k)$ at this pole.

To conclude this appendix, we add one more explicit example satisfying all of the assumptions. We begin by noting that when we have an additional assumption
\be\label{addass}
m^-(r,k)=0,
\ee
then we obtain a reflectionless transform, and it also follows that we have
\be\label{Mr0}
M^-_{\alpha}(r,k,k')=0,\ \ \ \ \  M^+_{\alpha}(r,k,k')=m^+(-r,-k)m^+(-\alpha r,k'),\ \ \alpha=+,-,
\ee
\be\label{Lr0}
L^{\tau,-\tau}(k,r,r')=0,\ \ L^{\tau,\tau}(k,r,r')=\ell^+(\tau r,k)\ell^+(\tau r',-k),\ \ \tau=+,-,
\ee
cf.~\eqref{Mtsdef} and \eqref{Lttp}. Therefore, the critical evenness assumptions~\eqref{Mass} and~\eqref{Lass} reduce to the sole assumption
\be\label{L0ass}
\ell^+(r,k)\ell^+(r',-k)=\ell^+(-r,-k)\ell^+(-r',k).
\ee
Moreover, in view of \eqref{ressum}, the condition for the transform to be isometric reduces to 
\be\label{iso0}
0=\sum_{j=1}^Lw_j
\big[
  \mu^{+}(\de r_j,-k)\mu^{+}(\de' r_j,k')
-\exp(\rho(\de k-\de' k')) 
  \mu^{+}(-\de r_j,-k)\mu^{+}(-\de' r_j,k')\big],
  \ee
where we have used the notation~\eqref{mudef}.

We are now prepared for our last example, which yields a reflectionless transform.

\begin{proposition}
The transform~$\cF_a $ defined by~\eqref{cF}--\eqref{Fm} with
\be\label{cFa}
\Psi(r,k)\equiv w_a(r)^{1/2} e^{irk}\Big[ e^{-\pi r/\rho}+e^{\pi r/\rho}\sinh\Big( \frac{\pi}{\kappa}\big( k+\frac{i\pi}{\rho}\big)\Big)\Big/\sinh\Big( \frac{\pi}{\kappa}\big( k-\frac{i\pi}{\rho}\big)\Big) \Big], 
\ee 
\be\label{wa}
w_a(r)\equiv 
1\Big/4\cosh \Big(\frac{\pi}{\rho}\big( r+i\frac{\pi}{\kappa}\big)\Big)\cosh \Big(\frac{\pi}{\rho}\big( r-i\frac{\pi}{\kappa}\big)\Big),
\ee
  is isometric for $\rho\kappa\in (2\pi,\infty)$, and satisfies
\be
\cF_a\cF_a^*={\bf 1}_{ \cH }-\Psi_a\otimes \Psi_a/(\Psi_a,\Psi_a)_1,\ \ \ \Psi_a(r)\equiv w_a(r)^{1/2},
\ee
where
\be
(\Psi_a,\Psi_a)_1=\frac{\pi}{\kappa \sin (2\pi^2/\rho\kappa)},\ \ \ \rho\kappa\in(2\pi,\infty).
\ee
\end{proposition}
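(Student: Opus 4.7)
The strategy is to recast $\cF_a$ into the framework of Appendices~A and~B, verify the standing assumptions, and then invoke the reflectionless isometry criterion~\eqref{iso0} together with Theorem~B.1 for the bound-state defect. Setting
\[
v(k)=\frac{1}{i\sinh(\pi(k-i\pi/\rho)/\kappa)},\quad \ell^+(r,k)=i\bigl[e^{-\pi r/\rho}s_-(k)+e^{\pi r/\rho}s_+(k)\bigr],
\]
with $s_\pm(k)\equiv\sinh(\pi(k\pm i\pi/\rho)/\kappa)$, recovers~\eqref{cFa} and gives $m^-\equiv 0$, so the reduced conditions~\eqref{Mr0}--\eqref{iso0} apply. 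For $\rho\kappa>2\pi$, both $w_a$ and $\hat{w}(k)=v(k)v(-k)=1/[s_-(k)s_+(k)]$ have exactly one pair of simple poles in their period strips, namely $r_{1,2}=i\rho/2\mp i\pi/\kappa$ and $k_{1,2}=i\pi/\rho,\,i\kappa-i\pi/\rho$, so $L=\hat{L}=1$. The remaining hypotheses are routine to verify, with $T(k)=s_+(k)/s_-(k)$, $R(k)=0$, and $C(r)=e^{-\pi r/\rho}+e^{\pi r/\rho}e^{2i\pi^2/(\rho\kappa)}$ satisfying $|C(r)|^2=1/w_a(r)$ via $2\cosh(A+B)\cosh(A-B)=\cosh 2A+\cosh 2B$; the evenness assumption~\eqref{L0ass} is immediate from the symmetry $\ell^+(-r,-k)=-\ell^+(r,k)$.

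For isometry, I apply~\eqref{iso0}. The pole location $r_1+i\pi/\kappa=i\rho/2$ gives the clean evaluations $e^{\pm\pi r_1/\rho}=\pm i\,e^{\mp i\pi^2/(\rho\kappa)}$, collapsing $m^+(\pm r_1,k)$ to single rational fractions in $e^{\pi k/\kappa}$ and $q\equiv e^{i\pi^2/(\rho\kappa)}$. A short calculation then exhibits the ratios $\mu^+(-r_1,-k)/\mu^+(r_1,-k)=-e^{-\rho k}$ and $\mu^+(-r_1,k')/\mu^+(r_1,k')=-e^{\rho k'}$; substituting into~\eqref{iso0} for each $(\de,\de')\in\{+,-\}^2$, the two terms in the bracket become equal and every residue sum vanishes identically.

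For the defect, Theorem~B.1 applies with $\hat{L}=1$, and since $\ell^-=0$, identity~\eqref{Lr0} eliminates the $\tau'\ne\tau$ contributions to~\eqref{defR}. At $k_1=i\pi/\rho$ one has $s_-(k_1)=0$ and $s_+(k_1)=i\sin(2\pi^2/(\rho\kappa))$, collapsing $\ell^+(r,k_1)\propto e^{\pi r/\rho}$ and $\ell^+(r',-k_1)\propto e^{-\pi r'/\rho}$; combined with the plane-wave factor $e^{i(r-r')k_1}=e^{-\pi(r-r')/\rho}$, this gives $\Lambda^{\tau,\tau}(k_1,r,r')=-\sin^2(2\pi^2/(\rho\kappa))$ independently of $\tau,r,r'$, so the $\tau$-sum reduces to the elementary identity $\sum_\tau 1/(1-e^{\kappa\tau(r'-r)})=1$. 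Together with the residue $\hat{w}_1=-i\kappa/[\pi\sin(2\pi^2/(\rho\kappa))]$, this produces the rank-one kernel $R(r,r')=-(\kappa\sin(2\pi^2/(\rho\kappa))/\pi)\,w_a(r)^{1/2}w_a(r')^{1/2}$. Since isometry of $\cF_a$ makes $\cF_a\cF_a^*$ a projection, extending $(\cF_a^*f,\cF_a^*g)_2=(f,g)_1+(Rf,g)_1$ from $\cC$ to $\cH$ by continuity yields $\cF_a\cF_a^*={\bf 1}_\cH-(\kappa\sin(2\pi^2/(\rho\kappa))/\pi)\,\Psi_a\otimes\Psi_a$, and the projection identity $P^2=P$ then forces $(\Psi_a,\Psi_a)_1=\pi/[\kappa\sin(2\pi^2/(\rho\kappa))]$. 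The only substantive calculational step is the reduction of $\mu^+(\pm r_1,\cdot)$ to single rational fractions; everything else follows the template of Proposition~B.3 essentially verbatim.
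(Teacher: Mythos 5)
Your proposal is correct and follows essentially the same route as the paper's proof: identify $v(k)$ and $\ell^+(r,k)$, observe $m^-\equiv 0$ so the reflectionless reductions \eqref{Mr0}--\eqref{iso0} apply, evaluate the residue data at the single pole pair in each period strip ($L=\hat{L}=1$), show the bracket in \eqref{iso0} vanishes identically, and extract the rank-one defect from Theorem~B.1 together with the projection identity to fix $(\Psi_a,\Psi_a)_1$ (the same device the paper uses for $\cF_+(\phi_0)$ around \eqref{F0st}--\eqref{inpr0}). The only deviations are an immaterial factor-of-two renormalization of $v$ and $\ell^+$ and the choice of which member of the pole pair to label $r_1$; all your intermediate evaluations check out against \eqref{ellp} and the paper's residue computation.
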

\begin{proof}
Here we have
\be\label{vhell}
v(k)=1\Big/2i \sinh\Big( \frac{\pi}{\kappa}\big( k-\frac{i\pi}{\rho}\big)\Big), \ \ell^+(r,k)=
2ie^{\pi r/\rho}\sinh\Big( \frac{\pi}{\kappa}\big( k+\frac{i\pi}{\rho}\big)\Big)+(\rho\to -\rho).
\ee
From this the evenness assumption \eqref{L0ass} is readily verified, and all other assumptions are clearly satisfied as well, with
\be
T_a(k)=\sinh\Big( \frac{\pi}{\kappa}\big( k+\frac{i\pi}{\rho}\big)\Big)\Big/\sinh\Big( \frac{\pi}{\kappa}\big( k-\frac{i\pi}{\rho}\big)\Big),\ \ \ \ \ R_a(k)=0, 
\ee 
\be
C_a(r)=2\exp(i\pi^2/\rho\kappa)\cosh(\pi r/\rho+i\pi^2/\rho\kappa).
\ee
Consequently, Theorems~A.1 and~B.1 apply, so it remains to study the residue sums.

For the residue sum on the rhs of~\eqref{iso0}  we have $L=1$ and we can take
\be\label{r1}
r_1=i\pi/\kappa +i\rho/2\in i[0,\rho),\ \ \ \rho\kappa\in(2\pi,\infty).
\ee
From~\eqref{vhell} we then obtain
\be\label{ellp}
\ell^+(\nu r_1,k)=-2\nu \sinh(2i\pi^2/\rho\kappa) \exp(\nu\pi k/\kappa),
\ee
which entails
\be
\mu^+(\nu r_1,k)=-2\nu \sinh(2i\pi^2/\rho\kappa)v(k) \exp(-\nu  \rho k/2).
\ee
From this we see that \eqref{iso0} holds true, so~$\cF_a$ is isometric.

Turning to the residue sum in Theorem~B.1, we have $\hat{L}=1$ and
\be
k_1=i\pi/\rho\in i[0,\kappa),\ \ \ \hat{w}_1=-i\kappa/4\pi\sin(2\pi^2/\rho\kappa),\ \ \rho\kappa\in(2\pi,\infty).
\ee
From \eqref{vhell} we obtain
\be
\ell^+(r,\nu k_1)=-2\nu \sin(2\pi^2/\rho\kappa)\exp(\nu\pi r/\rho),
\ee
so \eqref{Lr0} yields
\be
L^{\tau,\tau}( k_1,r,r')=-4\sin(2\pi^2/\rho\kappa)^2\exp(\tau\pi(r-r')/\rho).
\ee

As a result, we have
\bea
 \sum_{\tau=+,-}\frac{\exp(i\tau k_1( r- r'))}{1-\exp(\tau\kappa (r' - r))}L^{\tau,\tau}(  k_1,r,r') & = & -4\sin(2\pi^2/\rho\kappa)^2\sum_{\tau=+,-}\frac{1}{1-\exp(\tau\kappa (r' - r))}
  \nonumber \\
& = & -4\sin(2\pi^2/\rho\kappa)^2.
 \eea
Therefore, \eqref{cFstres} becomes
\be
(\cF_a^* f,\cF_a^* g)_2= (f,g)_1-\frac{\kappa\sin(2\pi^2/\rho\kappa)}{\pi}(f,\Psi_a)_1(\Psi_a,g)_1,
\ee
whence the proposition follows.
\end{proof}

Once more, the bound state $\Psi_a(r)$ is proportional to the residue of $\Psi(r,k)$ at the pole $k=i\pi/\rho$ in the strip $\im k\in [0,\kappa)$.
Admittedly, the transform~$\cF_a$ may seem to come out of the blue. We have included it, because the function $\Psi(r,k)$~\eqref{cFa} corresponds to the function $\Psi(2a_-;x,y)$ in the main text, cf.~\eqref{psiNr=0}. In Section~4 of~\cite{R00} the reflectionless transforms associated with $\Psi((N+1)a_-;x,y)$, $N\in\N$, were already analyzed, and  $\Psi(r,k)$ arises from the function given by Eq.~(4.29) in~\cite{R00}; as such, it satisfies the eigenvalue equation
\begin{multline}\label{aeigen}
\left( \frac{\cosh(\pi r/\rho -i\pi^2/\rho\kappa)}{\cosh(\pi r/\rho)}\exp(i\pi \partial_r/\kappa)+(i\to -i)\right)w_a(r)^{-1/2}\Psi(r,k)
\\
=2\cosh(\pi k/\kappa)w_a(r)^{-1/2}\Psi(r,k).
\end{multline}
(To tie this in with $H_o$~\eqref{Ho}, first take $\rho \leftrightarrow \pi/\kappa$, then put $\kappa =1$. We have deviated from the reparametrization \eqref{xryk} by swapping~$a_+$ and~$a_-$, so that the $r$- and $k$-periodicity assumptions apply to the reflectionless wave function~\eqref{cFa}.) 

In fact, all of the assumptions (including~\eqref{addass}) are obeyed by the arbitrary-$N$ attractive eigenfunctions from Section~4 in~\cite{R00}. In particular, the evenness assumption~\eqref{L0ass} is obeyed due to parity features of the latter.
 
 %\newpage

\section{A boundary lemma}
 
The following lemma yields a template for handling the $\Lambda \to\infty$ limits of the boundary terms arising in the analysis of the transform $\cF$ and its adjoint. It is adapted from the proof of Theorem~2.1 in Appendix~A of~\cite{R00}.

In order to handle $\cF$ and~$\cF^*$ at once, we  start from two $\C$-valued functions~$G_{\pm}(s,t,t')$ that are defined on~$\{\re s>0\}\times \R^2$ and have the following features. They are analytic in~$s$ and smooth in~$t,t'$. Their dominant asymptotics as $\re s\to\infty$ is given by smooth functions~$A_j(t,t')$, $j=1,2$, in the sense that  
\be\label{Gpmas}
G_+(s,t,t')=A_1(t,t')+\rho_1(s,t,t'),\ \ \ G_-(s,t,t')=A_2(t,t')+\rho_2(s,t,t'),
\ee
where
\be\label{rhoj}
\rho_j(s,t,t')=O(\exp(-\eta  s)),\ \ \re s\to\infty,\ \ \eta>0,
\ee
\be\label{rhojpart}
\partial_3\rho_j(s,t,t')=O(\exp(-\eta  s)),\ \ \re s\to\infty,\ \ \eta>0,
\ee
with implied constants that are uniform for $\im s,t,t'$ in compact subsets of~$\R$. Finally, we assume
\be\label{Geq}
G_+(s,t,t)=G_-(s,t,t),\ \ \ \re s>0,\ \ t\in\R,
\ee
and
\be\label{Aeq}
A_j(t,t)=A(t),\ \ j=1,2,\ \ \ t\in\R.
\ee
We are now prepared for our boundary lemma.

\begin{lemma}
Letting $\phi(t,t')\in C_0^\infty (\R^2)$, define
\be\label{IR}
I_R\equiv \int_{\R^2}dtdt'\, \phi(t,t') \frac{B_R(t,t')}{\sinh(t'-t)},
\ee
where
\be\label{BR}
B_R(t,t')\equiv \int_{R-i}^{R+i}ds\, \sum_{\alpha=+,-}\alpha G_{\alpha}(s,t,t')\exp(i\alpha s(t-t')) ,\ \ \ R>0.
\ee
Then we have
\be
\lim_{R\to\infty} I_R=4\pi \intR \phi(t,t)A(t)dt.
\ee
\end{lemma}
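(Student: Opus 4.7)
The plan is to split $G_\alpha=A+\rho$ via the asymptotic decomposition~\eqref{Gpmas}, evaluate the dominant part by elementary contour integration, and control the remainder via its exponential decay together with the diagonal cancellation~\eqref{Geq}.

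For the dominant term, replace $G_+\to A_1(t,t')$ and $G_-\to A_2(t,t')$ in~\eqref{BR}. An elementary calculation (parametrizing $s=R+iu$, $u\in[-1,1]$) gives
\be
\int_{R-i}^{R+i}e^{\pm is(t-t')}\,ds=\frac{2i\sinh(t-t')}{t-t'}\exp(\pm iR(t-t')),
\ee
so after division by $\sinh(t'-t)=-\sinh(t-t')$ the dominant contribution to the integrand of~\eqref{IR} equals
\be
-\frac{2i\,\phi(t,t')}{t-t'}\bigl[A_1(t,t')e^{iR(t-t')}-A_2(t,t')e^{-iR(t-t')}\bigr].
\ee
Changing variables to $u=t-t'$ at fixed $t$ and writing
\be
A_1 e^{iRu}-A_2 e^{-iRu}=(A_1-A_2)e^{iRu}+2iA_2\sin(Ru),
\ee
I would treat the two pieces separately. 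Because $A_1(t,t)=A_2(t,t)=A(t)$ by~\eqref{Aeq}, the kernel $(A_1-A_2)/u$ is smooth and inherits compact support from $\phi$, so the first piece vanishes as $R\to\infty$ by Riemann--Lebesgue (and dominated convergence in $t$). The second piece contains the Dirichlet kernel $\sin(Ru)/u$, which converges weakly to $\pi\delta(u)$ on smooth compactly supported test functions, yielding
\be
I_R^{\mathrm{main}}\longrightarrow (-2i)(2i\pi)\int_{\R}\phi(t,t)A(t)\,dt=4\pi\int_{\R}\phi(t,t)A(t)\,dt.
\ee

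For the remainder $B_R^{\mathrm{rem}}$ obtained by replacing $G_\alpha$ with $\rho_j$, the bounds~\eqref{rhoj}--\eqref{rhojpart} combined with the estimate $|e^{\pm is(t-t')}|\le e^{|t-t'|}$ on the length-$2$ contour yield $|B_R^{\mathrm{rem}}(t,t')|\le Ce^{-\eta R}$ and, after differentiating in $t'$ (which brings down an extra $\pm is$ with $|s|\le R+1$), $|\partial_{t'}B_R^{\mathrm{rem}}(t,t')|\le C(R+1)e^{-\eta R}$, uniformly on $\mathrm{supp}\,\phi$. Crucially, \eqref{Geq} together with~\eqref{Aeq} forces $\rho_1(s,t,t)=\rho_2(s,t,t)$, so $B_R^{\mathrm{rem}}(t,t)\equiv 0$; the fundamental theorem of calculus then gives
\be
|B_R^{\mathrm{rem}}(t,t')|\le |t'-t|\,\sup_{\tau}|\partial_{\tau}B_R^{\mathrm{rem}}(t,\tau)|,
\ee
and since $(t'-t)/\sinh(t'-t)$ is bounded on the compact support of $\phi$, the remainder contribution to $I_R$ is $O(Re^{-\eta R})\to 0$.

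The main obstacle is managing the $1/\sinh(t'-t)$ singularity at the diagonal. It is precisely the diagonal identities~\eqref{Geq} and~\eqref{Aeq} that render $B_R/\sinh(t'-t)$ regular at $t=t'$, and the same identities convert the oscillatory $s$-integral into a Dirichlet kernel whose $\delta$-function limit delivers the stated answer; everything else is bookkeeping based on uniform estimates along the rectangular contour of length~$2$.
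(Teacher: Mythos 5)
Your proof is correct and follows essentially the same route as the paper: split $G_\alpha$ into the asymptotic constants $A_j$ plus exponentially decaying remainders, extract the Dirichlet kernel $\sin(Ru)/u\to\pi\delta(u)$ from the symmetric part of the $A$-term, kill the antisymmetric part via Riemann--Lebesgue using $A_1(t,t)=A_2(t,t)$, and control the remainder near the diagonal via the vanishing forced by \eqref{Geq}--\eqref{Aeq} together with the derivative bound \eqref{rhojpart}. The only difference is organizational: you perform the $s$-integration explicitly and treat the whole remainder at once by the fundamental theorem of calculus, whereas the paper regroups into $A_\pm$, $\rho_\pm$ with $\sin$/$\cos$ kernels and estimates the four pieces separately.
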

\begin{proof}
We begin by noting that by~\eqref{Geq} the integrand in~\eqref{BR} is a smooth function of $(t,t')\in\R^2$ that vanishes for $t=t'$, so that $B_R(t,t')$ has the same properties. Therefore the integrand in~\eqref{IR} belongs to $C_0^\infty(\R^2)$. Hence $I_R$ is well defined.

Next, we write
\be\label{Btel}
B_R(t,t')= \sum_{j=1}^4
\int_{R-i}^{R+i}ds\, b_j(s,t,t'),
\ee
\be\label{b1}
b_1\equiv i\sin
(s(t-t'))A_+(t,t'),\ \ \ 
%\ee
%\be\label{b2}
b_2\equiv i\sin (s(t-t'))\rho_{+}(s,t,t'),
\ee
\be%\label{b3}
b_3\equiv \cos
(s(t-t'))A_-(t,t'),\ \ \ 
%\ee
%\be\label{b4}
b_4\equiv \cos (s(t-t'))\rho_{-}(s,t,t'),
\ee
where
\be\label{Apm}
A_{\pm}(t,t')    \equiv A_1(t,t')\pm A_2(t,t'),  
\ee
\be\label{rhopm}
\rho_{\pm}(s,t,t')    \equiv \rho_1(s,t,t')\pm \rho_2(s,t,t'). 
\ee
Each of the terms in the sum on the rhs of \eqref{Btel}
is a smooth function of $t$ and $t'$ that vanishes
for $t=t'$. Thus the integrals
\be\label{defIj}
I_j(R)    \equiv    \int_{\R^2} dt dt'\,
\frac{\phi(t,t')}{\sinh(t'-t)} \int_{R-i}^{R+i}ds\, b_j(s,t,t'),\ \
j=1,\ldots,4,
\ee
are well defined and   it suffices to prove
\be\label{I1}
\lim_{R\to\infty}I_1(R)=4\pi \intR \phi(t,t)A(t)dt,
\ee
\be\label{Ij}
\lim_{R\to\infty}I_j(R)=0,\ \ \ j=2,3,4.
\ee

In order to prove (\ref{I1}), we use (\ref{defIj}) and
(\ref{b1}) to calculate
\be
I_1(R)= \int_{\R^2} dt dt' \,\phi(t,t')\frac{2\sin
R(t-t')}{t-t'}A_+(t,t').
\ee
Invoking the tempered distribution limit
\be
\lim_{R\to\infty}\frac{\sin Rx}{x}=\pi \delta(x),
\ee
and using our assumption~\eqref{Aeq}, we now deduce (\ref{I1}).

We continue by studying the integral $I_2(R)$. It can be written 
\be
i\int_{\R^2} dt dt' \,\phi(t,t') \frac{t-t'}{\sinh
(t'-t)}\int_{R-i}^{R+i}ds\,\frac{\sin
(s(t-t'))}{t-t'}\rho_{+}(s,t,t').
\ee
The integrand of the $s$-integral can be estimated by using
\be
\left| \frac{\sin
s(t-t')}{t-t'}\right|=\frac{1}{2}\left|
\int_{-s}^sdxe^{ix(t-t')}\right| \le |s|e^{|t-t'||\im
s|},
\ee
\be
\rho_{+}(s,t,t')=O(\exp (-\eta s)),\ \ \ \re s\to \infty,
\ee
where the latter bound is uniform for $\im s, t,t'$ in $\R$-compacts, cf.~\eqref{rhoj} and \eqref{rhopm}. Hence we easily deduce \eqref{Ij} for
$j=2$.

Consider next $I_3(R)$. This integral equals 
\be
2\int_{\R^2} dt dt' \,\phi(t,t')\cos
R(t-t') \frac{A_-(t,t') }{t-t'}.
\ee
Thus its integrand equals $\cos R(t-t')$ times a
function in $C_0^{\infty}(\R^2)$. Its $R\to\infty$ limit
then vanishes by virtue of the Riemann-Lebesgue lemma.

Finally, we take $j=4$ in (\ref{defIj}) and write 
\be\label{Ifin}
I_4(R)=\int_{\R^2} dt dt' \,\phi(t,t') \frac{t-t'}{\sinh
(t'-t)}
\int_{R-i}^{R+i}ds\,\cos
(s(t-t'))\frac{\rho_{-}(s,t,t')}{t-t'}.
\ee
Now due to our assumptions~\eqref{Geq} and~\eqref{Aeq}, the function $\rho_{-}(s,t,t')$ vanishes for $t=t'$. For $(t,t')$ belonging to the support of $\phi$ we therefore have
\be\label{rhomin}
\left|\frac{\rho_{-}(s,t,t')}{t-t'}\right| =\left|
\frac{1}{t-t'}\int_t^{t'}du\,\partial_3\rho_{-}(s,t,u)\right|\le
\max_{(t,t',\theta)\in {\rm supp}( \phi)\times [0,1]}
|\partial_3\rho_{-}(s,t,t+\theta(t'-t))|.
\ee
Invoking our assumption~\eqref{rhojpart}, we infer that the rhs of
(\ref{rhomin}) is $O(\exp (-\eta\re s))$ for $\re s\to
\infty$, uniformly for $\im s,t,t'$ in $\R$-compacts. Clearly,
this entails that (\ref{Ifin}) has limit 0 for $R\to
\infty$, completing the proof of the lemma.
\end{proof}

%\newpage

\section{Time-dependent scattering theory}

In the previous appendices we have not introduced any dynamics, yet we have freely referred to well-known objects from scattering theory, including reflection/transmission coefficients and $S$-matrix. In this appendix we shall explain the relation to time-dependent scattering theory in the general setting of Appendices~A and~B. 

First, we recall that this setting solely involves a number of features of the function $\Psi(r,k)$ in terms of which the transform $\cF$ is defined, cf.~\eqref{cF}--\eqref{Fm}. These features were chosen such that they form sufficient hypotheses for Theorems~A.1 and B.1 to hold true. We have already shown that there are nontrivial concrete realizations of these assumptions, yielding the transforms~$\cF_{\pm}(\varphi)$, cf.~\eqref{wsp}--\eqref{ellnup}. These assumptions are in force throughout this appendix.

To connect the general transform~$\cF$ to time-dependent scattering theory, however, an additional assumption is critical. This assumption is that $\cF$ is an isometry, or equivalently, that the residue sums~\eqref{Rdk} vanish. Admittedly, at face value this   seems an assumption of a quite inaccessible nature. The special cases worked out below Corollary~A.2 show that it is not vacuous, but they only involve the simplest case~$L=1$. Indeed, without the further examples coming from the main text, it would be far from obvious that there exist nontrivial arbitrary-$L$ functions $\Psi(r,k)$ satisfying all of the pertinent assumptions. 

At any rate, in this appendix we are not concerned with explicit realizations of our assumptions, this being the focus of the main text. More is true: We need not even restrict attention to the special dynamics that pertains to the relativistic hyperbolic Calogero-Moser systems. 

Specifically, we consider a vast class of dynamics to which the latter belongs. We start from multiplication operators on $\hat{\cH}$ of the diagonal form
\be\label{hatM}
(\hat{M}f)_{\de}(k)=\mu(k)f_{\de}(k),\ \  \ \de=+,-,\ \ \ k>0.
\ee
Here, $\mu(k)$ is any real-valued, smooth, even function on~$\R$ whose (odd) derivative 
satisfies
\be\label{muder}
\mu'(k)>0,\ \ \  k>0.
\ee
Thus $\mu(k)$ is strictly increasing on~$(0,\infty)$, but not necessarily unbounded. (For example, the function $1-\exp(-k^2)$ satisfies the assumptions.) On its maximal multiplication domain $\cD(\hat{M})\subset \hat{\cH}$, the operator~$\hat{M}$ is self-adjoint, and the subspace $\hat{\cC}=C_0^{\infty}((0,\infty))^2\subset \cD(\hat{M})$ is a core (domain of essential self-adjointness).

We can now define an operator $M$ on the subspace
\be\label{DMdef}
\cD(M)\equiv \cF(\cD(\hat{M}))\subset \cH,
\ee
by setting
\be\label{Mdef}
M\cF f\equiv \cF \hat{M} f,\ \ f\in \cD(\hat{M}).
\ee
Since $\cF$ is isometric (by asumption), this yields a self-adjoint operator~$M$ on the Hilbert space 
\be\label{ranF}
\cH^r\equiv \cF(\hat{\cH}).
\ee
 Furthermore, $\cF(\hat{\cC})$ is a core for~$M$. In case $\cH^r$ is a proper subspace of $\cH$ (so that $\cF$ is not unitary), we define $M$ to be equal to an arbitrary self-adjoint operator on the orthogonal complement of~$\cH^r$; this choice plays no role in the scattering theory of this appendix, but it will be made definite in the main text for the dynamics at issue there, cf.~the last paragraph of Section~4. 

With the dynamics $M$ thus defined as a self-adjoint operator on $\cH$, we now consider the associated `interacting' unitary time evolution 
\be
\cU(t)\equiv \exp(-itM),\ \ t\in\R,
\ee
in relation to a `free' evolution defined by using the Fourier transform~$\cF_0$, cf.~\eqref{free}. Specifically, with~$\cF_0$ in the role of~$\cF$, we obtain a self-adjoint operator $M_0$ on~$\cH$ with associated time evolution
\be\label{U0t}
\cU_0(t)\equiv \exp(-itM_0),\ \ t\in\R.
\ee
This evolution can be compared to $\cU(t)$ in the usual sense of time-dependent scattering theory. We recall that this amounts to studying the (strong) limits of the unitary family $\cU(-t)\cU_0(t)$, yielding the isometric wave operators
\be\label{Wpm}
\cW_{\pm}\equiv \lim_{t\to \pm \infty}\cU(-t)\cU_0(t),
\ee
in case the limits exist~\cite{RS79}.

The crux is now that our assumptions already suffice  to prove that these limits do exist. Moreover, the transform $\cF$ is substantially equivalent to the incoming wave operator $\cW_-$, in the sense that it is equal to $\cW_-\cF_0$.
To avoid possible confusion, we should  stress that this equality does not give rise to an isometry proof for $\cF$. Indeed, we need to \emph{assume} that~$\cF$ is isometric to begin with, so as to obtain a unitary evolution~$\cU(t)$, cf.~\eqref{DMdef}--\eqref{Mdef}. 

Before stating the pertinent theorem, we specify the function~$\mu(k)$ that corresponds to the defining relativistic Calogero-Moser dynamics in the main text: It reads
\be\label{muCM}
\mu_{CM}(k)=2\cosh(\rho k).
\ee
The action of the associated operator $M_{CM}$ on the core $\cF(\hat{\cC})$ is that of the analytic difference operator
\be\label{HCM}
H_{CM}=\exp(i\rho \partial_r)+\exp(-i\rho \partial_r).
\ee
Indeed, $H_{CM}$ yields the eigenvalues~\eqref{muCM} when acting on~$\Psi(r,k)$. To be more specific, in the main text the $w(r)$-poles are on the imaginary axis and the coefficients~$w(r)^{1/2}m^{\pm}(r,k)$ multiplying the plane waves~$\exp(\pm irk)$ are $i\rho$-periodic for $|\re r|>0$. (Note that the assumptions~\eqref{mpas}--\eqref{mnas} entail that this also holds true in the present  axiomatic setting when we let $|\re r|>|\re r_j|$, $j=1,\ldots,2L$, so as to avoid the branch points.) 

\begin{theorem}
Assuming the transform~$\cF$ is isometric,  define dynamics by~\eqref{hatM}--\eqref{U0t}. Then the strong limits \eqref{Wpm} exist. They are explicitly given by 
\be\label{Wm}
\cW_-=\cF \cF_0^*,
\ee
and
\be\label{Wp}
\cW_+=\cW_-\cF_0S(\cdot)^*\cF_0^*,
\ee
where $S(k)$ is the unitary matrix multiplication operator~\eqref{Sm} on~$\hat{\cH}$.
\end{theorem}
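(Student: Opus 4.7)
The plan is to prove both wave operator limits by reducing them, via the intertwining relations, to showing that certain kernel differences, evaluated on time-evolved wave packets, have $\cH$-norm tending to zero. From the definition \eqref{Mdef} and the fact that $\hat\cC$ is a core for $\hat M$, one obtains the commutation $e^{-itM}\cF=\cF e^{-it\hat M}$ on $\hat\cH$, and similarly for $M_0$ and $\cF_0$. Testing against the dense subspace $g=\cF_0\phi$ with $\phi\in\hat\cC$, a direct computation gives
\be
\cU(-t)\cU_0(t)g-\cF\cF_0^* g \;=\; e^{itM}\,(\cF_0-\cF)\,e^{-it\hat M}\phi.
\ee
Since $e^{itM}$ is isometric, the $\cH$-norm of this difference equals $\|(\cF_0-\cF)e^{-it\hat M}\phi\|_\cH$, so it suffices to prove this vanishes as $t\to-\infty$. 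By uniform boundedness of $\cU(-t)\cU_0(t)$ and $\cF\cF_0^*$ and the density of $\cF_0(\hat\cC)$, the limit will then extend to all of $\cH$, giving \eqref{Wm}.

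The next step is to split the kernel of $\cF-\cF_0$ into a "dominant scattering tail" plus an exponentially decaying remainder. Fix a smooth cutoff $\chi\in C^\infty(\R)$ with $\chi=1$ for $r\ge 1$ and $\chi=0$ for $r\le -1$, and set $\bar\chi=1-\chi$. Using the asymptotic formulas \eqref{mpas}--\eqref{mnas}, one verifies
\be
F_+(r,k)-e^{irk}=\chi(r)(T(k)-1)e^{irk}+\bar\chi(r)(-R(k))e^{-irk}+R_+(r,k),
\ee
and an analogous decomposition for $F_-(r,k)+e^{-irk}$, where $R_\pm(r,k)$ are smooth in $(r,k)$ and satisfy $|R_\pm(r,k)|\le C e^{-\gamma|r|}$ uniformly for $k$ in the compact support of $\phi$. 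The remainder contribution to $(\cF_0-\cF)e^{-it\hat M}\phi$, for each fixed $r$, converges to zero as $|t|\to\infty$ by a Riemann--Lebesgue argument after the change of variables $u=\mu(k)$ (using $\mu'>0$). The uniform bound $Ce^{-\gamma|r|}\in L^2(\R)$ then yields $L^2$-convergence via dominated convergence.

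The heart of the argument is to control the "tail" contributions, which are of the form $\chi(r)\int_0^\infty e^{i(rk-t\mu(k))}\tilde\phi(k)\,dk$ (and the analogous piece with $\bar\chi$ and $e^{-irk-it\mu(k)}$), where $\tilde\phi\in C_0^\infty((0,\infty))$ depends on $\phi$ and on $T,R$. The phase $rk-t\mu(k)$ has a stationary point at $r=t\mu'(k)$; since $\mu'>0$ on $\mathrm{supp}(\tilde\phi)$, for $t\to-\infty$ this forces $r$ to lie in a set of the form $t\cdot I$ with $I\subset(0,\infty)$ compact, hence entirely in $r\le-c|t|$ for some $c>0$. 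The standard non-stationary phase (repeated integration by parts) gives pointwise decay $\lesssim(1+|r|+|t|)^{-N}$ outside any small neighborhood of the classical trajectory; for $|t|$ large this trajectory lies where $\chi(r)=0$, so the product $\chi(r)\cdot(\text{oscillatory integral})$ decays rapidly and its $L^2$-norm vanishes. The same reasoning applies to the $\bar\chi$ piece, where the classical trajectory sits at large positive $r$ while $\bar\chi=0$ there. Combining this with the remainder analysis proves $\cW_-$ exists and equals $\cF\cF_0^*$.

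For $\cW_+$, one repeats the argument with candidate $\cF S(\cdot)^*\cF_0^*$, which coincides with $\cW_-\cF_0 S(\cdot)^*\cF_0^*$. Using the pointwise unitarity relations \eqref{unit1}--\eqref{unit2}, the asymptotic form of the kernel of $\cF_0-\cF S(\cdot)^*$ reduces, as $r\to\pm\infty$, to plane waves $e^{-irk}$ or $e^{irk}$ whose stationary trajectories for $t\to+\infty$ now lie on the opposite side of the cutoff, so the same wave-packet localization forces the tails to vanish; the exponentially small remainder is handled as before. This gives \eqref{Wp}. The main technical obstacle throughout is the quantitative wave-packet localization in Step 4: one must verify carefully that the classical trajectory $r\sim t\mu'(\mathrm{supp}\,\tilde\phi)$ is separated by a distance of order $|t|$ from the support of the cutoff, so that the non-stationary phase estimate survives integration over $r$ and gives $L^2$-convergence uniformly in the wave-packet amplitude.
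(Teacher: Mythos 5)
Your proposal is correct and follows essentially the same route as the paper: reduce to showing $\|(\cF_0-\cF)e^{-it\hat M}\phi\|_1\to 0$ on wave packets, split the kernel difference into exponentially decaying remainders (handled by Riemann--Lebesgue plus dominated convergence) and leading plane-wave tails, and kill the tails by integrating by parts against the non-vanishing phase derivative $r-t\mu'(k)$, with the unitarity relations \eqref{unit1}--\eqref{unit2} doing the cancellation for $\cW_+$. The only cosmetic differences are your smooth cutoffs versus the paper's sharp splitting into $[-R,R]$ and the two tails, and your stationary-phase phrasing of what the paper writes explicitly as the identity \eqref{exp1} followed by a single integration by parts.
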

\begin{proof}
We first prove that~$\cW_-$ exists and is given by~\eqref{Wm}. Since~$\cF$ is isometric and the time evolutions are unitary, we need only show that the $\cH$-norm
\be\label{gm}
 \| \big( e^{itM}e^{-itM_0}\cF_0-\cF\big)f\|_1=\|   (\cF_0-\cF)e^{-it\hat{M}}f\|_1,
\ee
with $f=(f_+,f_-)$ an arbitrary function in the dense subspace~$\hat{\cC}$ of~$\hat{\cH}$, vanishes for~$t\to-\infty$. To this end, consider
\bea
 \big(  (\cF_0-\cF)e^{-it\hat{M}}f\big)(r) &=&\frac{1}{\sqrt{2\pi}}\intp dk\, \sum_{\de=+,-}\de \big(e^{i\de rk}-\Psi(\de r,k)\big)e^{-it\mu(k)}f_{\de}(k)
  \nonumber \\
 &  =: & \frac{1}{\sqrt{2\pi}}\sum_{\de=+,-}\de \psi_{\de}(r).
\eea
Taking $r\to -r$ in $\psi_-(r)$, we see that it suffices to prove that the $\cH$-norm of the function
\be
g(t;r)\equiv \intp dk\,   \big(e^{i rk}-\Psi( r,k)\big)e^{-it\mu(k)}f(k),\ \ \ f\in C_0^\infty((0,\infty)),
\ee
vanishes for $t\to -\infty$. Recalling~\eqref{Psiform}, we obtain
\be
g(t;r)=\sum_{\tau=+,-}\tau g^{\tau}(t;r),
\ee
where
\be
g^+(t;r)\equiv \intp dk\,   e^{i rk}(1-w(r)^{1/2}m^+(r,k))e^{-it\mu(k)}f(k),
\ee
\be
 g^-(t;r)\equiv \intp dk\,  e^{-irk}w(r)^{1/2}m^-(r,k)e^{-it\mu(k)}f(k).
\ee
From this we see that we need only change variables $k\to x\equiv \mu(k)$ and invoke the Riemann-Lebesgue lemma to obtain convergence to zero of $g^{\pm}(t;r)$ for $t\to -\infty$ and fixed~$r$. (Note that our assumption~\eqref{muder} ensures that this change of variables is well defined.)

As a consequence, it remains to supplement this pointwise convergence with $L^1(\R)$ dominating functions for the functions~$|g^{\pm}(t;r)|^2$. To do so, we split up the integration over~$\R$ into intervals $[-R,R]$, $(-\infty,-R]$, and $[R,\infty)$, where~$R$ is chosen large enough so that we may invoke the  asymptotics \eqref{mpas}--\eqref{mnas}. On the first interval the functions~$|g^{\pm}(t;r)|^2$ are clearly bounded uniformly in~$t$, so this contribution vanishes for $t\to -\infty$.

Next, we bound~$|g^+(t;r)|^2$ on the two tail intervals. For the left interval we use~\eqref{mpas} to obtain an exponentially decreasing dominating function. On the right interval we get two contributions from~\eqref{mpas}, the second one again yielding an exponentially decreasing dominating function. Thus we are left with obtaining a suitable bound for the function
\be\label{gright}
\intp dk\,   e^{i rk}(1-T(k))e^{-it\mu(k)}f(k),\ \ r\in[R,\infty),
\ee
with $t\le -1$, say. To this end, we write the exponentials as
\be\label{exp1}
(r-t\mu'(k))^{-1}(-i\partial_k)\exp(irk-it\mu(k)),
\ee
noting that by our assumption~\eqref{muder}  the denominator $r-t\mu'(k)$ is bounded  away from zero on the compact support of~$f(k)$ for~$t\le -1$ and $r\ge R$. 
Integrating by parts and estimating in the obvious way, this yields an $O(1/r)$-majorization that is uniform for~$t\le -1$. Thus the modulus squared of the function~\eqref{gright} is bounded above by $C/r^2$ for all $t\le -1$, so that by the dominated convergence theorem its $L^2([R,\infty))$-norm vanishes for~$t\to -\infty$.  

It remains to bound~$|g^-(t;r)|^2$ on the tail intervals. On the right one we get an exponentially decreasing dominating function from~\eqref{mnas}. On the left, the second term in~\eqref{mnas} yields again an exponentially decreasing dominating function, so it remains to bound
\be\label{gleft}
\intp dk\,   e^{-i rk}R(k)e^{-it\mu(k)}f(k),\ \ r\in(-\infty,-R],
\ee
uniformly for $t\le -1$. Writing
\be\label{exp2}
(r+t\mu'(k))^{-1}(i\partial_k)\exp(-irk-it\mu(k)),
\ee
and integrating by parts, we readily obtain a uniform $O(1/r)$-majorization that suits our purpose. As a result, we have now proved existence of~$\cW_-$ and its explicit form~\eqref{Wm}. 

Next, we show that~$\cW_+$ exists as well, and that $\cW_+\cF_0$ equals $\cF 
S(\cdot)^*$. (In view of~\eqref{Wm}, this amounts to~\eqref{Wp}.) Proceeding along the same lines as before, we study
\be
 \| \big( e^{itM}e^{-itM_0}\cF_0-\cF S(\cdot)^*\big)f\|_1=\|  (\cF_0-\cF S(\cdot)^*)e^{-it\hat{M}}f\|_1,\ \ \ f\in\hat{\cC},
\ee
for $t\to\infty$. We have
\be
 \big(  (\cF_0-\cF S(\cdot)^*)\exp(-it\hat{M})f\big)(r)= \frac{1}{\sqrt{2\pi}}\sum_{\de=+,-}\de \phi_{\de}(r),
 \ee
where
\be
\phi_{\de}(r)\equiv \intp dk\, \big(e^{i\de rk}-\Psi(\de r,k)T(-k)+\Psi(-\de r,k)R(-k)\big)e^{-it\mu(k)}f_{\de}(k). 
\ee
Taking $r\to -r$ in $\phi_-(r)$, we deduce that we need only show that the $L^2(\R)$-norm of the function
\be
h(t;r)\equiv \intp dk\,   \big(e^{i rk}-\Psi( r,k)T(-k)+\Psi(- r,k)R(-k)\big)e^{-it\mu(k)}f(k),\ \ \ f\in C_0^\infty((0,\infty)),
\ee
vanishes for $t\to \infty$. Now from~\eqref{Psiform} we get
\be
h(t;r)=\sum_{\tau=+,-}\tau h^{\tau}(t;r),
\ee
where
\be
h^+(t;r)\equiv \intp dk\,   e^{i rk}\big(1-w(r)^{1/2}m^+(r,k)T(-k)+w(r)^{1/2}m^-(-r,k)R(-k)\big)e^{-it\mu(k)}f(k),
\ee
\be
 h^-(t;r)\equiv \intp dk\,  e^{-irk}\big(w(r)^{1/2}m^-(r,k)T(-k)-w(r)^{1/2}m^+(-r,k)R(-k)\big)e^{-it\mu(k)}f(k).
\ee

As before, the Riemann-Lebesgue lemma implies that the functions~$h^{\pm}(t;r)$ vanish for $t\to \infty$ and fixed~$r$, so it remains to supply dominating functions for $t\ge 1$, say. This can be done by adapting the above reasoning for the incoming wave operator. More in detail, for~$h^+(t;r)$ we can invoke the unitarity relation~\eqref{unit1} when $r\ge R$ and use~\eqref{exp1} for integration by parts when $r\le -R$, whereas~$h^-(t;r)$ can be handled by using~\eqref{unit2} for $r\le -R$ and~\eqref{exp2} for $r\ge R$. Thus our proof is now complete.
\end{proof}

In the language of time-independent scattering theory, this theorem reveals that the integral kernel of $\cF$ is the incoming wave function
\be\label{incom}
\Psi^{in}(r,k)=
\left(\begin{array}{c}
\Psi(r,k) \\ 
-\Psi(-r,k) 
\end{array}\right),\ \ \ k>0,
\ee
cf.~\eqref{cFdef}--\eqref{Fm}, whose relation to the outgoing one 
\be\label{outgo}
\Psi^{out}(r,k)=\left(\begin{array}{c}
\Phi(r,k) \\ 
-\Phi(-r,k) 
\end{array}\right),\ \ \ k>0,
\ee
is given by
\be\label{Phi}
\Phi(r,k)=T(-k)\Psi(r,k)-R(-k)\Psi(-r,k).
\ee
Equivalently, we have
\be
\Psi^{in}(r,k)=S(k)\Psi^{out}(r,k),
\ee
with the $S$-matrix $S(k)$ given by~\eqref{Sm}. 

It follows from our assumptions that each of the above dynamics defined via~$\cF$ is invariant under the usual parity operator
\be
(\cP f )(r)\equiv f(-r),\ \ \ f\in\cH.
\ee
Indeed, we readily calculate that the operator
\be 
\hat{\cP}\equiv \cF^* \cP \cF, 
\ee
is given by
\be\label{pari}
(\hat{\cP}g)(k)=
\left(\begin{array}{cc}
0 & -1 \\ 
-1 & 0  
\end{array}\right)g(k),\ \ \ g\in\hat{\cH}.
\ee
This implies that the range~$\cH^r$ of $\cF$ is left invariant by~$\cP$, and since $\hat{M}$ commutes with~$\hat{\cP}$, we also have $[\cP,\exp(itM)]=0$. (To be quite precise, this holds when we define $M$ on the orthocomplement~$(\cH^r)^{\perp}$ of~$\cH^r$ in such a way that it also commutes with~$\cP$ on~$(\cH^r)^{\perp}$.)

By contrast, the state of affairs for the customary time reversal operator
\be\label{cT}
(\cT f)(r)\equiv \overline{f(r)},\ \ f\in\cH,
\ee
is not clear in the present axiomatic context (as opposed to the main text, as we shall see shortly). We proceed to elaborate on this. First, we can  easily calculate
\be
\hat{\cT}_0 \equiv \cF_0^* \cT \cF_0,
\ee
yielding
\be
(\hat{\cT}_0g)(k) =\left(\begin{array}{cc}
0 & -1 \\ 
-1 & 0  
\end{array}\right)\overline{g(k)},\ \ \ g\in\hat{\cH}.
\ee
Thus $\hat{\cT}_0$ commutes with all of the dynamics~$\hat{M}$ given by~\eqref{hatM}. Clearly, this entails
\be\label{cTM0}
\cT\exp(-itM_0)=\exp(itM_0)\cT,\ \ \ t\in\R.
\ee  

The difficulty is now that we do not know whether our assumptions imply~\eqref{cTM0} with $M_0\to M$. 
To explain what is involved, let us assume that this is indeed the case.
Then, the definition~\eqref{Wpm} of the wave operators entails
\be
\cT \cW_-=\cW_+\cT.
\ee
Using~\eqref{Wm} and~\eqref{Wp}, we readily deduce
\be\label{intid}
\cF=\cT\cF S(\cdot)^*\hat{\cT}_0.
\ee
When we now compare the kernels of the transforms in~\eqref{intid}, then
we obtain 
\be
\Psi(r,k)=T(k)\overline{\Psi(-r,k)}-R(k)\overline{\Psi(r,k)}.
\ee
In view of our standing assumption~\eqref{Psiconj}, this amounts to the identity
\be\label{cTsym}
\Psi(r,k)=T(k)\Psi(-r,-k)-R(k)\Psi(r,-k).
\ee
On account of~\eqref{incom}--\eqref{Phi}, this identity can also be rewritten as the relation
\be\label{outin}
\Psi^{out}(r,k)=\left(\begin{array}{cc}
0 & -1 \\ 
-1 & 0  
\end{array}\right)\overline{\Psi^{in}(r,k)}.
\ee
Clearly, this argument can be reversed: Assuming~\eqref{cTsym} holds true, we deduce
\be\label{cTM}
\cT\exp(-itM)=\exp(itM)\cT,\ \ \ t\in\R.
\ee

For the transforms coming from the main text, the time reversal identity~\eqref{cTsym} is indeed valid, cf.~\eqref{psirev}. Moreover, it can also be verified for the two transform kernels~\eqref{cFe} in Proposition~B.4. But we do not know whether~\eqref{cTsym}  is a necessary consequence of the assumptions we made in Theorem~D.1.

From now on, we add to the assumptions of Theorem~D.1 the extra assumption that~$\cF$ is unitary (equivalently, that~$\cH^r$ equals~$\cH$, cf.~\eqref{ranF}). As we shall show next, this implies  that the function~$U(r)$ arising from the large-$|k|$ asymptotics of the coefficients (cf.~\eqref{Psikas}) may be viewed as the $S$-matrix of a large class of `dual' dynamics. This class arises by starting from real-valued, smooth, odd functions~$d(r)$, which satisfy
\be\label{derpos}
d'(r)>0,\ \ \ r\in\R.
\ee
Thus $d(r)$ is strictly increasing, but need not be unbounded. (For example, the function $\tanh r$ satisfies the assumptions.)

Any such function gives rise to a self-adjoint multiplication operator~$D$ on its natural domain~$\cD(D)\subset \cH$. Since $\cF$ is assumed to be unitary, we can now define a self-adjoint operator $\hat{D}$ on~$\hat{\cH}$  
by  
\be\label{Dhdef}
\hat{D}\cF^* f\equiv \cF^* D f,\ \ f\in \cD(D).
\ee 

We continue to compare the associated `interacting'  unitary time evolution 
\be
\hat{\cU}(t)\equiv \exp(-it\hat{D}),\ \ t\in\R,
\ee
to the `free' evolution defined by using~$\cF_0^*$. Thus, replacing~$\cF^*$  by~$\cF_0^*$ in~\eqref{Dhdef}, we obtain a self-adjoint operator $\hat{D}_0$ on~$\hat{\cH}$, which yields a time evolution
\be
\hat{\cU}_0(t)\equiv \exp(-it\hat{D}_0),\ \ t\in\R.
\ee
As before, our goal is to show that the dual wave operators,
\be\label{Whpm}
\hat{\cW}_{\pm}\equiv \lim_{t\to \pm \infty}\hat{\cU}(-t)\hat{\cU}_0(t),
\ee
 exist, and to clarify their relation to~$\cF^*$ and the unitary operator
 \be\label{Udef}
 (Uf)(r)\equiv U(r)f(r),\ \ \ f\in\cH.
 \ee

Before doing so, we detail the function~$d(r)$ that arises from the dual relativistic Calogero-Moser dynamics. It is given by
\be\label{dCM}
d_{CM}(r)=2\sinh(\kappa r),
\ee
and the corresponding operator $\hat{D}_{CM}$ acts on the core $\cF^*(\cC)$ as the analytic difference operator
\be\label{HhCM}
\hat{H}_{CM}=\exp(-i\kappa \partial_k)-\exp(i\kappa \partial_k).
\ee
To see that this gives rise to  the `eigenvalues'~\eqref{dCM}, recall that we have assumed (above~\eqref{mtform}) that~$m^+(r,k)$ is $i\kappa$-periodic in~$k$, whereas~$m^-(r,k)$ is assumed to be $i\kappa$-antiperiodic.

\begin{theorem}
The strong limits \eqref{Whpm} exist and are given by 
\be\label{Whm}
\hat{\cW}_{\pm}=\cF^* U(\cdot)^{\mp 1/2} \cF_0,
\ee
where $U(r)$ is the unitary multiplication operator on~$\cH$ given by the square of~\eqref{Ur}.
\end{theorem}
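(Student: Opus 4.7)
My plan is to mirror the proof of Theorem~D.1, exchanging the roles of the position and momentum variables so that the controlling asymptotics are now those of $\Psi(r,k)$ as $\re k\to\pm\infty$ supplied by \eqref{Psikas}, rather than those as $\re r\to\pm\infty$. First, using $\hat D=\cF^{*}D\cF$, $\hat D_{0}=\cF_{0}^{*}D\cF_{0}$, the commutativity of $e^{itD}$ with the multiplication operator $U^{\mp 1/2}$, and unitarity of $\cF$, $\cF_{0}$ and $e^{itD}$, I would reduce the claim to showing
\[
\lim_{t\to\pm\infty}\bigl\|\cF\cF_{0}^{*}\phi_{t}-U^{\mp 1/2}\phi_{t}\bigr\|_{1}=0,\qquad \phi_{t}(r)\equiv e^{-itd(r)}\phi(r),
\]
for each $\phi$ in the dense subspace $\cC\subset\cH$.

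Second, I would unfold $\cF\cF_{0}^{*}$ as a single oscillatory integral on the full line. Using the definitions \eqref{cFdef}--\eqref{Fm} of $\cF$ together with the identification \eqref{ident} built into $\cF_{0}^{*}$, one finds
\[
(\cF\cF_{0}^{*}\phi_{t})(r)=\frac{1}{\sqrt{2\pi}}\int_{\R}dk\,\tilde\Psi(r,k)\widehat{\phi_{t}}(k),
\]
where $\tilde\Psi(r,k)$ equals $\Psi(r,k)$ for $k>0$ and $\Psi(-r,-k)$ for $k<0$, and $\widehat{\phi_{t}}$ denotes the usual Fourier transform. The strict monotonicity hypothesis \eqref{derpos} and stationary phase applied to the phase $-rk-td(r)$ in $\widehat{\phi_{t}}(k)$ force $\widehat{\phi_{t}}$ to concentrate at $k\to-\infty$ as $t\to+\infty$ and at $k\to+\infty$ as $t\to-\infty$; away from the stationary point $k=-td'(r^{*})$, integration by parts against the non-vanishing phase derivative $-k-td'(r)$ on $\mathrm{supp}\,\phi$ yields any polynomial decay in $k$ uniformly in $t\in\R$. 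Combining this localization with the asymptotic \eqref{Psikas} in the relevant large-$|k|$ half-line together with its $O(e^{-\hat\gamma|k|})$ error supplies the dominating functions needed for a dominated-convergence argument, while the complementary bounded-$k$ region contributes zero by the Riemann--Lebesgue lemma applied to $\widehat{\phi_{t}}$.

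Third, executing this substitution produces a pointwise limit of the form $A_{\mp}(r)\phi_{t}(r)$, with $A_{+}(r)=U(r)^{1/2}$ arising from the $k>0$ branch of $\tilde\Psi$ (handling $t\to-\infty$) and $A_{-}(r)=U(-r)^{1/2}$ arising from the $k<0$ branch of $\tilde\Psi$ via the change of variable $k\mapsto-k$ (handling $t\to+\infty$). The $t\to-\infty$ case immediately yields the desired $U(r)^{1/2}\phi_{t}(r)$, but the $t\to+\infty$ case produces $U(-r)^{1/2}\phi_{t}(r)$ and must be reconciled with $U(r)^{-1/2}\phi_{t}(r)$. The main obstacle is therefore to establish the symmetry $U(-r)^{1/2}=U(r)^{-1/2}$, equivalently $C(-r)=\overline{C(r)}$, which is not an assumption of Appendix~B in isolation but is forced upon combining the reality relation \eqref{Psiconj} with the evenness hypotheses \eqref{Lass}---the same circle of ideas underlying the time-reversal identity \eqref{cTsym}--\eqref{outin} discussed below Theorem~D.1. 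Once this identification is secured, $L^{2}$ convergence on $\cC$, and hence strong convergence on all of $\hat\cH$ via the standard uniform-boundedness and density argument, follow, completing the proof.
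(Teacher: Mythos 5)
Your overall strategy is the same as the paper's: reduce, via unitarity of $\cF$, $\cF_0$ and the propagators, to showing that $(\cF_0^*-\cF^*U^{\mp 1/2})e^{-itD}f$ (equivalently your $\cF\cF_0^*\phi_t-U^{\mp1/2}\phi_t$) tends to zero in norm, then combine Riemann--Lebesgue with dominated convergence, using the large-$|k|$ asymptotics \eqref{mpask}--\eqref{mexk} where the phase is stationary and integration by parts where it is not. For $\hat\cW_-$ your argument is essentially the paper's: the concentration of $\widehat{\phi_t}$ at $k\to+\infty$ meets the branch $\Psi(r,k)\sim U(r)^{1/2}e^{irk}$, and the needed cancellation $w(r)^{1/2}\overline{C(r)}\,U(r)^{1/2}=w(r)|C(r)|^2=1$ is exactly \eqref{cw}.

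The gap is in the $\hat\cW_+$ case. You correctly find that the $t\to+\infty$ analysis produces multiplication by $U(-r)^{1/2}=C(-r)w(r)^{1/2}$, and that matching the stated formula requires $C(-r)=\overline{C(r)}$. But that identity is \emph{not} forced by \eqref{Psiconj} and \eqref{Lass}: multiplying \eqref{Lttp} with $\tau=\tau'=+$ by $\hat w(k)$ and letting $\re k\to\infty$ in \eqref{Lass} yields only $C(r)\overline{C(r')}=\overline{C(-r)}\,C(-r')$ for all $r,r'$, hence $C(-r)=\omega\,\overline{C(r)}$ with $\omega=C(0)/\overline{C(0)}$ a constant phase. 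For the explicit examples \eqref{mnu}--\eqref{ellnup} one has $C(r)=-2e^{i\varphi}\sinh(\pi r/\rho+i\varphi)$, so $\omega=-e^{2i\varphi}\neq 1$ in general, and a similar nontrivial constant phase occurs for the main-text kernels; note also that the paper explicitly leaves open whether the related time-reversal identity \eqref{cTsym} follows from the axioms, so an appeal to "the same circle of ideas" cannot close this. You have in fact put your finger on the one point where the paper's "obvious changes" for $\hat\cW_+$ are not obvious: in the written $\hat\cW_-$ proof the term carrying the non-cancelling combination $w(r)\overline{C(-r)}C(r)$ has phase $e^{irk-itd(r)}$, which is non-stationary for $t\le -1$ and is disposed of by the integration by parts \eqref{exp3}; for $t\to+\infty$ the stationary and non-stationary roles of the two plane-wave branches swap, and that term can no longer be discarded. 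As written, your argument (like the parity identity $\hat\cW_+=\hat\cP\hat\cW_-\hat\cP$, which rigorously gives $\hat\cW_+=\cF^*\,U(-\cdot)^{1/2}\,\cF_0$) establishes the $\hat\cW_+$ formula only up to the constant phase $\omega$; the reconciliation step needs either an additional hypothesis or a genuinely different justification.
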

\begin{proof}
Our proof is patterned after the proof of Theorem~D.1.
To show that~$\hat{\cW}_-$ exists and is given by~\eqref{Whm}, we start from the $\hat{\cH}$-norm
\be\label{gmd}
 \| \big( e^{it\hat{D}}e^{-it\hat{D}_0}\cF_0^*-\cF^*U(\cdot)^{1/2}\big)f\|_2=\|   (\cF_0^*-\cF^*U(\cdot)^{1/2})e^{-itD}f\|_2,\ \ f\in\cC.
\ee
In order to prove it vanishes for~$t\to-\infty$, we consider
\be\label{difde}
 \big( \cF_0^*-\cF^*U(\cdot)^{1/2})e^{-itd(\cdot)}   f\big)_{\de}(k) =\frac{\de}{\sqrt{2\pi}}\int_{\R} dr\,   \big(e^{-i\de rk}-\Psi(\de r,-k)U(r)^{1/2}\big)e^{-itd(r)}f(r). 
 \ee
By virtue of the Riemann-Lebesgue lemma, this function vanishes for $t\to -\infty$ and~$k>0$ fixed, so we need only exhibit a suitable dominating function in~$L^1((0,\infty),dk)$.
 
 For $\de=+$ we should look at the functions 
\be\label{gpp}
\hat{g}_+^+(t;k)\equiv \int_{\R} dr\,   e^{-i rk}(1-w(r)^{1/2}m^+(r,-k)U(r)^{1/2})e^{-itd(r)}f(r),
\ee
\be\label{gpm}
 \hat{g}_+^-(t;k)\equiv \int_{\R} dr\,  e^{irk}w(r)^{1/2}m^-(r,-k)U(r)^{1/2}e^{-itd(r)}f(r).
\ee
To supply dominating functions for~$|\hat{g}_+^{\pm}(t;k)|^2$, we split   integration over~$(0,\infty)$ into intervals $(0,R]$  and $[R,\infty)$, where~$R$ is chosen large enough for the  asymptotics \eqref{mpask}--\eqref{mmask} to be valid. Since the functions~$|\hat{g}_+^{\pm}(t;k)|^2$ are bounded uniformly in~$t$ on~$(0,R]$, the contribution of this interval vanishes for $t\to -\infty$.

To bound~$|\hat{g}_+^+(t;k)|^2$ on $[R,\infty)$, we need only invoke~\eqref{mpask} and~\eqref{Ur}. Indeed, from this we deduce that the dominant contribution cancels, so we are left with an exponentially decreasing dominating function. For~$|\hat{g}_+^-(t;k)|^2$ the existence of such a function is immediate from~\eqref{mmask}, so it now follows that the norm of the function~\eqref{difde} with~$\de=+$ vanishes for~$t\to -\infty$.

For the choice~$\de=-$, we should majorize the functions
\be\label{gmp}
\hat{g}_-^+(t;k)\equiv \int_{\R} dr\,   e^{i rk}(1-w(r)^{1/2}m^+(-r,-k)U(r)^{1/2})e^{-itd(r)}f(r),
\ee
\be\label{gmm}
 \hat{g}_-^-(t;k)\equiv \int_{\R} dr\,  e^{-irk}w(r)^{1/2}m^-(-r,-k)U(r)^{1/2}e^{-itd(r)}f(r).
\ee
As before, boundednes for $k\in (0,R]$   is plain, and 
just as for~\eqref{gpm}, we get an exponentially decreasing dominating function for~\eqref{gmm} on~$[R,\infty)$ right away from~\eqref{mmask}. 

Invoking once again~\eqref{mpask} and~\eqref{Ur}, we deduce that it remains to bound  the function
\be\label{ghright}
\int_{\R} dr\,   e^{i rk}\big(1-w(r)\overline{C(-r)}C(r)\big)e^{-itd(r)}f(r),\ \ k\in[R,\infty),
\ee
for $t\le -1$. Writing
\be\label{exp3}
(k-td'(r))^{-1}(-i\partial_r)\exp(irk-itd(r)),
\ee
we observe that by our assumption~\eqref{derpos}  the denominator  is bounded  away from zero on the compact support of~$f(r)$ for~$t\le -1$ and $k\ge R$. 
It easily follows that when we integrate by parts  we can obtain an $O(1/k)$-bound that is uniform for~$t\le -1$. Thus the  $L^2([R,\infty),dk)$-norm of~\eqref{ghright} vanishes for~$t\to -\infty$.  

The upshot is that we have completed the proof that $\hat{W}_-$ exists and is given by~\eqref{Whm}. The proof for~$\hat{W}_+$ only involves some obvious changes, so we omit it.   
\end{proof}

\end{appendix}

\vspace{5mm}

\noindent
{\Large\bf Acknowledgments}

\vspace{8mm}

\noindent
 We would like to thank the referees for their comments, which helped us to improve the exposition.

\vspace{4mm}

\bibliographystyle{amsalpha}

\begin{thebibliography}{R03III}
\addcontentsline{toc}{section}{References}

\bibitem[AW85]{AW85} R.~Askey and J.~ Wilson, \emph{
Some basic
hypergeometric orthogonal polynomials that
generalize Jacobi polynomials},
Mem. Am. Math. Soc. {\bf 319}, (1985).

\bibitem[D94]{D94} J.~F.~van Diejen, \emph{Integrability of difference Calogero-Moser systems}, J.~Math.~Phys.~{\bf 35} (1994), 2983--3004.

\bibitem[D95]{D95} J.~F.~van Diejen, \emph{Commuting difference operators with polynomial eigenfunctions}, Comp.~Math.~{\bf 95} (1995), 183--233.

\bibitem[DK00]{DK00} J.~F.~van Diejen and A.~N.~Kirillov, \emph{Formulas for $q$-spherical functions using inverse scattering theory of reflectionless Jacobi operators}, Commun.~Math.~Phys.~{\bf 210} (2000), 335--369.
 
%\bibitem[Dig10]{Dig10} Digital Library of Mathematical Functions, Release date 2010-05-07, National Institute of Standards and Technology, http://dlmf.nist.gov. 

\bibitem[FT15]{FT15} L.~D.~Faddeev and L.~A.~Takhtajan, \emph{On the spectral theory of a functional-difference operator in conformal field theory}, Izvestiya Math.~{\bf 79} (2015), 388-410.

\bibitem[F74]{F74} S.~Fl\"ugge, \emph{Practical quantum mechanics}, Springer, New York, 1974.


%\bibitem[HR12]{HR12} M.~Halln\"as and S.~N.~M.~Ruijsenaars, \emph{Kernel functions and B\"acklund transformations for relativistic Calogero-Moser and Toda systems}, J.~Math.~Phys.~{\bf 53} (2012), 64 pages, CID 123512.

\bibitem[HR14]{HR14} M.~Halln\"as and S.~N.~M.~Ruijsenaars, \emph{Joint eigenfunctions for the relativistic Calogero-Moser Hamiltonians of hyperbolic type. I. First steps}, Int.~Math.~Res.~Not.~(2014), no.~16, 4400--4456.

\bibitem[K84]{K84} T.~H.~Koornwinder, \emph{Jacobi functions and analysis on noncompact semisimple Lie groups}, in: Special functions: group theoretical aspects and applications (R.~A.~Askey, T.~H.~Koornwinder and W.~Schempp, Eds.), Mathematics and its applications, Reidel, Dordrecht, 1984, pp.~1--85.


\bibitem[K92]{K92} T.~H.~Koornwinder, \emph{Askey-Wilson polynomials for root systems of type $BC$},
Contemp. Math. {\bf 138}, 189--204 (1992).

\bibitem[KS01]{KS01} E.~Koelink and J.~V.~Stokman, \emph{ The Askey-Wilson function transform},  Int. Math.
Res. Not.~(2001), no. 22, 1203--1227.

\bibitem[M95]{M95} I.~G.~Macdonald, \emph{Symmetric functions and Hall polynomials}, Oxford
University Press, Oxford, 1995.

%\bibitem[HR15]{HR15} M.~Halln\"as and S.~N.~M.~Ruijsenaars, \emph{A recursive construction of joint eigenfunctions for the hyperbolic  nonrelativistic Calogero-Moser Hamiltonians}, Int.~Math.~Res.~Not.~(2015), no.~, .

%\bibitem[R87]{R87} S.~N.~M.~Ruijsenaars, \emph{Complete integrability of
%relativistic Calogero-Moser systems and elliptic function
%identities}, Commun.~Math.~Phys.~{\bf 110} (1987), 191--213.

\bibitem[R93]{R93} S.~N.~M.~Ruijsenaars, \emph{Action-angle maps and
scattering theory for some finite-dimensional integrable
systems II. Solitons, antisolitons, and their bound states},
Publ. RIMS Kyoto Univ.~{\bf 30} (1994), 865--1008.
 
 
\bibitem[R94]{R94} S.~N.~M.~Ruijsenaars, \emph{Systems of Calogero-Moser type}, in: Proceedings of the 1994 Banff summer school ``Particles and fields" (G. Semenoff and L. Vinet, Eds.), CRM series in mathematical physics, Springer, New York, 1999, pp. 251--352.


\bibitem[R97]{R97} S.~N.~M.~Ruijsenaars, \emph{First-order analytic difference equations and integrable quantum systems}, J.~Math.~Phys.~{\bf 38} (1997), 1069--1146.

\bibitem[R99]{R99} S.~N.~M.~Ruijsenaars, \emph{Generalized Lam\'e functions. II. Hyperbolic and trigonometric specializations},  J.~Math.~Phys.~{\bf 40} (1999),  1627--1663.

%\bibitem[R99]{R99} S.~N.~M.~Ruijsenaars, \emph{A generalized hypergeometric function satisfying four analytic difference equations of Askey-Wilson type}, Commun.~Math.~Phys.~{\bf 206} (1999), 639--690.

\bibitem[R00]{R00} S.~N.~M.~Ruijsenaars, \emph{Hilbert space theory for reflectionless relativistic potentials}, Publ.~RIMS Kyoto Univ.~{\bf 36} (2000), 707--753.

\bibitem[R01]{R01} S.~N.~M.~Ruijsenaars, \emph{Sine-Gordon solitons vs. relativistic Calogero-Moser
particles},
in: Proceedings of the Kiev NATO Advanced Study Institute ``Integrable structures 
of exactly solvable two-dimensional
models of quantum field theory",
NATO Science 
Series Vol. 35 (S. Pakuliak and G.~von Gehlen,
Eds.), Kluwer, Dordrecht, 2001, pp.~273--292.

\bibitem[R03]{R03} S.~N.~M.~Ruijsenaars,  \emph{Relativistic Lam\'e functions: Completeness vs.
polynomial asymptotics},
in: Papers dedicated to Tom Koornwinder,
Indag.~Math., N.~S.~{\bf
14} (2003), 515--544.
 

 
\bibitem[R03II]{R03II} S.~N.~M.~Ruijsenaars, \emph{ A generalized hypergeometric function II. Asymptotics and $D_4$ symmetry},  Commun.~Math.~Phys.~{\bf 243} (2003), 389--412.

\bibitem[R03III]{R03III} S.~N.~M.~Ruijsenaars, \emph{ A generalized hypergeometric function III. Associated Hilbert space transform}, Commun.~Math. ~Phys.~{\bf 243} (2003) 413--448.

\bibitem[R04]{R04} S.~N.~M.~Ruijsenaars, \emph{ A unitary joint eigenfunction transform for the A$\Delta$Os $\exp(ia_{\pm}d/dz)+\exp(2\pi z/a_{\mp})$},
in: Proceedings Helsinki SIDE VI (2004) (J.~F.~van Diejen and R.~G.~Halburd, Eds.),
J. Nonlinear Math. Phys.  {\bf 12}  Suppl.~2 (2005)
253--294. 

\bibitem[R05]{R05} S.~N.~M.~Ruijsenaars, \emph{ Isometric reflectionless eigenfunction transforms for higher-order A$\Delta$Os},
Calogero Festschrift issue, J. Nonlinear Math. Phys. {\bf 12} Suppl. 1 (2005) 565--598.
 
% \bibitem[R07]{R07} S.~N.~M.~Ruijsenaars, \emph{Quadratic transformations for a function that generalizes ${}_2F_1$ and the Askey-Wilson polynomials}, in Askey Festschrift issue (Bexbach 2003 Proceedings), Ramanujan J.~{\bf 13} (2007), 339--364.
 
\bibitem[R11]{R11} S.~N.~M.~Ruijsenaars, \emph{A relativistic conical function and its Whittaker limits}, SIGMA~{\bf 7} (2011), 101, 54 pages.

\bibitem[R15]{R15} S.~N.~M.~Ruijsenaars, \emph{Hilbert-Schmidt operators vs.~integrable systems  of elliptic Calogero-Moser type. The relativistic Heun (van Diejen) case}, 
SIGMA {\bf 11} (2015), 004, 78 pages.

\bibitem[RS72]{RS72} M.~Reed and B.~Simon, \emph{ Methods of
modern mathematical physics. I. Functional analysis},
Academic Press, New York, 1972.

\bibitem[RS79]{RS79} M.~Reed and B.~Simon, \emph{Methods of modern mathematical physics. III. Scattering theory}, Academic Press, New York, 1979.


\bibitem[S92]{S92} F.~A.~Smirnov,  \emph{Form factors in completely integrable models of quantum field theory}, Adv.~Ser.~in Math.~Phys., Vol.~14, World Scientific, Singapore, 1992.

\bibitem[T62]{T62} E.~C.~Titchmarsh, \emph{Eigenfunction expansions associated with second-order differential equations}, Oxford University Press, Oxford, 1962.

\bibitem[YB90]{YB90} \emph{Yang-Baxter equation in integrable systems} (M.~Jimbo, Ed.), Adv.~Ser.~in Math.~Phys., Vol.~10, World Scientific, Singapore, 1990.

\bibitem[ZZ79]{ZZ79} A.~B.~Zamolodchikov and Al.~B.~Zamolodchikov, \emph{Factorized $S$-matrices in two dimensions as the exact solutions of certain relativistic quantum field theory models}, Ann.~Phys.~(N.~Y.) {\bf 120} (1979), 253--291.

 
\end{thebibliography}

\end{document}